\newcommand{\cF}{\mathcal{F}}
\newcommand{\cH}{\mathcal{H}}
\newcommand{\cN}{\mathcal{N}}
\newcommand{\tH}{\widetilde{H}}
\newcommand{\tP}{\widetilde{P}}
\newcommand{\tW}{\widetilde{W}}
\newcommand{\C}{\mathds{C}}
\newcommand{\R}{\mathds{R}}
\newcommand{\N}{\mathds{N}}
\newcommand{\dist}{\mathop{\mathrm{dist}}}
\newcommand{\supp}{\mathop{\mathrm{supp}}}
\newcommand{\dom}{\mathop{\mathrm{dom}}}
\newcommand{\dx}{{\mathrm d}}
\newcommand{\I}{\mathds{1}}
\newcommand{\1}{\mathbf{1}}
\renewcommand{\L}{\mathrm{L}}
\renewcommand{\P}{\mathcal{P}}	
\newtheorem{theorem}{Theorem}[section]         
\newtheorem{lemma}[theorem]{Lemma}             
\newtheorem{corollary}[theorem]{Corollary}
\newtheorem{remark}[theorem]{Remark}
\numberwithin{equation}{section} 
\providecommand{\keywords}[1]{\textbf{\textit{Keywords: }}{\it #1}}
\begin{document}
\setcounter{page}{0} \thispagestyle{empty}
\title{Existence of Resonances for the Spin-Boson-Model with Critical Coupling Function}
\author{Jana Reker\\
{\normalsize Institut für Analysis und Algebra}\\
{\normalsize Technische Universität Braunschweig, Germany}\\
{\normalsize J.Reker@RekerNet.de}}
\date{April 20, 2018}
\maketitle

\begin{abstract}
A two-level atom coupled to the quantized radiation field is studied. In the physical relevant situation, the coupling function modeling the interaction between the two component behaves like $|k|^{-1/2}$, as the photon momentum tends to zero. This behavior is referred to as critical, as it constitutes a borderline case. Previous results on non-existence state that, in the general case, neither a ground state nor a resonance exists. Hasler and Herbst have shown \cite{HaslerHerbst2011}, however, that a ground state does exist if the absence of self-interactions is assumed. Bach, Ballesteros, K{\"o}nenberg, and Menrath have then explicitly constructed the ground state this specific case~\cite{BachBallesterosKoenenbergMenrath2017} using the multiscale analysis known as Pizzo's Method \cite{Pizzo2003}. Building on this result, the existence of resonances is considered. In the present paper, using multiscale analysis, a resonance eigenvalue of the complex deformed Hamiltonian is constructed. Neumann series expansions are used in the analysis and a suitable Feshbach-Schur map controls the exponential decay of the terms in the series.
\end{abstract}

\keywords{non-relativistic QED, spin-boson model, critical coupling, resonance}
\thispagestyle{empty}
\newpage

\section{Introduction}\label{introduction}
The spin-boson model describes the system of a two-level atom coupled to the quantized radiation field. It is modeled by a Hamiltonian $H_g$ acting on the Hilbert space
\begin{equation}\label{def-space}
\cH=\C^2\otimes\cF
\end{equation}
where $\C^2$ corresponds to the two-level atom and $\cF$ denotes the bosonic Fock space corresponding to the photon field. When studied separately, the components can be described by the Hamiltonians $H_{at}$ and $H_{ph}$ acting on $\C^2$ and $\cF$, respectively. The model additionally includes the interaction between these two parts. It is represented by a symmetric operator that acts on the tensor product of the two Hilbert spaces and, therefore, mixes both components. The Hamiltonian $H_g$ is self-adjoint and of the perturbative form
\begin{equation}\label{structure-hamiltonian}
H_g=H_{at}\otimes\I_{\cF}+\I_{\C^2}\otimes H_{ph}+gW=H_0+gW,
\end{equation}
with $\I$ denoting identity and $gW$ being the coupling term, which is interpreted as a perturbation of the operator $H_0$. The (small) constant $g>0$ scales the influence of this perturbation. There are several variations of this model including different assumptions on the coupling function involved in $W$, in particular, its behavior in the infrared limit~${|k|\rightarrow0}$. It may also be further generalized to an $N$-level atom for some $N>2$. In the specific model studied here, the coupling function involved in the operator $W$ is characterized by a dependence on the momentum $k$ that behaves like~$C|k|^{-\frac{1}{2}}$, as $k$ tends to zero. Although prescribed by physical principles, this behavior is often referred to as critical as it constitues a borderline case. Previous results on non-existence~(see \cite{AraiHirokawaHiroshima1999}) state that the generalized model without any kind of infrared cutoff or regularization does not possess a ground state or resonances. However, specifying the part of the coupling function acting on $\C^2$ to have no diagonal entries has been shown in \cite{HaslerHerbst2011} to be the key assumption for the existence of a ground state for the spin-boson model in the critical case. Thus, we assume the part of $W$ acting on $\C^2$ to be off-diagonal. Several results for the spin-boson model have been established using a modified, slightly more regular function that behaves as $C|k|^{-\frac{1}{2}+\mu}$, as~$k\rightarrow0$, for some small~${\mu>0}$~(see e.g.~\cite{BallesterosDeckertHaenle2018}). Note that, in this case, additional assumptions on the coupling function are not needed. Naturally, the study of the mathematical aspects of this and similar models describing non-relativistic matter such as atoms and molecules coupled to the quantized radiation field, known as non-relativistic quantum electrodynamics, is strongly motivated by physics. One important application of the spin-boson model in particular is found in quantum computing, where the two-level atom in the spin-boson model is interpreted as a qubit (see e.g.~\cite[Section~7.1]{BermanMerkliSigal2008}).

\bigskip
Generally speaking, the study of the spin-boson model is an analysis of the spectrum of the operator $H_g$ given in \eqref{structure-hamiltonian} with the eigenvalues of the Hamiltonian being interpreted as the energy levels of the system while the corresponding eigenvectors are interpreted as the states of the system with this energy. If the coupling constant is equal to zero, the unperturbed system described by $H_0$ is an atom and the radiation field which are not interacting with each other. The spectrum of $H_0$ is then given by sums $a+b$ with~$a$ being a spectral value of $H_{at}$ and~$b$ being a spectral value of $H_{ph}$. The resulting spectrum is sketched in the following image.

\begin{figure}[htb]	
\centering
\setlength{\unitlength}{4144sp}%
\begingroup\makeatletter\ifx\SetFigFont\undefined%
\gdef\SetFigFont#1#2#3#4#5{%
  \reset@font\fontsize{#1}{#2pt}%
  \fontfamily{#3}\fontseries{#4}\fontshape{#5}%
  \selectfont}%
\fi\endgroup%
\begin{picture}(6777,657)(1969,-8896)
\thinlines
{\color[rgb]{0,0,0}\put(1981,-8611){\vector( 1, 0){6660}}
}%
{\color[rgb]{0,0,0}\put(4141,-8611){\line( 0, 1){180}}
\put(4141,-8431){\line( 1, 0){4410}}
}%
{\color[rgb]{0,0,0}\put(2701,-8611){\line( 0, 1){360}}
\put(2701,-8251){\line( 1, 0){5850}}
}%
\put(2701,-8836){\makebox(0,0)[b]{\smash{{\SetFigFont{12}{14.4}{\familydefault}{\mddefault}{\updefault}{\color[rgb]{0,0,0}$e_0$}%
}}}}
\put(4141,-8836){\makebox(0,0)[b]{\smash{{\SetFigFont{12}{14.4}{\familydefault}{\mddefault}{\updefault}{\color[rgb]{0,0,0}$e_1$}%
}}}}
\put(8731,-8746){\makebox(0,0)[lb]{\smash{{\SetFigFont{12}{14.4}{\familydefault}{\mddefault}{\updefault}{\color[rgb]{0,0,0}$\R$}%
}}}}
\end{picture}%
\caption{The spectrum of $H_0$}\label{fig-spectrum-unperturbed}
\end{figure}

Both $e_0$ and $e_1$ are eigenvalues of $H_0$ with the corresponding eigenvectors given by the tensor product of the eigenvector corresponding to either the ground state or the excited state of the atom and the Fock vacuum, which is the eigenvector corresponding to the ground state of the photon field. When studying $H_g$, for some $g>0$, the fate of these two eigenvalues is of particular interest. From the physics point of view, it corresponds to the questions whether the coupled system itself possesses a ground state and whether the excited states of the unperturbed system turn into so-called metastable states when the atom and the radiation field are coupled to one another. In the case that an affirmative answer is given to the latter question, the corresponding eigenvalue is called a resonance. One may then further study the life-time of these states making a connection to Fermi's golden rule (see e.g.~\cite{Westrich2011}). As the two-level atom in the spin-boson model can be interpreted as a qubit, comparing the expected life-time of the metastable state to the minimal time necessary to perform computations is a question of interest in this context. Naturally, there are also other physical factors to be taken into account. Here, the spin-boson model is used, too, to approach certain questions from a theoretical point of view~(see e.g.~\cite{BermanMerkliSigal2008}).

\bigskip
For the model studied here, the existence of a unique ground state has already been established by Bach, Ballesteros, Könenberg, and Menrath in \cite{BachBallesterosKoenenbergMenrath2017} using a multiscale analysis approach known as Pizzo's method (see \cite{Pizzo2003}). The existence of a resonance is the topic of this paper. There are several different definitions fo a resonance (see e.g. \cite{Simon1973}). In this paper, a resonance is defined for the following analysis to be an eigenvalue of a complex deformation $H_g(\theta)$ of the Hamiltonian $H_g$. Note that, with the eigenvalue of the unperturbed operator immersed in the continuum, the classical perturbation theory developed for isolated eigenvalues is not applicable. Additionally, when the perturbation is included, the value in the midst of the spectrum can no longer be identified. Here, applying complex deformation to the unperturbed Hamiltonian allows us to tilt the two rays by a small angle $\vartheta=\mathrm{Im}(\theta)$ into the lower half of the complex plane, thus exposing the eigenvalue at the tip of a continuum~(see Fig.~\ref{fig-spectrum-tilted}).\\

\begin{figure}[h]	
\centering
\input{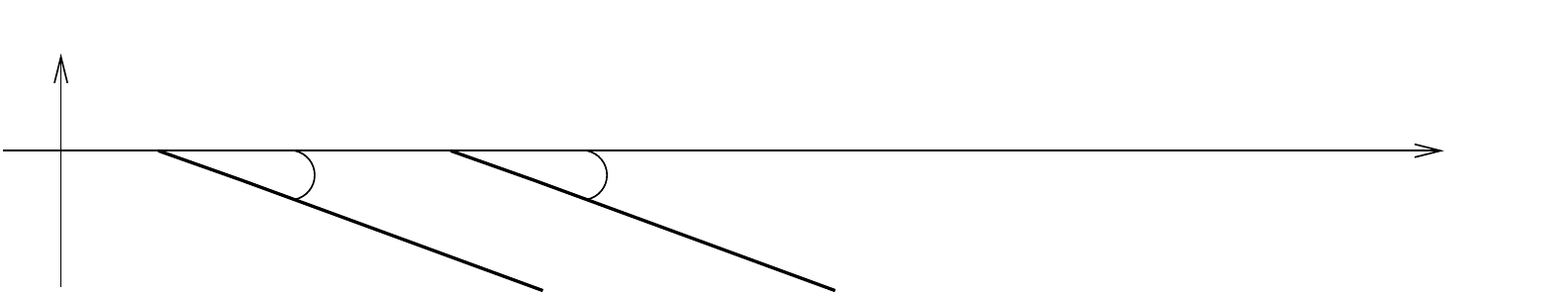_t}
\caption{The spectrum of $H_0(\theta)$}\label{fig-spectrum-tilted}
\end{figure}

Note that the spectrum now also includes complex values and that the self-adjointness of the Hamiltonian is lost in the process. A method to cope with the absence of the spectral theorem and functional calculus is Neumann series expansion, which is used at many points throughout the analysis. There are different ways to study the spectrum of $H_g(\theta)$. One possible approach is renormalization group analysis~(see e.g.~\cite{BachFroehlichSigal1998a},~\cite{BachFroehlichSigal1998b},~\cite{GriesemerHasler2009}) based on the isospectral Feshbach-Schur map, which has also been used to establish the results in~\cite{HaslerHerbst2011}. The approach chosen here, however, is based on the proof for the existence of a ground state given in \cite{BachBallesterosKoenenbergMenrath2017}, i.e., multiscale analysis. It is achieved by introducing a sequence of so-called infrared-regularized Hamiltonians~${H_g^{(n)}\negmedspace(\theta)}$ including an infrared cutoff which becomes smaller, as~$n$ increases. This tool allows to artificially introduce gaps in the spectrum isolating the eigenvalue in question. The effect on the spectrum of the unperturbed Hamiltonian is sketched in the following image.

\begin{figure}[h]		
\centering
\input{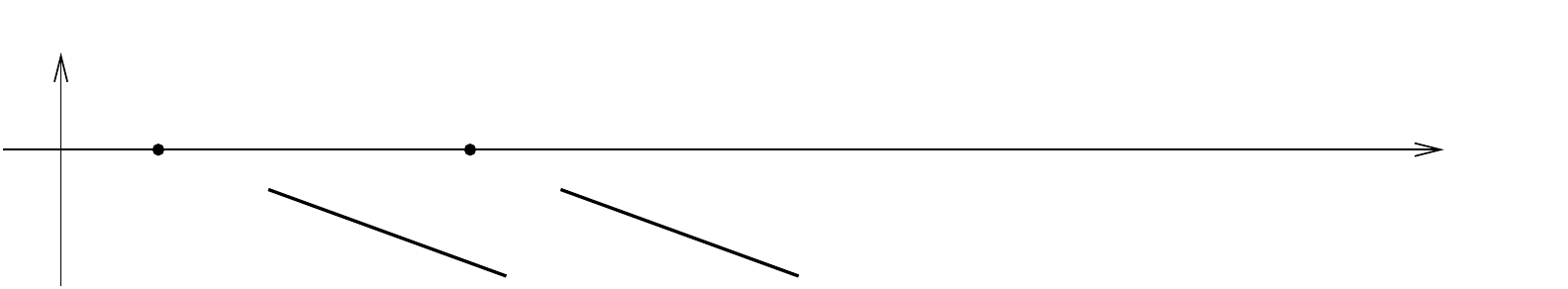_t}
\caption{The spectrum of $H_0^{(0)}\negmedspace(\theta)$}\label{fig-spectrum-tilted-cutoff}
\end{figure}

Note that the infrared cutoff generates a small neighborhood around $e_1$ that does not include any other spectral values. The gaps, therefore, make it possible to define projections $P^{(n)}$ to the corresponding eigenspace using operator-valued integrals which can inductively be shown to be rank-one. The existence of the eigenvalue is then derived from the convergence of the sequence of projections $(P^{(n)}\otimes P_{\Omega^{(n,\infty)}})_n$, where $P_{\Omega^{(n,\infty)}}$ is the projection to the vacuum vector of the appropriate Fock space such that the resulting projection acts on $\cH$, as well as the convergence of the sequence of eigenvalues~$(E_g^{(n)})_n$ of the infrared-regularized Hamiltonians. In the ground state case, the construction of both the energies and the projections can be done directly, i.e., non-inductively~(see~\cite{BachBallesterosKoenenbergMenrath2017}) with the main tools being standard results from functional analysis like the spectral theorem. The convergence of the projections is then established using that the model possesses a symmetry which can explicitly be employed in the form of the equation
\begin{equation}\label{useful-symmetry-tool}
P^{(n)}\sigma_1P^{(n)}=0,
\end{equation}
originally due to \cite{HirokawaHiroshimaLorinczi2014}. This is a direct cosequence of the off-diagonal coupling and allows eliminating certain singular terms arising during the analysis. Note, however, that this direct approach can not be applied when studying resonances. As the infrared-regularized Hamiltonians are not self-adjoint as a result of the complex dilation, the resolvents need to be constructed inductively using repeated Neumann series expansion. Due to the critical coupling function involved, this results in an exponential growth in the bound for the norm of the resolvents that has to be balanced out in order to allow the construction. To approach this difficulty, an inductive argument similar to~\cite{BachBallesterosPizzo2017} is used. Here, the main tool in establishing the existence of the projections and their convergence is a suitable Feshbach-Schur map which provides stronger bounds in the Neumann series expansions. This allows to construct a resonance eigenvalue and, therefore, show its existence, which is the main result of this paper.

\begin{figure}[h]	
\centering
\input{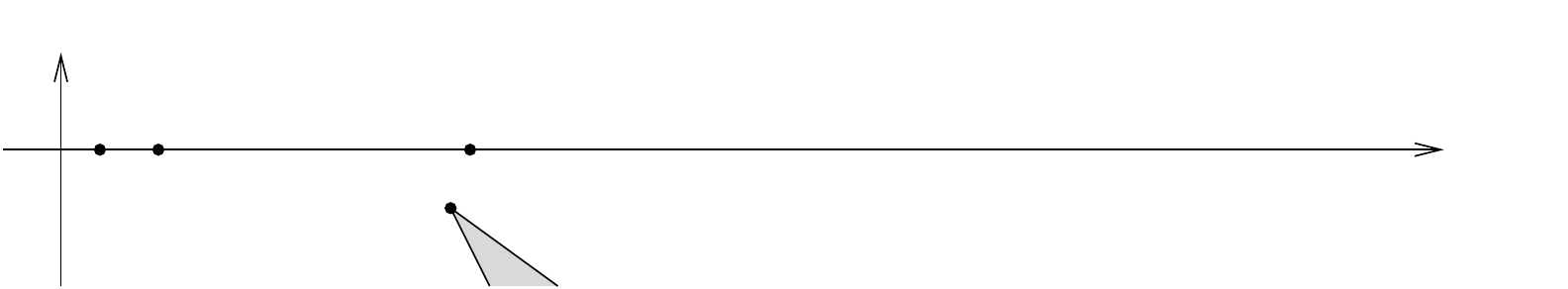_t}
\caption{The ground state and the resonance eigenvalue in the complex plane}\label{fig-spectrum-aim}
\end{figure}

Note that $E_{\mathrm{res}}=E_{\mathrm{res}}(g)$ and that the distance between the unperturbed eigenvalue $e_1$ and the resonance $E_{\mathrm{res}}$ can be estimated in terms of the coupling constant.
%
\subsection{The Model}
A two-level atom coupled to the quantized radiation field is considered. It is assumed that the ground state energy of the atom is given by $0$ and that the energy corresponding to the excited state of the atom equals $2$ such that the Hamiltonian of the two-level atom acting on~$\C^2$ is given by
\begin{equation}\label{def-hamiltonian-atom}	
H_{at}:=\begin{pmatrix} 2&0\\0&0 \end{pmatrix}.
\end{equation}
Note that this choice of eigenvalues of $H_{at}$ is somewhat arbitrary, as one may choose any other two real numbers to represent the states of the atom by rescaling the energy. The Hilbert space corresponding to the radiation field is the boson Fock space associated with~$\L^2(\R^3)$, which is given by
\begin{equation}		
\cF=\cF\bigl[\L^2(\R^3)\bigr]=\bigoplus_{n=0}^{\infty}\cF_n,\ \mathrm{with}\ \cF_0=\C,\ \cF_n=\otimes_s^n\L^2(\R^3)
\end{equation}
Here, $\L^2(\R^3)$ denotes the space of square-integrable functions on $\R^3$ and $\otimes_s^n$ the n-fold symmetrized tensor product. Further, let $\Omega\in\cF$ denote the vacuum vector. Note that the property of polarization is not reflected in the choice of the Hilbert space and that, therefore, referring to the quanta of the field as photons is not completely accurate. The Hamiltonian of the radiation field is given by
\begin{equation}\label{def-hamiltonian-field}	
H_{ph}(\omega)=\dx \Gamma(\omega)=\int_{\R^3}\omega(k)\ a^*(k)\ a(k)\ \dx^3k,
\end{equation}
with $\omega(k)=|k|$, and $a^*(k)$, $a(k)$ denoting the creation and annihilation operators representing the canonical commutation relations (''CCR'')
\begin{equation}\label{ccr}		
\bigl[a(f),a^*(g)\bigr]=\langle f,g\rangle_{\L^2}\I_{\cF}, \quad \bigl[a(f),a(g)\bigr]=\bigl[a^*(f),a^*(g)\bigr]=0,\quad a(k)\Omega=0.
\end{equation}
on $\cF$. Note that \eqref{ccr} is to be understood in the sense of operator-valued distributions. In \eqref{def-hamiltonian-field}, we used Nelson's notation for the second quantization. The Hilbert space of the atom-photon system is now given by
\begin{equation}\label{def-hilbert-space}		
\cH=\C^2\otimes\cF.
\end{equation}
Finally, the coupling term is defined as
\begin{equation}\label{def-coupling-term}			
gW=\Phi(G)=\int_{\R^3}[G(k)\otimes a^*(k)+G^*(k)\otimes a(k)]\ \dx^3k,
\end{equation}
where the function $G$ is given by
\begin{equation}\label{coupling-function-general}	
G(k)=g\frac{f(k)}{\sqrt{\omega(k)}}\begin{pmatrix} 0&b\\\bar{b}&0 \end{pmatrix}.
\end{equation}
Here, $g$ denotes the coupling constant, $b\neq0$ is a complex number and $f$ is chosen to be an entire function with rapid decay on the real axis that satisfies~${|f(e^{ia}k)|=|f(e^{-ia}k)|}$ for all~$a\in\R$, e.g. $f(k)=\exp(-\frac{k^2}{\Lambda^2})$, with some fixed $\Lambda>0$ denoting the ultraviolet-cutoff. Note that the matrix involved in the coupling function is off-diagonal and, in the case that $b=re^{i\beta}$, for some $\beta\in[0,2\pi)$ and $r>0$, similar to the first Pauli matrix
\begin{equation}\label{def-sigma-1}	
\sigma_1=\begin{pmatrix} 0&1\\1&0 \end{pmatrix}
\end{equation}
with the similarity transformation given by a rotation. Since the matrix defining the atom Hamiltonian is diagonal, this transformation does not affect the rest of the model. In the case that $|b|=r\neq1$, the additional factor $r$ may be absorbed into $g$, thus allowing to write \eqref{coupling-function-general} as
\begin{equation}\label{coupling-function-simplified}	
G(k)=g\frac{f(k)}{\sqrt{\omega(k)}}\ \sigma_1.
\end{equation}
With the matrix occurring in the coupling term identified, we omit $\sigma_1$ and refer to the remaining real-valued function as $G$. Taking small values of $k$, observe that
\begin{equation}\label{critical-infrared-divergence}	
G(k)=g\frac{f(k)}{\sqrt{\omega(k)}}\sim c|k|^{-1/2}, \mathrm{\ as\ }k\rightarrow0,
\end{equation}
i.e., the model includes a critical coupling function. Putting the parts for the atom, the radiation field, and the interaction together yields the Hamiltonian for the model which is given by
\begin{equation}\label{def-hamiltonian-complete}	
H_g=\begin{pmatrix} 2&0\\0&0 \end{pmatrix}\otimes \I_{\cF}+\I_{\C^2}\otimes H_{ph}(\omega)+\Phi(G).
\end{equation}
Note that both $a(G)$ and $a^*(G)$ are $H_{ph}$-bounded with relative bound zero and that the interaction, therefore, is an infinitisimal perturbation in the sense of Kato. This implies that $H_g$ is a closed operator and, since $gW$ is symmetric, the Kato-Rellich theorem also implies that $H_g$ is also self-adjoint on its domain~$\dom(\I_{\C^2}\otimes H_{ph})$ (see \cite[p.~191]{Kato1980} and~\cite[Theorem~X.12]{ReedSimonII1975}), where
\begin{equation}\label{dom-h-ph}		
\dom(H_{ph})=\Big\{\Phi\in\cF\Big|\sum_{j=1}^{\infty}\displaystyle{\int}_{\R^{3j}}\Bigr(\sum_{m=1}^j|k_m|\Bigl)^2|\Phi_j(k_1,\dots,k_j)|^2\dx k_1\dots\dx k_j<\infty\Bigr\}.
\end{equation}
When studying resonances, a complex deformation of this operator is considered which is introduced in Section~\ref{sect-dilation-analyticity}. It is given by
\begin{equation}\label{def-deformed-h-g}
H_g(\theta):=H_{at}\otimes\I_{\cF}+\I_{\C^2}\otimes\bigl(e^{-\theta}H_{ph}\bigr)+gW_{\theta},
\end{equation}
where $\theta=i\vartheta$ for some sufficiently small $\vartheta\in\R$. The coupling function occuring in the transformed coupling term $W_{\theta}$ is denoted as $G_{\theta}$.
%
\subsection{Main Result and Outine of its Proof}\label{sect-sketch}

The main result of this paper is the existence of a resonance in the model presented previously. It can be formulated as follows: 

\begin{theorem}[Existence of Resonances]\label{the-main-result}		
Let $\vartheta$ be small enough to satisfy the assumptions of Lemma~\ref{analytic-continuation}, i.e.,~\eqref{final-restriction-theta} holds, and choose the parameters $\rho_0,\gamma,g>0$ sufficiently small such that~\eqref{final-restriction-rho-0},~\eqref{final-restriction-gamma} and~\eqref{final-restriction-g} are satisfied. Then there exists an eigenvalue~${E_{\mathrm{res}}\in D(2,Cg)}$ of $H_g(\theta)$.
\end{theorem}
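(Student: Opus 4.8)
The plan is to realize $E_{\mathrm{res}}$ as the limit of a sequence $(E_g^{(n)})_n$ of eigenvalues of infrared-regularized complex-deformed Hamiltonians $H_g^{(n)}(\theta)$, following the multiscale scheme of \cite{BachBallesterosKoenenbergMenrath2017} but carried through in the non-self-adjoint setting. First I would fix the scales: choose $0<\rho_0<1$, set $\rho_n=\rho_0\gamma^n$ for the geometric sequence of infrared cutoffs with ratio $\gamma\in(0,1)$, and define $H_g^{(n)}(\theta)$ by restricting the coupling function $G_\theta$ to photon momenta $|k|\ge\rho_n$ (equivalently, replacing $G_\theta$ by $G_\theta\,\mathds{1}_{\{|k|\ge\rho_n\}}$). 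For $n=0$ the operator $H_g^{(0)}(\theta)$ has the spectral structure sketched in Fig.~\ref{fig-spectrum-tilted-cutoff}: because of the cutoff, the tilted continuum emanating from the perturbed copy of $e_1=2$ is detached by a gap of order $\rho_0\sin\vartheta$, so that a rank-one spectral projection $P^{(0)}$ and an eigenvalue $E_g^{(0)}\in D(2,Cg)$ are obtained directly from the (non-self-adjoint) Riesz integral $P^{(0)}=\frac{1}{2\pi i}\oint_{\Gamma_0}(z-H_g^{(0)}(\theta))^{-1}\,dz$ over a small circle $\Gamma_0$ around $2$, using the Neumann series for the free resolvent perturbed by $gW_\theta$ and the smallness of $g$ relative to the gap.

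Next I would set up the induction. Assume $H_g^{(n)}(\theta)$ has a rank-one eigenprojection $P^{(n)}$ with eigenvalue $E_g^{(n)}$, and pass to step $n+1$ by splitting the Fock space over the annulus $\rho_{n+1}\le|k|<\rho_n$: write $\cF\simeq\cF^{(n+1,n)}\otimes\cF^{(\ge n)}$ and use the Feshbach–Schur map associated with the projection $P^{(n)}\otimes P_{\Omega^{(n+1,n)}}$ (vacuum on the new annulus) to reduce $H_g^{(n+1)}(\theta)-z$ to an effective operator on the range of $P^{(n)}$. The isospectrality of the Feshbach–Schur map then identifies eigenvalues of $H_g^{(n+1)}(\theta)$ near $2$ with zeros of a scalar analytic function, yielding $E_g^{(n+1)}$; the new projection $P^{(n+1)}$ is reconstructed from the Feshbach data. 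Here the key quantitative inputs are: (i) a lower bound of order $\rho_{n+1}\sin\vartheta$ on the gap seen by the reduced resolvent, coming from the complex deformation; (ii) the off-diagonal structure of the coupling, which via the identity $P^{(n)}\sigma_1P^{(n)}=0$ of \eqref{useful-symmetry-tool} kills the would-be leading (linear in $gW_\theta$, and hence most singular) contribution to the effective Hamiltonian, so that the dangerous terms start at second order and are controlled by $g^2\rho_n^{-1}$-type quantities rather than $g\rho_n^{-1/2}$; (iii) a careful Neumann-series bookkeeping showing the resolvent norms grow at most like $C^n$ (exponentially in $n$) while the perturbative corrections at step $n$ are exponentially small in $n$, so the product telescopes. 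Balancing (ii) against (iii) forces the smallness conditions \eqref{final-restriction-rho-0}, \eqref{final-restriction-gamma}, \eqref{final-restriction-g} on the parameters, and the restriction \eqref{final-restriction-theta} on $\vartheta$ guarantees the deformation angle is large enough to produce the gaps but small enough for dilation analyticity (Lemma~\ref{analytic-continuation}) to apply.

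With the induction in place, I would prove that $(E_g^{(n)})_n$ is Cauchy: the increment $E_g^{(n+1)}-E_g^{(n)}$ is estimated by the size of the step-$n$ perturbation times the reduced resolvent norm, i.e.\ by (exponentially small in $n$) $\times$ $C^n$, which is summable precisely when $\gamma$ is chosen small enough relative to the base $C$ of the exponential growth; hence $E_g^{(n)}\to E_{\mathrm{res}}$ with $E_{\mathrm{res}}\in D(2,Cg)$. Simultaneously the projections $\widetilde P^{(n)}:=P^{(n)}\otimes P_{\Omega^{(n,\infty)}}$ on $\cH$ converge in norm by the same telescoping estimate, and their common (rank-one) limit $\widetilde P$ satisfies $\widetilde P\neq0$. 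Finally I would check that $\widetilde P$ is the eigenprojection of the full deformed Hamiltonian $H_g(\theta)$ at $E_{\mathrm{res}}$: since $H_g^{(n)}(\theta)\to H_g(\theta)$ in the strong resolvent sense (the removed piece of the coupling has norm $\lesssim g\rho_n^{1/2}\to 0$ by the critical $|k|^{-1/2}$ bound, which is $\L^2$ near the origin so the truncated coupling operators converge), passing to the limit in $(H_g^{(n)}(\theta)-E_g^{(n)})\widetilde P^{(n)}=0$ gives $(H_g(\theta)-E_{\mathrm{res}})\widetilde P=0$, so $E_{\mathrm{res}}$ is an eigenvalue of $H_g(\theta)$.

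\textbf{Main obstacle.} The crux is item (iii) above: showing that the inductively constructed resolvents of $H_g^{(n)}(\theta)$ do not blow up faster than a fixed geometric rate. Because the coupling is critical, each Neumann expansion at scale $\rho_n$ costs a factor that is only logarithmically better than $\rho_n^{-1/2}$-singular, and naively iterating gives a super-exponential bound. The resolution — and the technical heart of the paper — is to perform the expansion through the Feshbach–Schur map adapted to $P^{(n)}$, exploiting \eqref{useful-symmetry-tool} so that the genuinely singular first-order term is absent, and then to track, via an inductive estimate in the spirit of \cite{BachBallesterosPizzo2017}, both an upper bound $\|(H_g^{(n)}(\theta)-z)^{-1}\|\le A^n/(\text{gap}_n)$ and a smallness bound on the step-$n$ perturbation of order $B^n\rho_n^{1/2}$, then choosing $\rho_0,\gamma,g,\vartheta$ so that $A\gamma^{1/2}<1$ and $Bg$ is small, which closes the induction and makes all series converge.
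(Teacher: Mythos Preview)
Your overall architecture is right --- multiscale cutoffs $\rho_n=\rho_0\gamma^n$, an inductive construction of $E_g^{(n)}$ and $P^{(n)}$, Cauchy estimates giving convergence, and a closedness argument to pass to the limit --- and matches the paper. The key technical device, however, is different. You propose to run the Feshbach--Schur step with the \emph{inductive} projection $P^{(n)}\otimes P_{\Omega^{(n+1,n)}}$ and to invoke the symmetry identity $P^{(n)}\sigma_1P^{(n)}=0$ to kill the first-order term. The paper instead takes the \emph{fixed atomic} projection $\P=P_\uparrow\otimes\I_\cF$. Because $\sigma_1$ is off-diagonal one has $P_\uparrow\sigma_1P_\uparrow=P_\downarrow\sigma_1P_\downarrow=0$ trivially, so $\P W_\theta\P=\overline\P W_\theta\overline\P=0$ and the Feshbach operator becomes the completely explicit $H_{\mathrm{eff}}=2-z+e^{-\theta}H_{ph}-g^2\Phi_\theta(e^{-\theta}H_{ph}-z)^{-1}\Phi_\theta$ on $\cF$, which is then Wick-ordered into $W_{2,0}+W_{1,1}+W_{0,2}+W_{0,0}$. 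The exponential smallness at step $n$ comes from the $g^2$ prefactor together with $\int_{\rho_{n+1}\le|k|\le\rho_n}|G_\theta|^2\sim\rho_n^2$, balanced against an explicit constant $C_n$ tracked through Lemmas~\ref{feshbach-norm}--\ref{neumann-estimate-2}; the identity \eqref{useful-symmetry-tool} is never used directly (the Remark after \eqref{def-w-0-2} notes the symmetry enters only implicitly through the choice of $\P$).

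Your route is closer in spirit to \cite{BachBallesterosPizzo2017} and could in principle be made to work --- the identity $P^{(n)}\sigma_1P^{(n)}=0$ does survive complex dilation since $S=\sigma_3\otimes(-1)^{\cN_{ph}}$ still commutes with $H_g^{(n)}(\theta)$ and with the Riesz projection --- but it buys you a scalar Feshbach operator at the cost of having $\overline{P^{(n)}}(H_g^{(n+1)}-z)\overline{P^{(n)}}$ as the ``reduced resolvent'' block, which is only controlled through the induction hypothesis rather than given explicitly. The paper's choice trades a rank-one reduction for an explicit effective Hamiltonian on Fock space, which makes the crucial Neumann estimates (Theorem~\ref{invertibility-n+1}) a direct computation. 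Two smaller points: the first-order Feshbach term with your projection vanishes already because $P_\Omega\Phi_\theta P_\Omega=0$, so \eqref{useful-symmetry-tool} is not what does the work there; and your limiting step via ``strong resolvent convergence'' is imprecise (the $H_g^{(n)}(\theta)$ act on different spaces $\cH^{(n)}$), whereas the paper computes $H_g(\theta)P_\infty^{(n)}$ directly on $\cH$ and uses closedness of $H_g(\theta)$ (Theorem~\ref{limit-is-eigenvalue}).
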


The main part of the proof consists of an inductive argument, which is carried out in Sections~\ref{sect-ind-basis},~\ref{sect-ind-hypothesis}, and~\ref{sect-ind-step}. Theorem~\ref{the-main-result} is a then a direct result of Theorem~\ref{limit-is-eigenvalue}. The location of the eigenvalue in the lower half of the complex plane is ensured by an argument similar to~\cite{Westrich2011} and~\cite{BachFroehlichSigal1999}), thus implying that the eigenvalue is indeed a resonance. Note that non-degenracy is not discussed here. We plan, however, to investigate this question further in the future. When considering resonances, note that the initial situation seems fairly similar to proving the existence of a ground state once the Hamiltonian has been transformed using the complex dilatation as one has to identify an eigenvalue at the tip of a con\-ti\-nu\-um. Recalling~\eqref{def-deformed-h-g}, however, the transformed Hamiltonian~$H_g(\theta)$ considered here is no longer self-adjoint due to it possessing complex spectral values. For this reason, the numbers used to approximate the resonance are not necessarily real and, although the general approach can be transferred from the ground state case (see~\cite{BachBallesterosKoenenbergMenrath2017}), the infrared-regularized operators, as well as their respective resonance energies and the corresponding projections need to be constructed inductively in the multiscale analysis.

\bigskip
We start by  considering the infrared cutoff scale and the resulting sequence of infrared-regularized Hamiltonians. For $n\in\N_0$ set
\begin{equation}\label{def-rhos}		
\rho_n=\rho_0\cdot\gamma^n,
\end{equation}
where the parameters $\rho_0,\gamma\in(0,1)$ are to be chosen appropriately later on. The sequence~$(\rho_ n)_{n\in\N_0}$ is now used to define a sequence of balls $B_n=B(0,\rho_n)$ centered about the origin of $\R^3$ with respective radius $\rho_n$. Cutting the energies below~$\rho_n$ from the dispersion and the coupling function by
\begin{equation}\label{def-cutoff-functions}	
\omega^{(n)}(k):=\1_{\R^3\setminus B_n}(k)\omega(k),\quad G_{\theta}^{(n)}(k)=\1_{\R^3\setminus B_n}(k)G_{\theta}(k),
\end{equation}
and inserting these functions into the transformed Hamiltonian now yields
\begin{align}		
H_{ph}^{(n)}&:=H_{ph}\bigl(\omega^{(n)}\bigr),\label{def-h-ph-n}\\
W_{\theta}^{(n)}&:=\sigma_1\otimes\int_{\R^3}\Bigl(G_{\theta}^{(n)}a^*(k)+\overline{G_{\bar{\theta}}^{(n)}}a(k)\Bigr)\ \dx^3k,\label{def-w-theta-n}\\
H_g^{(n)}\negmedspace(\theta)&:=H_{at}\otimes \I_{\cF}+\I_{\C^2}\otimes\bigl(e^{-\theta}H_{ph}^{(n)}\bigr)+gW_{\theta}^{(n)},\label{def-h-g-theta-n}
\end{align}
where the infrared cutoff scale $(n)$ is explicitly displayed in the superscript. For the following considerations, the $\theta$-dependence in~\eqref{def-h-g-theta-n} is omitted in the notation, unless it is explicitly used. The operators $H_g^{(n)}$ act on the Hilbert spaces
\begin{equation}\label{def-cutoff-spaces}	
\cH^{(n)}:=\C^2\otimes\cF^{(n)},\quad \cF^{(n)}:=\cF[L^2(\R^3\setminus B_n)].
\end{equation}
Note that including the infrared cutoff neither changes the general properties of the operators established at the end of Section~\ref{sect-dilation-analyticity} nor requires a different argument to obtain them. In particular, for every $n\in\N_0$, $H_{ph}^{(n)}$ is self-adjoint, $H_0^{(n)}$ is normal and $H_g^{(n)}$ is closed and an infinitisimal perturbation of $H_0^{(n)}$. As~$H_g^{(n)}$ and~$H_g^{(n+1)}$ do not act on the same Hilbert space, we introduce
\begin{equation}\label{def-h-n-tilde}		
\tH_g^{(n)}:=H_g^{(n)}\otimes\1_{\cF^{(n)}}+\1_{\cH^{(n)}}\otimes H_{ph}^{(n,n+1)}.
\end{equation}
Here, the operator
\begin{equation}\label{def-h-ph-bridge}	
H_{ph}^{(n,n+1)}:=H_{ph}(\1_{B_n\setminus B_{n+1}}\omega)
\end{equation}
acts on the Hilbert space
\begin{equation}\label{def-spaces-bridge}	
\cH^{(n)}\otimes\cF^{(n,n+1)},\quad \cF^{(n,n+1)}=\cF[L^2(B_n\setminus B_{n+1})].
\end{equation}
Using that there are isomorphisms such that
\begin{equation}\label{isomorphisms-tilde-to-n+1}	
\cF^{(n+1)}\cong\cF^{(n)}\otimes\cF^{(n,n+1)},\quad \cH^{(n+1)}\cong\cH^{(n)}\otimes\cF^{(n,n+1)},
\end{equation}
one can conclude that $\tH_g^{(n)}$ acts on $\cH^{(n+1)}$ (see \cite[Section~6.2.1]{BachBallesterosPizzo2017}). This allows to compare the operators at successive energy scales. We define the operators $W_{\theta}^{(n,n+1)}$ analoguosly.

\bigskip
Next, the desired eigenvalue is established for the infrared-regularized Hamiltonians. We use an argument similar to \cite{BachBallesterosPizzo2017}. First, the operator $H_g^{(0)}$ including a fixed initial infrared cutoff $\rho_0$ is considered, which serves as the induction basis. Its resolvent is constructed from~$(H_0^{(0)}-z)^{-1}$ using a Neumann series expansion~(Lemma \ref{invertibility-0}). This allows to define a projection~$P^{(0)}$ by
\begin{equation}\label{def-p-0}		
P^{(0)}:=\frac{-1}{2\pi i}\int_{\gamma^{(0)}_{-1}}\frac{\dx z}{H_g^{(0)}-z}
\end{equation}
with $\gamma^{(0)}_{-1}$ being the curve along a circle of suitable radius chosen on the scale of $\rho_0$ centered at~${E^{(-1)}:=2}$. Here,~$E^{(-1)}$ denotes the eigenvalue of the unperturbed Hamiltonian~$H_0^{(0)}$ which is taken as an initial value for the construction. Considering
\begin{equation}\label{def-psi-0}		
\Psi^{(0)}:=P^{(0)}\Psi^{(-1)}, \quad \Psi^{(-1)}=\Bigl(\begin{pmatrix}1\\0\end{pmatrix}\otimes\Omega_0\Bigr)
\end{equation}
with $\Psi^{(-1)}$ being the eigenvector corresponding to $E^{(-1)}$, we calculate $\|\Psi^{(-1)}-\Psi^{(0)}\|$ by comparing the projections $P^{(0)}$ and $P^{(-1)}$. Here, $P^{(-1)}$ denotes the projection to the eigenspace of $H_0^{(0)}$ corresponding to~$E^{(-1)}$. As
\begin{equation}		
\big\|P^{(-1)}-P^{(0)}\big\|<1,
\end{equation}
the existence of an eigenvalue $E_g^{(0)}$ of $H_g^{(0)}$ in the interior of $\gamma^{(0)}_{-1}$ is implied and it follows that $P^{(0)}$ is a rank-one projection. By the choice of the radius of $\gamma^{(0)}_{-1}$ as well as $g$ and $\gamma$, it is possible to shift the center of the contour from $E^{(-1)}$ to $E^{(0)}$ and to consider the next operator. Note, however, that this construction can not be inductively continued without adjustments due to the critical coupling function involved. As a consequence of the inductive approach, neither the resolvent nor the eigenvalue can be stated without the result of the previous step. With Neumann series expansion needed in every single step to show invertibility, the factors obtained from the convergent geometric series accumulate in the norm bound, implying exponential growth of one of the terms involved in the estimate. When considering the resolvents directly, i.e., without additional tools, this either forces an adjustment of the coupling constant in every single step finally leading to the assumption $g=0$, or the iteration is stopped after a finite number of steps due to the divergence of the Neumann series.

\bigskip
The key part of the induction step is showing the existence of the resolvent of~$H_g^{(n+1)}$. The above difficulty is addressed by considering a suitable Feshbach-Schur map and using Neumann series expansion to establish the existence of the inverse of the Feshbach-Schur map of $H_g^{(n+1)}$ rather than the original operator. The existence of the resolvent is then implied by isospectrality (see e.g.~\cite[Theorem~IV.1]{BachFroehlichSigal1998a}). By considering the Feshbach-Schur map, a stronger bound is obtained that allows to compensate the exponential growth and establish the invertibility for the induction step. Rewriting both~${(H_g^{(n+1)}-z)^{-1}}$ and~${(\tH_g^{(n)}-z)^{-1}}$ using the formulae for the Feshbach-Schur map, the difference of the resolvents is estimated~(Theorem~\ref{estimate-difference-resolvents}). As the term is equal to the integrand when considering the difference of the corresponding projections, one can directly derive a norm bound~(Corollary~\ref{estimate-difference-projections}), which yields the desired exponential decay and allows to identify~$P^{(n+1)}$ as rank-one.
Thus, the existence of an eigenvalue~$E_g^{(n+1)}$ of~$H_g^{(n+1)}$ in the complex vicinity of $E_g^{(n)}$ is implied. In the next step, the distance between~$E_g^{(n+1)}$ and~$E_g^ {(n)}$ is estimated, yielding another exponential bound~(Theorem~\ref{estimate-difference-energies}). The induction step is completed by showing a norm bound for~$(H_g^{(n+1)}-z)^{-1}\overline{P^{(n+1)}}$ (Theorem~\ref{resolvent-norm-2}). Here, the projection to the complement of the eigenspace involved in the term allows to also extend the estimate into the safety zone around the eigenvalue considered previously.

\bigskip
With these estimates at hand, the infrared-limit can be constructed~(Corollary~\ref{convergences}). Making use of the exponential bounds shown earlier, the convergence of the sequence of resonance energies~$(E_g^{(n)})_n$ as well as the convergence of~$(P^{(n)}_{\infty})_n$ may be deduced. Here, the projections~$P^{(n)}_{\infty}$ are defined for $n\in\N_0$ by
\begin{equation}\label{def-p-n-infty}		
P^{(n)}_{\infty}:=P^{(n)}\otimes\Omega^{(n,\infty)}
\end{equation}
with $\Omega^{(n,\infty)}$ being the Fock vacuum in $\cF[L^2(B_n)]$. Applying a similar argument as for the ground state in~\cite{BachBallesterosKoenenbergMenrath2017}, one finds that~$\lim_{n\rightarrow\infty}P^{(n)}_{\infty}$ projects to an eigenspace of~$H_g(\theta)$ and that the corresponding eigenvalue~$E_{\mathrm{res}}$ is given by~$\lim_{n\rightarrow\infty}E_g^{(n)}$~(Theorem~\ref{limit-is-eigenvalue}).
\newpage
\section{Properties of the Model}\label{sect-model}
In this section, the main tools for the analysis are presented. The complex dilation used is introduced and a suitable Feshbach-Schur map is considered. We start by noting the following lemma (see e.g. \cite[Lemma~2.1]{BachBallesterosKoenenbergMenrath2017}), which is frequently used throughout the analysis and constitutes the basis of many estimates in the following sections.

\begin{lemma}\label{standard-estimate}		
Let $\rho>0$ and $G\in L^2(\R^3)$ such that $\frac{G}{\sqrt{\omega}}\in L^2(\R^3)$, then
\begin{equation}\label{formula-standard-estimate}	
\Big\|\Phi(G)\bigl(H_{ph}(\1_{\supp(G)}\omega)+\rho\bigr)^{-\frac{1}{2}}\Big\|\leq2\Bigl(\Big\|\frac{G}{\sqrt{\omega}}\Big\|_{L^2}+\rho^{-\frac{1}{2}}\|G\|_{L^2}\Bigr)=:2\|G\|_{\rho},
\end{equation}
with $\Phi(G)$ being the coupling term defined in \eqref{def-coupling-term} and $\1_\mathcal{A}$ denoting the characteristic function of a set $\mathcal{A}\subset\R^3$.
\end{lemma}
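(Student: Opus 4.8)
The plan is to reduce the claim to the standard relative bounds of the field operators by the photon energy and then to absorb the regularised resolvent by spectral calculus. Write the interaction as $\Phi(G)=a(G)+a^{*}(G)$ with $a^{*}(G)=\int_{\R^{3}}G(k)\,a^{*}(k)\,\dx^{3}k$ and $a(G)=\int_{\R^{3}}\overline{G(k)}\,a(k)\,\dx^{3}k$ as in~\eqref{def-coupling-term}, and abbreviate $M:=H_{ph}(\1_{\supp(G)}\omega)$, a non-negative self-adjoint operator. First I would prove that for every $\psi$ in the form domain $\dom(M^{1/2})$ one has
\begin{equation*}
\big\|a(G)\psi\big\|\leq\Big\|\tfrac{G}{\sqrt{\omega}}\Big\|_{L^{2}}\big\|M^{1/2}\psi\big\|,
\qquad
\big\|a^{*}(G)\psi\big\|\leq\Big\|\tfrac{G}{\sqrt{\omega}}\Big\|_{L^{2}}\big\|M^{1/2}\psi\big\|+\|G\|_{L^{2}}\|\psi\|.
\end{equation*}

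For the first inequality I would argue sectorwise: on $\cF_{n}$ one has $(a(G)\psi)_{n}(k_{1},\dots,k_{n})=\sqrt{n+1}\int_{\R^{3}}\overline{G(k)}\,\psi_{n+1}(k,k_{1},\dots,k_{n})\,\dx^{3}k$, and applying the Cauchy--Schwarz inequality in $k$ with the weight $\omega(k)$ — splitting $\overline{G}=\overline{G}\,\omega^{-1/2}\cdot\omega^{1/2}$ — gives $|\int\overline{G(k)}\,\psi_{n+1}\,\dx^{3}k|^{2}\leq\|G/\sqrt{\omega}\|_{L^{2}}^{2}\int_{\supp(G)}\omega(k)\,|\psi_{n+1}(k,\cdot)|^{2}\,\dx^{3}k$. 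Summing over $n$ and using the identity $\sum_{n\geq0}(n+1)\int_{\supp(G)}\omega(k)\,|\psi_{n+1}(k,\cdot)|^{2}=\langle\psi,M\psi\rangle$ (which is just the sectorwise form of $M=\int_{\supp(G)}\omega(k)\,a^{*}(k)a(k)\,\dx^{3}k$ together with the symmetry of $\psi_{n+1}$) yields the stated bound; note that only momenta in $\supp(G)$ enter, which is precisely why $M$ and not the full $H_{ph}$ appears. The second inequality then follows from the first and the canonical commutation relations~\eqref{ccr}, since $\|a^{*}(G)\psi\|^{2}=\|a(G)\psi\|^{2}+\|G\|_{L^{2}}^{2}\|\psi\|^{2}$.

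Next I would apply these bounds to $\psi:=(M+\rho)^{-1/2}\phi$ for an arbitrary vector $\phi$ (the $\C^{2}$-factor carrying the norm-one matrix $\sigma_{1}$ plays no role, since $\|\sigma_{1}\otimes A\|=\|A\|$). Since $M\geq0$, the spectral theorem gives $\|M^{1/2}(M+\rho)^{-1/2}\|\leq\sup_{x\geq0}\sqrt{x/(x+\rho)}\leq1$ and $\|(M+\rho)^{-1/2}\|\leq\sup_{x\geq0}(x+\rho)^{-1/2}=\rho^{-1/2}$. Combining this with $\|\Phi(G)\psi\|\leq\|a(G)\psi\|+\|a^{*}(G)\psi\|\leq 2\|G/\sqrt{\omega}\|_{L^{2}}\|M^{1/2}\psi\|+\|G\|_{L^{2}}\|\psi\|$ and taking the supremum over $\|\phi\|=1$ gives
\begin{equation*}
\Big\|\Phi(G)\bigl(H_{ph}(\1_{\supp(G)}\omega)+\rho\bigr)^{-1/2}\Big\|
\leq 2\Big\|\tfrac{G}{\sqrt{\omega}}\Big\|_{L^{2}}+\rho^{-1/2}\|G\|_{L^{2}}
\leq 2\Bigl(\Big\|\tfrac{G}{\sqrt{\omega}}\Big\|_{L^{2}}+\rho^{-1/2}\|G\|_{L^{2}}\Bigr),
\end{equation*}
which is the assertion.

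I do not expect a serious obstacle here, as this is a routine estimate; the only points needing care are (i) justifying the sectorwise manipulations and the interchange of sum and integral — handled by first restricting to the core of finite-particle vectors with smooth, compactly supported components and then extending by closedness of $a(G)$ and $a^{*}(G)$ on $\dom(M^{1/2})$ — and (ii) making sure the energy operator that actually appears is the one cut to $\supp(G)$, which is automatic because $G$, and hence $a(G)$ and $a^{*}(G)$, only couples to those photon momenta. The hypothesis $G/\sqrt{\omega}\in L^{2}(\R^{3})$ is exactly what makes the Cauchy--Schwarz step produce a finite constant, while $G\in L^{2}(\R^{3})$ controls the commutator term.
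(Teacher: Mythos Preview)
Your proof is correct and follows essentially the same approach as the paper: both split $\Phi(G)$ into $a(G)+a^{*}(G)$, bound $\|a(G)\psi\|$ by Cauchy--Schwarz with the weight $\omega$, handle $a^{*}(G)$ via the CCR identity $\|a^{*}(G)\psi\|^{2}=\|a(G)\psi\|^{2}+\|G\|_{L^{2}}^{2}\|\psi\|^{2}$, and then substitute $\psi=(M+\rho)^{-1/2}\phi$. The only cosmetic difference is that you carry out the Cauchy--Schwarz step sectorwise in Fock space whereas the paper writes it in the operator-valued distribution notation, and you record the slightly sharper intermediate bound $2\|G/\sqrt{\omega}\|_{L^{2}}+\rho^{-1/2}\|G\|_{L^{2}}$ before weakening to the stated form.
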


\begin{proof}		
Start by taking $\psi\in\dom\bigl(H_{ph}(\1_{\supp(G)}\omega)\bigr)$ and noting
\begin{equation}
\|\Phi(G)\psi\|\leq\|a(G)\psi\|+\|a^*(G)\psi\|.
\end{equation}
Using the Cauchy-Schwarz inequality, the first term can be estimated by
\begin{align}	
\Big\|\int G^*(k)a(k)\psi\dx^3k\Big\|^2
&\leq\Biggl(\int\1_{\supp(G)}(k) |G(k)|\cdot \|a(k)\psi\|\ \dx^3k\Biggr)^2\nonumber \\
&\leq\Biggl(\int\1_{\supp(G)}(k) |G(k)| \frac{\sqrt{\omega(k)}}{\sqrt{\omega(k)}}\|a(k)\psi\|\ \dx^3k\Biggr)^2\nonumber \\
&\leq\Biggl(\int\frac{|G(k)|^2}{\omega(k)}\dx^3k\Biggr)\Biggl(\int\bigl(\1_{\supp(G)}(k)\bigr)^2\omega(k)\|a(k)\psi\|^2\ \dx^3k\Biggr)\nonumber\\
&=\Big\|\frac{G}{\sqrt{\omega}}\Big\|_{L^2}^2\Biggl(\int\bigl(\1_{\supp(G)}(k)\bigr)^2\omega(k)\|a(k)\psi\|^2\ \dx^3k\Biggr),\label{standard-estimate-calculation-1}
\end{align}
where the last term on the right side can be identified as $\|(H_{ph}(\1_{\supp(G)}\omega))^{\frac{1}{2}}\psi\|^2$ as
\begin{equation}\label{identifying-h-ph}	
\|a(k)\psi\|^2=\langle a(k)\psi,a(k)\psi\rangle=\langle\psi,a(k)^*a(k)\psi\rangle,
\end{equation}
by the definition of the creation and annihilation operators, which, together with the dispersion and the continuity of the inner product, allows to make the connection to the photon field Hamiltonian. The estimate for $\|a^*(G)\psi\|$ now follows using the canonical commutation relations and the inequality already in place
\begin{align}		
\|a(G)^*\psi\|^2&=\big\langle\psi,a(G)a(G)^*\psi\big\rangle=\big\langle\psi,\bigl(\bigl[a(G),a(G)^*\bigr]+a(G)^*a(G)\bigr)\psi\big\rangle\nonumber\\
&\leq\|G\|_{\L^2}^2\cdot\|\psi\|^2+\Big\|\frac{G}{\sqrt{\omega}}\Big\|_{\L^2}^2\cdot\big\|H_{ph}^{1/2}\psi\big\|.\label{standard-estimate-calculation-2}
\end{align}
Taking $\psi=(H_{ph}+\rho)^{-\frac{1}{2}}\phi$ now yields the claim since $-\rho<0$ is an element of the resolvent set of $H_{ph}$.
\end{proof}

\subsection{Dilation Analyticity}\label{sect-dilation-analyticity}
In this section, the complex deformation is introduced and applied to the Hamiltonian, leading to a new operator $H_g(\theta)$ which may be generalized to complex $\theta$ with sufficiently small imaginary part using analytic continuation. The properties of the new operator are then briefly discussed, establishing all results needed for considering the resonances in the following chapter. To obtain a unitary operator acting on~$\cH$, start by taking~$\theta\in\R$ and define a unitary linear operator by
%
\begin{equation}\label{def-small-u-theta}	
u_{\theta}: \L^2(\R^3)\rightarrow \L^2(\R^3),\ (u_{\theta}\phi)(k)=e^{-\frac{3}{2}\theta}\phi(e^{-\theta}k).
\end{equation}
Next, $u_{\theta}$ is lifted to an operator $U_{\theta}:\cF\rightarrow\cF$ acting on the Fock space by second quantization
\begin{equation}\label{def-u-theta}	
U_{\theta}=\Gamma(u_{\theta})=\I_{\C}\oplus\bigoplus_{n=1}^{\infty}\bigl(u_{\theta}\otimes\dots\otimes u_{\theta}\bigr).
\end{equation}
Now identify $U_{\theta}=\I_{C^2}\otimes U_{\theta}$ to obtain an operator acting on the Hilbert space of the system. It acts on an element of $\cH$ by the following relations (see e.g. \cite[Section~I.5]{BachFroehlichSigal1998a})
\begin{align}		
U_{\theta}\Biggl(\begin{pmatrix}a\\b\end{pmatrix}\otimes\Omega\Biggr)&= \begin{pmatrix}a\\b\end{pmatrix}\otimes\Omega,\label{u-theta-relation-1}\\
U_{\theta}(M\otimes a^{\#}(\phi))U_{\theta}^*&=M\otimes a^{\#}(u_{\theta}\phi),\label{u-theta-relation-2}
\end{align}
where $\Omega\in\cF$ denotes the Fock vacuum, $a,b\in\C$, $M\in\C^{2\times2}$ and $a^{\#}\in\{a,a^*\}$. Note that, since $u_{\theta}$ is unitary, the transformation $U_{\theta}$ conserves the canonical commutation relations on $\cF$. It also satisfies $U_{\theta}\Omega=\Omega$ which implies that $U_{\theta}$ is unitary (see e.g.~\cite{ThirringIV1980}).
%
We now use $U_{\theta}$ to transform $H_g$. Considering the unperturbed operator first, using linearity yields
\begin{equation}\label{transformation-linearity}	
U_{\theta}H_0U_{\theta}^*=H_{at}+e^{-\theta}H_{ph}
\end{equation}
The key to the transformation of the last operator lies in the structure of the coupling term. It allows a direct computation using \eqref{u-theta-relation-2}, leading to
\begin{align}	
gW_{\theta}:&=gU_{\theta}WU_{\theta}^*=g\sigma_1\otimes a^*\Biggl(\frac{e^{-\frac{3}{2}\theta}f(e^{-\theta}k)}{\sqrt{|e^{-\theta}k|}}\Biggr)+g\sigma_1\otimes a\Biggl(\frac{e^{-\frac{3}{2}\theta}\overline{f(e^{-\theta}k)}}{\sqrt{|e^{-\theta}k|}}\Biggr)\nonumber\\
&=g\sigma_1\otimes\int_{\R^3}\Biggl(\frac{e^{-\frac{3}{2}\theta}f(e^{-\theta}k)}{\sqrt{|e^{-\theta}k|}}a^*(k)+\frac{e^{-\frac{3}{2}\theta}\overline{f(e^{-\theta}k)}}{\sqrt{|e^{-\theta}k|}}a(k)\Biggr)\ \dx^3k\nonumber\\
&=g\sigma_1\otimes\int_{\R^3}\Biggl(\frac{e^{-\theta}f(e^{-\theta}k)}{\sqrt{|k|}}a^*(k)+\frac{\overline{e^{-\bar{\theta}}f(e^{-\bar{\theta}}k)}}{\sqrt{|k|}}a(k)\Biggr)\ \dx^3k.\label{transformation-result-interaction}
\end{align}
Note that the transformation done in the last step is allowed since $\theta\in\R$ and  thus~$\theta=\bar{\theta}$. This seemingly small modification ensures that the involved functions are analytic and, therefore, is the key to continuing the operator to complex numbers with a sufficiently small imaginary part.

\begin{lemma}[Analytic Continuation]\label{analytic-continuation} 
Consider the operators
\begin{align}		
W_{\theta}&:=\sigma_1\otimes\int_{\R^3}\Biggl(\frac{e^{-\theta}f(e^{-\theta}k)}{\sqrt{|k|}}a^*(k)+\frac{\overline{e^{-\bar{\theta}}f(e^{-\bar{\theta}}k)}}{\sqrt{|k|}}a(k)\Biggr)\ \dx^3k,\label{def-w-theta}\\
H_g(\theta)&:=H_{at}\otimes \I_{\cF}+\I_{\C^2}\otimes\bigl(e^{-\theta}H_{ph}\bigr)+gW_{\theta},\label{def-h-theta}
\end{align}
and let~$\theta\in\C$ with $|\theta|<\frac{\pi}{6}$. Then $(W_{\theta})_{\theta\in D(0,\frac{\pi}{6})}$ is an analytic family of type A and $H_g(\theta)$ posesses an analytic continuation into the complex plane.
\end{lemma}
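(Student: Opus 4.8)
The plan is to check the two conditions in Kato's definition of an analytic family of type A for $(W_\theta)_{\theta\in D(0,\frac{\pi}{6})}$ — a $\theta$-independent operator domain, and analyticity of $\theta\mapsto W_\theta\psi$ for each $\psi$ in that domain — and then to read off the analytic continuation of $H_g(\theta)$, since the two remaining summands $H_{at}\otimes\I_{\cF}$ and $\I_{\C^2}\otimes(e^{-\theta}H_{ph})$ are respectively constant and entire in $\theta$ on the fixed domain $\dom(H_{ph})$.

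First I would take $\dom(H_{ph})$ as the common domain and show that each $W_\theta$ with $|\theta|<\frac{\pi}{6}$ is defined there and $H_{ph}$-bounded with relative bound $0$; together with $e^{-\theta}H_{ph}$ being normal (hence closed), Kato--Rellich then makes $H_g(\theta)$ closed on $\dom(H_{ph})$. By Lemma~\ref{standard-estimate} this reduces to bounding $\|G_\theta\|_{\L^2}$ and $\|G_\theta/\sqrt{\omega}\|_{\L^2}$, where $G_\theta(k)=e^{-\theta}f(e^{-\theta}k)/\sqrt{|k|}$ is the creation-side kernel (the annihilation-side kernel $\overline{e^{-\bar\theta}f(e^{-\bar\theta}k)}/\sqrt{|k|}$ has the same modulus, which is exactly where the hypothesis $|f(e^{ia}k)|=|f(e^{-ia}k)|$ is used), and then to checking these bounds are locally uniform in $\theta$. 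For $f(k)=e^{-k^2/\Lambda^2}$ one has $|f(e^{-\theta}k)|=e^{-\rRe(e^{-2\theta})|k|^2/\Lambda^2}$ with $\rRe(e^{-2\theta})=e^{-2\rRe\theta}\cos(2\,\rIm\theta)$, and $|\theta|<\frac{\pi}{6}$ keeps $\cos(2\,\rIm\theta)>\cos(\frac{\pi}{3})=\frac12$, so this exponent stays bounded below by a positive constant on compacta; this is where the restriction on $\theta$ enters, and it gives Gaussian decay of $G_\theta$ at infinity, locally uniformly. Near $k=0$ the kernel behaves like $|k|^{-1/2}$ (respectively $|k|^{-1}$ after division by $\sqrt{\omega}$), which is square-integrable over $\R^3$ — precisely the borderline integrability of the critical case. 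Hence the $\L^2$-norms and the norms $\|\cdot\|_\rho$ of Lemma~\ref{standard-estimate} are finite, locally uniformly in $\theta$.

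Next I would show that $\theta\mapsto W_\theta\psi$ is an $\cH$-valued holomorphic function on $D(0,\frac{\pi}{6})$ for each $\psi\in\dom(H_{ph})$. For fixed $k$, $\theta\mapsto G_\theta(k)$ is holomorphic because $f$ is entire and $\theta\mapsto e^{-\theta}$ is entire; the annihilation-side kernel is also holomorphic in $\theta$ — this is the reason it was written as $\overline{e^{-\bar\theta}f(e^{-\bar\theta}k)}$, so that the anti-holomorphic dependence through $e^{-\bar\theta}$ is undone by the outer conjugation. I would establish weak analyticity first: for $\chi\in\cH$ and $\psi$ in the core of finite-particle vectors with Schwartz one-particle wave functions, $\theta\mapsto\langle\chi,W_\theta\psi\rangle$ is a $k$-integral of functions holomorphic in $\theta$, the integral converging locally uniformly in $\theta$ by the $\L^2$-bounds above and Cauchy--Schwarz, so Morera's theorem together with Fubini gives holomorphy. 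Since the relative bound is locally uniform, the family $\{W_\theta\}$ is uniformly bounded on compacta from $\dom(H_{ph})$ with the graph norm of $H_{ph}$ into $\cH$, and weak analyticity on a core upgrades to strong analyticity on all of $\dom(H_{ph})$. This verifies that $(W_\theta)$ is an analytic family of type A; adding back the constant and entire summands gives the same for $H_g(\theta)$ on $D(0,\frac{\pi}{6})$, and since for real $\theta$ the computation \eqref{transformation-result-interaction} shows $H_g(\theta)=U_\theta H_g U_\theta^*$, this is genuinely the analytic continuation of the unitarily dilated family off the real axis.

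The one genuinely delicate point is the estimate in the second step: controlling how the complex dilation $k\mapsto e^{-\theta}k$ acts on the critical coupling function — the infrared singularity at $k=0$, which sits exactly on the boundary of $\L^2(\R^3)$, and the ultraviolet tail, where the rapid (here Gaussian) decay of $f$ must survive composition with a complex rotation, which is what forces $|\theta|<\frac{\pi}{6}$. Once those $\L^2$-bounds are in place and seen to be locally uniform in $\theta$, the remaining ingredients — relative boundedness via Lemma~\ref{standard-estimate}, closedness via Kato--Rellich, and analyticity via Morera — are routine.
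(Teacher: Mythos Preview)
Your argument is correct and arrives at the same conclusion, but the route differs from the paper's. The paper does not pass through weak analyticity and Morera's theorem; instead it fixes a formal derivative $W'_\theta$ (obtained by differentiating the kernel $G_\theta$ pointwise in $\theta$), uses Lemma~\ref{standard-estimate} to reduce the operator difference quotient $\bigl(\frac{W_{\theta+h}-W_\theta}{h}-W'_\theta\bigr)(H_{ph}+\rho)^{-1/2}$ to the corresponding $\|\cdot\|_\rho$-norm of the kernels, and then invokes a Cauchy-integral-formula estimate (Lemma~\ref{tool-analytic-continuation}) to bound that kernel difference quotient by $C|h|$ with a constant independent of $k$ and $\theta$. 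This gives strong differentiability of $\theta\mapsto W_\theta(H_{ph}+\rho)^{-1/2}$ directly, with an explicit $O(|h|)$ remainder, and the same template is repeated for $H_g(\theta)(H_0(0)+\rho)^{-1}$. Your Morera-based argument is more qualitative and perhaps more standard in the type-A literature; the paper's approach buys an explicit quantitative remainder bound at the cost of the auxiliary Lemma~\ref{tool-analytic-continuation}. Both rely on the same underlying input you identified --- locally uniform control of $\|G_\theta\|_{\L^2}$ and $\|G_\theta/\sqrt{\omega}\|_{\L^2}$ over the disk, which is where the restriction $|\theta|<\frac{\pi}{6}$ and the symmetry hypothesis $|f(e^{ia}k)|=|f(e^{-ia}k)|$ actually do their work --- so the difference is purely one of packaging. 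One small caveat: your explicit decay computation is specific to the Gaussian $f$, whereas the statement allows any entire $f$ with rapid real-axis decay and the stated symmetry; the paper's Cauchy-integral argument is phrased for the general case, though it too implicitly needs the same uniform $\L^2$ control on the circle $|\theta|=r$.
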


The proof requires a tool from complex analysis. It is noted in the following lemma~(see~\cite[Lemma~A.4]{BachBallesterosPizzo2017}).

\begin{lemma}\label{tool-analytic-continuation}	
Let $F:\mathcal{M}\rightarrow\C$ be an analytic function defined in a complex neighborhood~$\mathcal{M}$ of the closed disk~$\overline{D(0,r)}$ for some~$r>0$ and~$\gamma(0,r)$ denote the curve describing the circle with radius~$r$ around~$0$. Then, for~$s<r$,~$\theta\in D(0,s)$ and $h$ such that~${\theta+h\in D(0,s)}$, it is
\begin{equation}\label{formula-tool-analytic-continuation}	
\Big|\frac{F(\theta+h)-F(\theta)}{h}-F'(\theta)\Big|\leq\frac{|h|r}{(r-s)^3}\max_{z\in\gamma(0,r)}|F(z)|.
\end{equation}
\end{lemma}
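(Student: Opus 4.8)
\textbf{Proof proposal for Lemma~\ref{tool-analytic-continuation}.}
The plan is to derive the estimate directly from the Cauchy integral formula applied on the circle $\gamma(0,r)$. Since $F$ is analytic on a neighborhood $\mathcal{M}$ of $\overline{D(0,r)}$, and all three points $\theta$, $\theta+h$, $0$ lie in $D(0,s)\subset D(0,r)$, we may write
\begin{equation}
F(\theta+h)-F(\theta)=\frac{1}{2\pi i}\int_{\gamma(0,r)}F(z)\Bigl(\frac{1}{z-\theta-h}-\frac{1}{z-\theta}\Bigr)\,\dx z=\frac{h}{2\pi i}\int_{\gamma(0,r)}\frac{F(z)}{(z-\theta-h)(z-\theta)}\,\dx z,
\end{equation}
and similarly $F'(\theta)=\frac{1}{2\pi i}\int_{\gamma(0,r)}\frac{F(z)}{(z-\theta)^2}\,\dx z$. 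Dividing the first identity by $h$ and subtracting gives
\begin{equation}
\frac{F(\theta+h)-F(\theta)}{h}-F'(\theta)=\frac{1}{2\pi i}\int_{\gamma(0,r)}F(z)\Bigl(\frac{1}{(z-\theta-h)(z-\theta)}-\frac{1}{(z-\theta)^2}\Bigr)\,\dx z=\frac{h}{2\pi i}\int_{\gamma(0,r)}\frac{F(z)}{(z-\theta-h)(z-\theta)^2}\,\dx z.
\end{equation}

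Next I would estimate the last integral using the standard bound $\bigl|\int_{\gamma}g(z)\,\dx z\bigr|\le \mathrm{length}(\gamma)\cdot\max_{z\in\gamma}|g(z)|$. The length of $\gamma(0,r)$ is $2\pi r$, which cancels the $2\pi$ in the denominator up to the factor $r$. For $z\in\gamma(0,r)$ we have $|z|=r$, while $|\theta|<s$ and $|\theta+h|<s$, hence $|z-\theta|\ge r-s$ and $|z-\theta-h|\ge r-s$. Therefore
\begin{equation}
\Big|\frac{F(\theta+h)-F(\theta)}{h}-F'(\theta)\Big|\le\frac{|h|}{2\pi}\cdot 2\pi r\cdot\frac{\max_{z\in\gamma(0,r)}|F(z)|}{(r-s)\cdot(r-s)^2}=\frac{|h|r}{(r-s)^3}\max_{z\in\gamma(0,r)}|F(z)|,
\end{equation}
which is exactly~\eqref{formula-tool-analytic-continuation}.

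The argument is essentially routine once the right algebraic identity for the integrand is found; the only genuine point requiring a little care is the partial-fraction manipulation that isolates the single factor $h$ in the numerator while keeping all three poles on the boundary under control. One should also note at the outset that the Cauchy integral formula and its derivative version are applicable here precisely because $F$ is analytic on the open neighborhood $\mathcal{M}$ of the closed disk, so $\gamma(0,r)$ lies in the domain of analyticity and the points $\theta,\theta+h$ lie strictly inside the contour; no boundary-regularity subtleties arise. I do not expect any serious obstacle.
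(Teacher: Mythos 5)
Your proposal is correct and follows essentially the same route as the paper: write the difference quotient minus $F'(\theta)$ as a single Cauchy integral with integrand $\frac{hF(z)}{(z-\theta)^2(z-(\theta+h))}$ over $\gamma(0,r)$, then apply the standard length-times-maximum estimate with $|z-\theta|,|z-\theta-h|\geq r-s$. No discrepancies to report.
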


\begin{proof}		
Since $F$ is analytic, one may use Cauchy's integral formula to write
\begin{align}	
\frac{F(\theta+h)-F(\theta)}{h}-F'(\theta)&=\frac{1}{2\pi i}\int_{\gamma(0,r)}\Biggl(\frac{F(z)}{h(z-\theta)}-\frac{F(z)}{h(z-(\theta+h))}-\frac{F(z)}{(z-\theta)^2}\Biggr)\ \dx z\nonumber\\
&=\frac{1}{2\pi i}\int_{\gamma(0,r)}\frac{hF(z)}{(z-\theta)^2(z-(\theta+h))}\ \dx z,\label{calc-tool-analytic-continuation-1}
\end{align}
which yields the desired result by estimating the absolute value as
\begin{align}	
\Big|\frac{1}{2\pi i}\int_{\gamma(0,r)}\frac{hF(z)}{(z-\theta)^2(z-(\theta+h))}\ \dx z\Big|&\leq\frac{1}{2\pi}\cdot2\pi r\cdot|h|\max_{z\in\gamma(0,r)}\Big|\frac{F(z)}{(z-\theta)^2(z-(\theta+h))}\Big|\nonumber\\
&=\frac{|h|r}{(r-s)^3}\max_{z\in\gamma(0,r)}|F(z)|.\label{calc-tool-analytic-continuation-2}
\end{align}
\end{proof}

With this tool in place, the analytic continuation can be established.

\begin{proof}[Proof of Lemma \ref{analytic-continuation}]		
Following the proof of Theorem 4.3 and 4.4 in \cite{BachBallesterosPizzo2017}, the first step is to show that $\theta\mapsto W_{\theta}(H_{ph}+\rho)^{-\frac{1}{2}}$ is analytic for $\rho>0$ and sufficiently small $\theta$. The fact that $\dom(W_{\theta})$, by definition, does not depend on the parameter $\theta$ then implies the claim that $(W_{\theta})_{\theta}$ is an analytic family of type A. Setting
\begin{equation}\label{def-g-theta}		
G_{\theta}(k):=\frac{e^{-\theta}f(e^{-\theta}k)}{\sqrt{|k|}},
\end{equation}
note that both $G_{\theta}$ and $\overline{G_{\bar{\theta}}}$ are analytic functions of $\theta$ for fixed $k$. Let $G'_{\theta}$ and $\overline{G_{\bar{\theta}}}'$, respectively, denote the derivative with respect to $\theta$ and consider the operator
\begin{equation}\label{def-formal-derivative}
W'_{\theta}:=\sigma_1\otimes\int_{\R^3}\Bigl(G'_{\theta}a^*(k)+\overline{G_{\bar{\theta}}}'a(k)\Bigr)\ \dx^3k,
\end{equation}
which results from inserting $G'_{\theta}$ instead of $G_{\theta}$ in~\eqref{def-w-theta}. Using $W'_{\theta}$ as formal derivative, Lemma~\ref{standard-estimate} allows reducing the term to a statement about the functions involved
\begin{equation}\label{calc-analytic-continuation-1}	
\Big\|\Bigl(\frac{W_{\theta+h}-W_{\theta}}{h}-W'_{\theta}\Bigr)\psi\Big\|\leq\Big\|\frac{G_{\theta+h}-G_{\theta}}{h}-G'_{\theta}\Big\|_{\rho}\cdot\big\|(H_{ph}+\rho)^{-\frac{1}{2}}\psi\big\|.
\end{equation}
Lemma~\ref{tool-analytic-continuation} now implies the existence of a positive constant~${\widetilde{C}\neq\widetilde{C}(k,\theta)}$ such that
\begin{equation}\label{calc-analytic-continuation-2}	
\Big\|\Bigl(\frac{W_{\theta+h}-W_{\theta}}{h}-W'_{\theta}\Bigr)\psi\Big\|\leq\widetilde{C}\cdot|h|\cdot\big\|(H_{ph}+\rho)^{-\frac{1}{2}}\psi\big\|.
\end{equation}
This establishes the analyticiy of~$W_{\theta}(H_{ph}+\rho)^{-\frac{1}{2}}$ as a Banach space-valued map. The second part of the Lemma can be shown analogously. Here we use the formal derivative~${H_g'(\theta)=-e^{-\theta}H_{ph}+gW'(\theta)}$, yielding
\begin{align}	
&\Big\|\Bigl(\frac{H_g(\theta+h)-H_g(\theta)}{h}-H_g'(\theta)\Bigr)\psi\Big\|\label{calc-analytic-continuation-3}\\
=\Big\|\Bigl(&\frac{H_{at}-H_{at}}{h}\Bigr)\psi\Big\|+\Big|\frac{e^{-(\theta+h)}-e^{\theta}}{h}+e^{-\theta}\Big|\cdot\big\|H_{ph}\psi\big\|+g\cdot\Big\|\Bigl(\frac{W_{\theta+h}-W_{\theta}}{h}-W'_{\theta}\Bigr)\psi\Big\|.\nonumber
\end{align}
Note that the first term vanishes and the third term has already been treated in~\eqref{calc-analytic-continuation-2}. For the second term, observe that 
\begin{equation}\label{calc-analytic-continuation-4}		
\Big|\frac{e^{-(\theta+h)}-e^{\theta}}{h}+e^{-\theta}\Big|\cdot\big\|H_{ph}\psi\big\|\leq|h|\cdot \eta(h)\cdot\big\|H_{ph}\psi\big\|,
\end{equation}
where $\eta(h)\rightarrow0$ as $h\rightarrow0$. Since $\rho>0$, it is $\|H_{ph}\psi\|\leq\|(H_{ph}+\rho)\psi\|$ and we estimate
\begin{equation}\label{calc-analytic-continuation-5}		
\big\|(H_{ph}+\rho)^{-\frac{1}{2}}\psi\big\|\leq\frac{1}{2}\Bigl(\big\|(H_{ph}+\rho)\psi\big\|+\|\psi\|\Bigr).
\end{equation}
These inequalities can be used to obtain a relative bound in terms of $(H_{ph}+\rho)$. Also including $H_{at}$ on the right-hand side now yields
\begin{equation}\label{calc-analytic-continuation-6}		
\Big\|\Bigl(\frac{H_g(\theta+h)-H_g(\theta)}{h}-H_g'(\theta)\Bigr)\psi\Big\|\leq|h|\cdot\Bigl(C_1\big\|(H_0(0)+\rho)\psi\big\|+C_2\|\psi\|\Bigr)
\end{equation}
which is the desired bound in terms of $H_0(0)$. The constants $C_1,C_2>0$ result from combining the bounds from the previous estimates. This implies that $H_g(\theta)(H_0(0)+\rho)^{-1}$ is analytic as a Banach space-valued map $D(0,s)\rightarrow\mathcal{B}(\cH)$ as well as the desired analytic continuation into the complex plane.
\end{proof}
%
From now on, set $\theta=i\vartheta$ with $\vartheta\in\R$ sufficiently small to satisfy the assumptions of the previous lemma and define the operators
\begin{align} 	
H_0(\theta)&:=H_{at}\otimes \I_{\cF}+\I_{\C^2}\otimes(e^{-\theta}H_{ph}),\label{def-h-null-theta}\\
H_g(\theta)&:=H_{at}\otimes \I_{\cF}+\I_{\C^2}\otimes(e^{-\theta}H_{ph})+gW_{\theta}\label{def-h-g-theta}.
\end{align}
The section is concluded with a discussion of the properties of $H_0(\theta)$ and $H_g(\theta)$.

\pagebreak
\begin{lemma}\label{properties}		
Let $H_0(\theta),H_g(\theta)$ as defined in \eqref{def-h-null-theta} and \eqref{def-h-g-theta} respectively.
\begin{itemize}
\item[(i)] $H_0(\theta)$ is normal, i.e., $\bigl[H_0(\theta),H_0(\theta)^*\bigr]=0$.
\item[(ii)] Let ${z\in\sigma\bigl(H_g(\theta)\bigr)}$. Then there exists some $R>0$ and a constant $C>0$ such that~${z\in \overline{D(2,R)}}$ implies that $z\in\big\{a+b|a\in\sigma\bigl(H_0(\theta)\bigr),b\in\overline{D(0,Cg)}\big\}$.
\item[(iii)] $H_g(\theta)$ is closed.
\end{itemize}
\end{lemma}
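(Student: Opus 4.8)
For (i) I would use that the bounded self‑adjoint matrix $H_{at}=\diag(2,0)$ reduces $H_0(\theta)$: on the eigenspace of $H_{at}$ for the eigenvalue $\lambda\in\{0,2\}$ the operator acts as $\lambda+e^{-\theta}H_{ph}$, and $e^{-\theta}=e^{-i\vartheta}$ is unimodular. Writing $H_0(\theta)=H_{at}\otimes\I_\cF+\I_{\C^2}\otimes e^{-\theta}H_{ph}$ as the sum of a bounded self‑adjoint operator and a closed one, its adjoint is $H_{at}\otimes\I_\cF+\I_{\C^2}\otimes e^{-\bar\theta}H_{ph}$, with the same domain $\C^2\otimes\dom(H_{ph})$. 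For $\psi$ in this common domain one computes, using that $\langle\psi,H_{ph}\psi\rangle\in\R$,
\[
\|H_0(\theta)\psi\|^2=\lambda^2\|\psi\|^2+2\lambda\cos\vartheta\,\langle\psi,H_{ph}\psi\rangle+\|H_{ph}\psi\|^2=\|H_0(\theta)^*\psi\|^2,
\]
which together with the equality of domains is exactly normality; equivalently $[H_0(\theta),H_0(\theta)^*]=0$ on $\C^2\otimes\dom(H_{ph}^2)$. In particular $H_0(\theta)$ is closed, which is used below.

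For (iii) the point is that $gW_\theta$ is an infinitesimal perturbation of $H_0(\theta)$. Applying Lemma~\ref{standard-estimate} to the functions $G_\theta$ and $\overline{G_{\bar\theta}}$ appearing in $W_\theta$ — both of which, together with $G_\theta/\sqrt\omega$, lie in $L^2(\R^3)$ because $f$ is entire with rapid decay and $|k|^{-1}$ is (barely) square‑integrable near the origin in three dimensions — and using $|G_\theta(k)|=|\overline{G_{\bar\theta}}(k)|$ (a consequence of $|f(e^{ia}k)|=|f(e^{-ia}k)|$), one gets $\|W_\theta(H_{ph}+\rho)^{-1/2}\|\le C\|G_\theta\|_\rho$ for every $\rho>0$. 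Combining this with the operator inequality $(H_{ph}+\rho)^{1/2}\le\delta(H_{ph}+\rho)+(4\delta)^{-1}$, with $\|H_{ph}\psi\|\le\|H_0(\theta)\psi\|+2\|\psi\|$ (as $\|H_{at}\|=2$), and letting $\delta\to0$ shows that for every $\eta>0$ there is $M_\eta>0$ with $\|W_\theta\psi\|\le\eta\|H_0(\theta)\psi\|+M_\eta\|\psi\|$ on $\dom(H_0(\theta))$; i.e.\ $W_\theta$ is $H_0(\theta)$‑bounded with relative bound zero. Since $H_0(\theta)$ is closed by (i), the standard stability theorem for relatively bounded perturbations (cf.~\cite{Kato1980}) gives that $H_g(\theta)=H_0(\theta)+gW_\theta$ is closed on $\dom(H_0(\theta))=\dom(\I_{\C^2}\otimes H_{ph})$.

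For (ii) I would prove the contrapositive. Fix any $R>0$ (its value is immaterial — it serves only to keep $|z|\le 2+R$) and let $z\in\overline{D(2,R)}$ with $d:=\dist\bigl(z,\sigma(H_0(\theta))\bigr)>Cg$, the constant $C$ to be chosen. Since $H_0(\theta)$ is normal, $z$ lies in its resolvent set, $\|(H_0(\theta)-z)^{-1}\|=d^{-1}$, and $\|H_0(\theta)(H_0(\theta)-z)^{-1}\|=\|\I+z(H_0(\theta)-z)^{-1}\|\le 1+(2+R)/d$. Using the relative bound of (iii) with $\eta=\tfrac14$ and the corresponding constant $M:=M_{1/4}$,
\[
\|gW_\theta(H_0(\theta)-z)^{-1}\|\le\tfrac14 g\,\|H_0(\theta)(H_0(\theta)-z)^{-1}\|+gM\,\|(H_0(\theta)-z)^{-1}\|\le\tfrac14 g+\frac{g\bigl(\tfrac14(2+R)+M\bigr)}{d}.
\]
Choosing $C:=(2+R)+4M$, the hypothesis $d>Cg$ forces the second summand below $\tfrac14$, while $\tfrac14 g<\tfrac14$ once $g<1$; hence $\|gW_\theta(H_0(\theta)-z)^{-1}\|<1$. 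Because $\dom(W_\theta)\supseteq\dom(H_0(\theta))$, one has on $\dom(H_0(\theta))$ the factorization $H_g(\theta)-z=\bigl(\I+gW_\theta(H_0(\theta)-z)^{-1}\bigr)(H_0(\theta)-z)$, a product of two boundedly invertible operators, so $z$ lies in the resolvent set of $H_g(\theta)$. Taking contrapositives: every $z\in\sigma(H_g(\theta))\cap\overline{D(2,R)}$ satisfies $\dist(z,\sigma(H_0(\theta)))\le Cg$, i.e.\ $z=a+b$ with $a\in\sigma(H_0(\theta))$ and $|b|\le Cg$.

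The only genuinely delicate point is securing the relative bound zero for $W_\theta$ uniformly in small $\theta$: one must check that the deformed coupling function $G_\theta$ stays in the same weighted $L^2$‑classes as $G$ — which is where the dilation analyticity of $f$ and the restriction $|\theta|<\tfrac{\pi}{6}$ from Lemma~\ref{analytic-continuation} enter — and that Lemma~\ref{standard-estimate} still applies although $W_\theta$ carries the two distinct functions $G_\theta$ and $\overline{G_{\bar\theta}}$ in its creation and annihilation parts (handled via $|G_\theta|=|\overline{G_{\bar\theta}}|$). Granting that, (i) is a one‑line norm identity, (iii) is Kato's perturbation theorem, and (ii) is the short Neumann‑series estimate above, all resting on the normality of $H_0(\theta)$.
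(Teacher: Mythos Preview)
Your proof is correct and follows essentially the same approach as the paper: normality of $H_0(\theta)$ via self-adjointness of $H_{at}$ and $H_{ph}$, relative boundedness of $W_\theta$ with respect to $H_0(\theta)$ (ultimately resting on Lemma~\ref{standard-estimate}), and a Neumann-series argument to locate the spectrum of $H_g(\theta)$ near that of $H_0(\theta)$. The only organizational difference is that you establish (iii) first (relative bound zero) and then feed the resulting inequality $\|W_\theta\psi\|\le\eta\|H_0(\theta)\psi\|+M_\eta\|\psi\|$ directly into the Neumann estimate for (ii), whereas the paper does (ii) first via the explicit splitting $\|W_\theta(H_{ph}+1)^{-1/2}\|\cdot\|(H_{ph}+1)^{1/2}(H_0(\theta)-z)^{-1}\|$ and functional calculus for the normal operator $H_0(\theta)$, and then reads off (iii) as a corollary of $\|gW_\theta(H_0(\theta)-z)^{-1}\|<1$.
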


\begin{proof}		
(i)		
Note that $H_0(\theta)$ is a closed operator on $\C^2\otimes\dom(H_{ph})$. Since both $H_{at}$ and~$H_{ph}$ are self-adjoint, the adjoint~$H_0(\theta)^*$ is also a closed operator acting on $\C^2\otimes\dom(H_{ph})$ and it is given by
\begin{equation}\label{adjoint-h-0}		
H_0(\theta)^*=H_{at}+e^{\theta}H_{ph}
\end{equation}
This implies that
\begin{equation}		
\mathnormal{}\bigl[H_0(\theta),H_0(\theta)^*\bigr]=0.\label{commutator-normal-h-0}
\end{equation}
(ii)		
Recalling that $H_g(\theta)=H_0(\theta)+gW_{\theta}$, consider $z\notin\sigma\bigl(H_0(\theta)\bigr)$ and write
\begin{align}		
\bigl(H_g(\theta)-z\bigr)^{-1}=\bigl(H_0(\theta)-z\bigr)^{-1}\sum_{j=0}^{\infty}\bigl[-gW_{\theta}(H_0(\theta)-z)^{-1}\bigr]^j\label{calc-properties-1}
\end{align}
using a formal Neumann series expansion, which converges if ${g\|W_{\theta}\bigl(H_0(\theta)-z\bigr)^{-1}\|<1}$. Expanding the term yields
\begin{align}		
g\cdot\big\|W_{\theta}\bigl(H_0(\theta)-z\bigr)^{-1}\big\|
&=g\cdot\Big\|W_{\theta}(H_{ph}+1)^{-\frac{1}{2}}\Big\|\cdot\Big\|\frac{(H_{ph}+1)^{\frac{1}{2}}}{H_0(\theta)-z}\Big\|,\label{calc-properties-2}
\end{align}
where we use fractions as an alternative notation to write the commuting operators. By Lemma~\ref{standard-estimate}, it is
\begin{equation}\label{calc-properties-3}	
\big\|W_{\theta}(H_{ph}+1)^{-\frac{1}{2}}\big\|\leq2\Bigl(\Big\|\frac{G_{\theta}}{\sqrt{\omega}}\Big\|_{L^2}+\|G_{\theta}\|_{L^2}\Bigr).
\end{equation}
For the second term note that $H_0(\theta)$ is normal following the analysis in (i), and one may, therefore, use the functional calculus to write
\begin{align}		
\Big\|\frac{(H_{ph}+1)^{\frac{1}{2}}}{H_0(\theta)-z}\Big\|&\leq\sup_{r\geq0}\big|(r+1)^{\frac{1}{2}}(H_{at}+e^{-\theta}r-z)^{-1}\big|\nonumber\\
&\leq\max\Big\{\sup_{r\geq0}\Big|\frac{r+1}{2+e^{-\theta}r-z}\Big|,\ \sup_{r\geq0}\Big|\frac{r+1}{e^{-\theta}r-z}\Big|\Big\}\label{calc-properties-4}
\end{align}
Here, the maximum is attained for the first term. Expanding the numerator by adding and subtracting $(2-z)$ implies for the supremum that
\begin{equation}\label{calc-properties-5}	
\sup_{r\geq0}\Big|\frac{r+1}{2+e^{-\theta}r-z}\Big|\leq\frac{1+|2-z|}{\dist\bigl(z,\sigma(H_0(\theta))\bigr)}+1 
\end{equation}
The term $|2-z|$ is bounded since $z$ is taken from a circle around 2 with radius $R>0$. This implies the existence of a constant $C'>0$ such that
\begin{equation}\label{calc-properties-6}		
\Big\|\frac{(H_{ph}+1)^{\frac{1}{2}}}{H_0(\theta)-z}\Big\|\leq C'\ \frac{1}{\dist\bigl(z,\sigma(H_0(\theta))\bigr)}.
\end{equation}
Both estimates together yield the existence of a constant $C>0$ such that
\begin{equation}\label{calc-properties-7}	
g\cdot\big\|W_{\theta}(H_{ph}+1)^{-\frac{1}{2}}\big\|\cdot\Big\|\frac{(H_{ph}+1)^{\frac{1}{2}}}{H_0(\theta)-z}\Big\|\leq Cg\ \frac{1}{\dist\bigl(z,\sigma(H_0(\theta))\bigr)},
\end{equation}
which is smaller than one if $g$ is chosen small enough. This ensures the convergence of the Neumann series and yields the claim of (ii) when the explicit bounds are inserted.

(iii)		
In (ii) it was established that $\varepsilon=\|gW_{\theta}\bigl(H_0(\theta)-z\bigr)^{-1}\|<1$. This implies that~$gW_{\theta}$ is relatively bounded by $H_0(\theta)$ with relative bound $\varepsilon<1$. Having already established that $H_0(\theta)$ is closed in (i), the same must hold for~$H_g(\theta)=H_0(\theta)+gW_{\theta}$~(see \cite[p.~191]{Kato1980}).
\end{proof}

\subsection{Application of the Feshbach-Schur Map}\label{sect-application-feshbach}
In this section, a suitable Feshbach-Schur map is applied to the transformed Hamiltonian~\eqref{def-h-g-theta} to obtain an effective Hamiltonian acting only on the Fock space. Recall that the complex dilated and analytically continued Hamiltonian is given by
\begin{equation}\label{def-h-g-theta-with-matrices}		
H_g(\theta)=\begin{pmatrix}2&0\\0&0\end{pmatrix}\otimes \I_{\cF}+\begin{pmatrix}1&0\\0&1\end{pmatrix}\otimes(e^{-\theta}H_{ph})+g\begin{pmatrix}0&1\\1&0\end{pmatrix}\otimes\bigl(a^*(G_{\theta})+a(G_{\overline{\theta}})\bigr)
\end{equation}
on $\C^2\otimes\cF$ and denote $\Phi_{\theta}=a^*(G_{\theta})+a(G_{\overline{\theta}})$. Next, consider the following projections acting on $\C^2$ by
\begin{equation}\label{def-p-up-p-down}		
P_{\uparrow}:=\begin{pmatrix}1&0\\0&0\end{pmatrix},\quad P_{\downarrow}:=\begin{pmatrix}0&0\\0&1\end{pmatrix},
\end{equation}
and define a projection acting on $\cH$ by
\begin{equation}\label{def-projection-p}		
\P :=P_{\uparrow}\otimes\I_{\cF}.
\end{equation}
By definition, $\P$ maps to the excited state of the atom. We show that the operator
\begin{equation}\label{feshbach-applied-to-h}		
F_{\P}(H_g(\theta)-z)=\P(H_g(\theta)-z)\P-\P(H_g(\theta)-z)\overline{\P}(\overline{\P}(H_g(\theta)-z)\overline{\P})^{-1}\overline{\P}(H_g(\theta)-z)\P,
\end{equation}
is well-defined for $z$ in a small complex vicinity of $2$. First, the terms appearing in \eqref{feshbach-applied-to-h} are explicitly calculated. 
%
As $\P(gW_{\theta})\P=0$, we have that
\begin{equation}\label{calc-application-feshbach-2}		
\P\bigl(H_g-z\bigr)\P=P_{\uparrow}\otimes(2+e^{-\theta}H_{ph}-z).
\end{equation}
Next, consider the second term in \eqref{feshbach-applied-to-h}. The expressions including both $\P$ and $\overline{\P}$ yield
\begin{equation}\label{calc-application-feshbach-4}		
\P\bigl(H_g-z\bigr)\overline{\P}=gP_{\uparrow}\sigma_1\otimes\Phi_{\theta},\quad \overline{\P}\bigl(H_g-z\bigr)\P=g\sigma_1P_{\uparrow}\otimes\Phi_{\theta}.
\end{equation}
The last expression to be considered is given by
\begin{equation}\label{calc-application-feshbach-7}		
\overline{\P}\bigl(H_g-z\bigr)\overline{\P}=P_{\downarrow}\otimes(e^{-\theta}H_{ph}-z).
\end{equation}
%
Note that the inverse of this operator is involved in \eqref{feshbach-applied-to-h}. Let $z\in D(2,Cg)$ for some constant~$C>0$. Since $H_{ph}$ is self-adjoint, one can apply the spectral theorem to obtain
\begin{equation}\label{inverse-term-exists}		
\big\|(e^{-\theta}H_{ph}-z)^{-1}\big\|=\sup_{r\geq0}\frac{1}{|e^{-\theta}r-z|}\leq\frac{1}{\delta}.
\end{equation}
Here, the constant $\delta$ is the result of a geometric argument. The setting is sketched in the following image.

\begin{figure}[H]	
\centering
\input{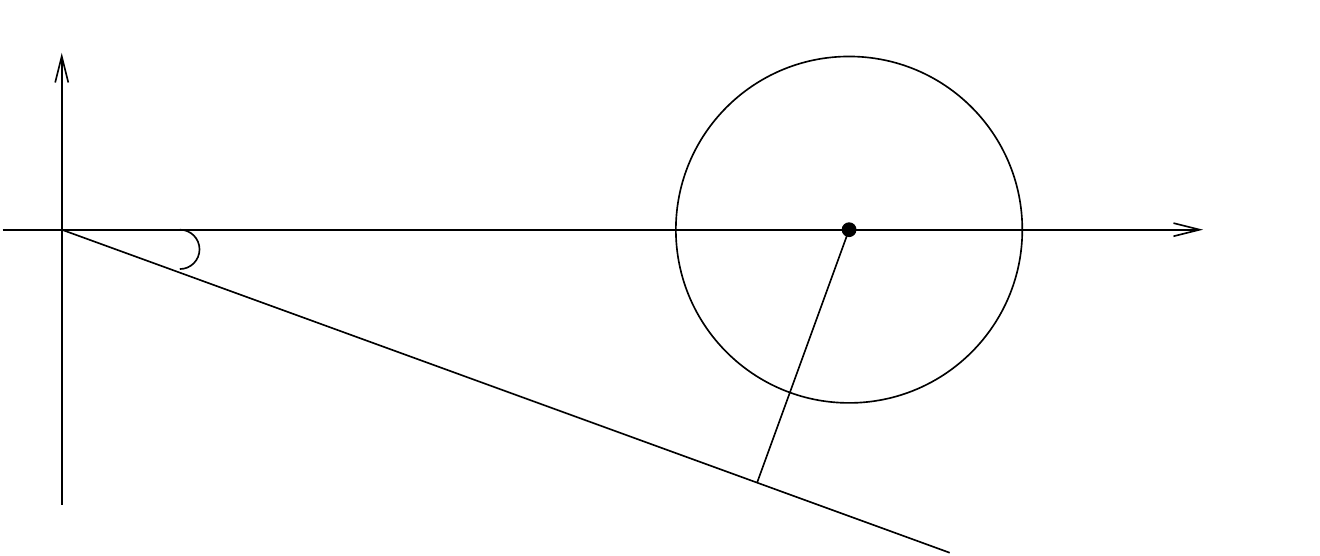_t}
\caption{The basis for the geometric estimate}\label{fig-geometric-estimate-line-circle}
\end{figure}

Since the radius of the disk is proportional to the coupling constant, one can always achieve $\delta>0$ by restricting $g$. The distance can be calculated explicitly, yielding
\begin{equation}\label{value-for-delta}		
\sin(\vartheta)=\frac{Cg+\delta}{2}\ \Leftrightarrow\ \delta=2\sin(\vartheta)-Cg.
\end{equation}
For the following estimates, we include the leading order in $\vartheta$ into the calculation and rewrite the bound as $\vartheta d$ for some suitable constant $d>0$. It now follows that
\begin{equation}\label{calc-application-feshbach-8}		
\big\|\bigl(\overline{\P}\bigl(H_g(\theta)-z\big)\overline{\P}\bigr)^{-1}\overline{\P}\big\|=\big\|P_{\downarrow}\otimes\bigl(e^{-\theta}H_{ph}-z\bigr)^{-1}\big\|<\infty.
\end{equation}
The next term to consider is
\begin{equation}\label{calc-application-feshbach-9}		
\overline{\P}\bigl(\overline{\P}\bigl(H_g(\theta)-z\bigr)\overline{\P})^{-1}\overline{\P}\bigl(H_g(\theta)-z\bigr)\P=P_{\downarrow}\otimes(e^{-\theta}H_{ph}-z)^{-1}(g\sigma_1P_{\uparrow}\otimes\Phi_{\theta}).
\end{equation}
The norm is estimated by
\begin{align}		
&g\cdot \big\|\sigma_1P_{\uparrow}\otimes(e^{-\theta}H_{ph}-z)^{-1}\Phi_{\theta})\big\|\nonumber\\
&\leq g\cdot \big\|(e^{-\theta}H_{ph}-z)^{-1}\bigl(H_{ph}+1\bigr)\big\|\cdot\big\|\bigl(H_{ph}+1\bigr)^{-\frac{1}{2}}\big\|\cdot\big\|\bigl(H_{ph}+1\bigr)^{-\frac{1}{2}}\Phi_{\theta}\big\|.\label{calc-application-feshbach-10}
\end{align}
Note that the operator $(e^{-\theta}H_{ph}-z)$ is not self-adjoint and, therefore, it does not possess a square root. For this reason, the identity is inserted  such that Lemma~\ref{standard-estimate} can be applied. Rewriting the first term of~\eqref{calc-application-feshbach-10} similar to~\cite[Lemma~6.17]{BachBallesterosPizzo2017}, it can be estimated by
\begin{align}		
\Big\|\frac{H_{ph}+1}{e^{-\theta}H_{ph}-z}\Big\|&=\Big\|e^{\theta}+\frac{e^{\theta}z+1}{e^{-\theta}H_{ph}-z}\Big\|\leq1+\sup_{r\geq0}\frac{|z|+1}{|e^{-\theta}r-z|}\nonumber\\
&\leq1+\frac{2+Cg+1}{\vartheta d},\label{calc-application-feshbach-11}
\end{align}
using the spectral theorem for $H_{ph}$ and~\eqref{inverse-term-exists}. The remaining terms in~\eqref{calc-application-feshbach-10} can be estimated directly using the spectral theorem for $H_{ph}$ and Lemma \ref{standard-estimate}, respectively. Therefore,
\begin{equation}\label{calc-application-feshbach-13}		
\|\overline{\P}\bigl(\overline{\P}\bigl(H_g(\theta)-z\bigr)\overline{\P}\bigr)^{-1}\overline{\P}\bigl(H_g(\theta)-z\bigr)\P\|<\infty.
\end{equation}
Applying the same line of argument to the operator
\begin{equation}\label{calc-application-feshbach-14}	
\P\bigl(H_g(\theta)-z\bigr)\overline{\P}\bigl(\overline{\P}\bigl(H_g(\theta)-z\bigr)\overline{\P}\bigr)^{-1}\overline{\P}
=g(P_{\uparrow}\sigma_1\otimes\Phi_{\theta})\bigl(P_{\downarrow}\otimes(e^{-\theta}H_{ph}-z)^{-1}\bigr)
\end{equation}
also yields a finite bound for its norm. The last expression to consider is the second term on the right side of \eqref{feshbach-applied-to-h}. It is given by
\begin{equation}\label{calc-application-feshbach-15}		
\P\bigl(H_g(\theta)-z\bigr)\overline{\P}\bigl(\overline{\P}\bigl(H_g(\theta)-z\bigr)\overline{\P}\bigr)^{-1}\overline{\P}\bigl(H_g(\theta)-z\bigr)\P=g^2P_{\uparrow}\otimes\bigl(\Phi_{\theta}(e^{-\theta}H_{ph}-z)^{-1}\Phi_{\theta}\bigr).
\end{equation}
Here, we rewrite the norm as
\begin{align}	
&g^2\cdot\|P_{\uparrow}\otimes\bigl(\Phi_{\theta}\bigl(e^{-\theta}H_{ph}-z\bigr)^{-1}\Phi_{\theta}\bigr)\|\nonumber\\
&\leq g^2\cdot\|\Phi_{\theta}\bigl(H_{ph}+1\bigr)^{-\frac{1}{2}}\big\|\cdot\big\|\bigl(H_{ph}+1\bigr)\bigl(e^{-\theta}H_{ph}-z\bigr)^{-1}\big\|\cdot\big\|(H_{ph}+1)^{-\frac{1}{2}}\Phi_{\theta}\big\|,\label{calc-application-feshbach-16}
\end{align}
which is bounded due to the estimates already derived. Notice in particular that both terms $(H_{ph}+1)^{-\frac{1}{2}}$ are needed here, whereas only one was explicitly required for the previous bounds. This implies that the operator defined in \eqref{feshbach-applied-to-h} can be written as
%
\begin{equation}\label{feshbach-map-result}		
F_{\P}\bigl(H_g(\theta)-z\bigr)=P_{\uparrow}\otimes \bigl(2-z+e^{-\theta}H_{ph}-g^2\Phi_{\theta}(e^{-\theta}H_{ph}-z)^{-1}\Phi_{\theta}\bigr)
\end{equation}
and that it is indeed well-defined for $z\in D(2,Cg)$. The effective Hamiltonian acting on~$\cF$ is then given by
\begin{equation}\label{def-h-eff}		
H_{\mathrm{eff}}=2-z+e^{-\theta}H_{ph}-g^2\Phi_{\theta}(e^{-\theta}H_{ph}-z)^{-1}\Phi_{\theta}.
\end{equation}
Lastly, we rewrite the operator. The form aimed at is
\begin{equation}\label{desired-form}		
H_{\mathrm{eff}}=F_0-g^2W=F_0-g^2\sum_{m+n=2}W_{m,n},
\end{equation}
where the operators $W_{m,n}$ are given by
\begin{equation}\label{def-w-m-n}		
W_{m,n}=\int a^*(k_1)\dots a^*(k_m)\ w_{m,n}\bigl[z,H_{ph},k_1,\dots,k_m,\tilde{k}_1,\dots, \tilde{k}_n\bigr]\ a(\tilde{k}_1)\dots a(\tilde{k}_n)\ \dx^{m}k\ \dx^{n}\tilde{k}.
\end{equation}
First, writing out $\Phi_{\theta}(e^{-\theta}H_{ph}-z)^{-1}\Phi_{\theta}$ leads to
\begin{equation}\label{calc-ordering-1}		
\int\bigl(G_{\theta}(k_1)a^*(k_1)+\overline{G_{\overline{\theta}}(k_1)}a(k_1)\bigr)(e^{-\theta}H_{ph}-z)^{-1}\bigl(G_{\theta}(k_2)a^*(k_2)+\overline{G_{\overline{\theta}}(k_2)}a(k_2)\bigr)\ \dx k_1\dx k_2 .
\end{equation}
Expanding the integrand now yields three terms corresponding to the $W_{m,n}$ respectively. Next, we use the canonical commutation relations~\eqref{ccr} and the pull-through formulae for the field Hamiltonian (see e.g.~\cite[Lemma~52]{HaslerHerbst2011}) to move the operators to the right positions. Since the other terms may be treated similarly, the calculation is only carried out for the term corresponding to $W_{1,1}$. The first part obtained from expanding the integrand in~\eqref{calc-ordering-1} is given by
\begin{equation}\label{calc-ordering-2}		
\int a^*(k_1)\ G_{\theta}(k_1)\ \overline{G_{\overline{\theta}}(k_2)}\ (e^{-\theta}H_{ph}-z)^{-1}\ a(k_2)\ \dx k_1\dx k_2,
\end{equation}
which is already of the desired form. For the second part, moving the operator $a^*(k_2)$ from the right to the left and then swapping $a(k_1)$ and $(e^{-\theta}H_{ph}-z)^{-1}$ leads to
\begin{align}		
&\int\overline{G_{\overline{\theta}}(k_1)}a(k_1)(e^{-\theta}H_{ph}-z)^{-1}G_{\theta}(k_2)a^*(k_2)\ \dx k_1\dx k_2\nonumber\\
&=\int a^*(k_2) G_{\theta}(k_2)\overline{G_{\overline{\theta}}(k_1)}\bigl(e^{-\theta}H_{ph}+e^{-\theta}\omega(k_1)
+e^{-\theta}\omega(k_2)-z\bigr)^{-1}a(k_1)\ \dx k_1\dx k_2\nonumber\\
&\ +\int G_{\theta}(k)\overline{G_{\overline{\theta}}(k)}\bigl(e^{-\theta}H_{ph}+e^{-\theta}\omega(k)-z\bigr)^{-1}\ \dx k,\label{calc-ordering-3}
\end{align}
where the additional term results from applying the relation ${[a^*(k_2),a(k_1)]=\delta(k_2-k_1)\I_{\cF}}$ and explicitly solving one of the integrals. Since it does not include any creation nor annihilation operators, it is denoted as $W_{0,0}$ and is added to $F_0$. This yields the representation
\begin{equation}\label{def-h-eff-2}		
H_{\mathrm{eff}}=:H_{g_0,g}(z)=H_{g_0,0}(z)-g^2(W_{2,0}+W_{1,1}+W_{0,2}),
\end{equation}
where the term $H_{g_0,0}(z)$ corresponds to $F_0$ given by
\begin{equation}\label{def-f-0}		
H_{g_0,0}:=F_0=2-z+e^{-\theta}H_{ph}-g_0^2W_{0,0}
\end{equation}
and the operators $W_{2,0}$, $W_{1,1}$ and $W_{0,2}$ are given in the desired ordering by
\begin{align}		
W_{2,0}&=\int a^*(k_1)a^*(k_2)\Biggl[\frac{G_{\theta}(k_1)G_{\theta}(k_2)}{e^{-\theta}H_{ph}+e^{-\theta}\omega(k_2)}\Biggr]\ \dx k_1\dx k_2,\label{def-w-2-0}\\
W_{1,1}&=\int a^*(k) \Biggl[G_{\theta}(k)\overline{G_{\overline{\theta}}(\tilde{k})}\Bigl(\frac{1}{e^{-\theta}H_{ph}+e^{-\theta}\omega(k)+e^{-\theta}\omega(\tilde{k})-z}\nonumber\\
&\quad\quad\quad\quad\quad+\frac{1}{e^{-\theta}H_{ph}-z}\Bigr)\Biggr]a(k)\ \dx k\dx \tilde{k},\label{def-w-1-1}\\
W_{0,2}&=\int\Biggl[\frac{\overline{G_{\overline{\theta}}(\tilde{k}_1)}\ \overline{G_{\overline{\theta}}(\tilde{k}_2)}}{e^{-\theta}H_{ph}+e^{-\theta}\omega(\tilde{k}_1)}\Biggr]a(\tilde{k}_1)a(\tilde{k}_2)\ \dx\tilde{k}_1\dx\tilde{k}_2,\label{def-w-0-2}
\end{align}
with the corresponding $w_{m,n}$ highlighted by the large square brackets. Note that an additional parameter $g_0$ was introduced with $F_0$ in~\eqref{def-f-0}. This allows to separate the interaction terms including creation and annihilation operators from the term $W_{0,0}$ when needed. However, if not stated otherwise, $g_0=g$ is assumed. 
%
\begin{remark}
Note that the key symmetry identified in~\cite{BachBallesterosKoenenbergMenrath2017} also applies here. Consider the operator $S=\sigma_3\otimes(-1)^{\cN_{ph}}$ with $\sigma_3$ denoting the third Pauli matrix and $\cN_{ph}$ denoting the photon number operator on $\cF$. Similar to~\cite{BachBallesterosKoenenbergMenrath2017}, one finds that $[S,H_g(\theta)]=0$ using that the matrix occuring in the coupling function is off-diagonal. Denoting~${S_{ph}=(-1)^{\cN_{ph}}}$, one also observes~$[S_{ph},H_{g_0,g}]=0$. Note that, as the off-digonal form of the matrix was used multiple times identifying the individual terms of~$H_{g_0,g}$, the symmetry enters here implicitly in the form of the specific Feshbach-Schur map considered.
\end{remark}

\clearpage

\section{Existence of Resonances}\label{sect-resonances}
In this section, the main result of this paper is shown. We recall it once more:

\begingroup
\def\thetheorem{\ref*{the-main-result}}
\begin{theorem}[Existence of Resonances] \label{main-result} 
Let $\vartheta$ be small enough to satisfy the assumptions of Lemma~\ref{analytic-continuation}, i.e.,~\eqref{final-restriction-theta} holds, and choose the parameters $\rho_0,\gamma,g>0$ sufficiently small such that~\eqref{final-restriction-rho-0},~\eqref{final-restriction-gamma} and~\eqref{final-restriction-g} are satisfied. Then there exists an eigenvalue~${E_{\mathrm{res}}\in D(2,Cg)}$ of $H_g(\theta)$.
\end{theorem}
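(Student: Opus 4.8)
The plan is to carry out the multiscale (Pizzo-type) analysis announced in the introduction, with the essential new ingredient being the use of the Feshbach-Schur map from Section~\ref{sect-application-feshbach} to control the exponential growth of resolvent bounds that arises from the critical coupling function. The construction proceeds by an induction on the infrared cutoff scale $n$, producing along the way a sequence of approximate eigenvalues $(E_g^{(n)})_n$ of the regularized operators $H_g^{(n)}(\theta)$ together with rank-one projections $P^{(n)}$, and then passing to the limit.

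\textbf{Induction basis.} First I would treat $H_g^{(0)}(\theta)$, which carries a fixed infrared cutoff $\rho_0$. Since $H_0^{(0)}(\theta)$ is normal with an isolated eigenvalue at $E^{(-1)}=2$ (the cutoff creates the spectral gap sketched in Fig.~\ref{fig-spectrum-tilted-cutoff}), a Neumann series expansion of $(H_0^{(0)}(\theta)-z)^{-1}$, exactly as in Lemma~\ref{properties}(ii), gives invertibility of $H_g^{(0)}(\theta)-z$ on a contour $\gamma^{(0)}_{-1}$ of radius on the scale of $\rho_0$. Defining $P^{(0)}$ by the contour integral \eqref{def-p-0} and estimating $\|P^{(-1)}-P^{(0)}\|<1$ shows $P^{(0)}$ is rank one and yields an eigenvalue $E_g^{(0)}$ inside the contour; choosing $g,\gamma$ small lets us recenter the contour at $E_g^{(0)}$.

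\textbf{Induction step.} Assuming the data at scale $n$ (eigenvalue $E_g^{(n)}$, projection $P^{(n)}$, and a resolvent bound with controlled growth), I would pass to scale $n+1$ by working with $\tH_g^{(n)}(\theta)$ on $\cH^{(n+1)}$. The crucial point is that rather than estimating $(H_g^{(n+1)}(\theta)-z)^{-1}$ directly — which forces the geometric-series factors to accumulate into an exponentially growing bound — one applies the Feshbach-Schur map $F_{\P}$ (with $\P$ projecting onto the excited atomic state and the Fock vacuum of the newly-integrated modes) and establishes invertibility of the effective operator $H_{\mathrm{eff}}$ of the form \eqref{def-h-eff-2}. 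The denominator estimate \eqref{inverse-term-exists}–\eqref{value-for-delta}, giving $\|(e^{-\theta}H_{ph}-z)^{-1}\|\le (\vartheta d)^{-1}$, is what supplies the stronger bound: the extra factor $\vartheta$ beats the geometric factor $\gamma^{-1}$ once $\gamma$ is chosen small relative to $\vartheta$. By isospectrality of the Feshbach-Schur map one recovers invertibility of $H_g^{(n+1)}(\theta)-z$ off the eigenvalue. Then I would estimate the difference of resolvents of $H_g^{(n+1)}(\theta)$ and $\tH_g^{(n)}(\theta)$ (Theorem~\ref{estimate-difference-resolvents}), which, being the integrand in the difference of the associated Riesz projections, immediately gives $\|P^{(n+1)}-P^{(n)}\otimes P_{\Omega^{(n,n+1)}}\|$ with exponential decay in $n$ (Corollary~\ref{estimate-difference-projections}); this forces $P^{(n+1)}$ to be rank one and produces $E_g^{(n+1)}$ near $E_g^{(n)}$. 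The exponential bound on $|E_g^{(n+1)}-E_g^{(n)}|$ (Theorem~\ref{estimate-difference-energies}) and the norm bound on $(H_g^{(n+1)}(\theta)-z)^{-1}\overline{P^{(n+1)}}$ extended into the safety zone (Theorem~\ref{resolvent-norm-2}) close the induction.

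\textbf{Infrared limit.} Finally, with the exponential estimates in hand, the sequences $(E_g^{(n)})_n$ and $(P^{(n)}_\infty)_n$ (with $P^{(n)}_\infty$ as in \eqref{def-p-n-infty}) are Cauchy, hence convergent (Corollary~\ref{convergences}). Arguing as for the ground state in~\cite{BachBallesterosKoenenbergMenrath2017}, the limiting projection $\lim_n P^{(n)}_\infty$ is a nonzero (rank-one) projection onto an eigenspace of $H_g(\theta)$ with eigenvalue $E_{\mathrm{res}}=\lim_n E_g^{(n)}$, and since every $E_g^{(n)}$ lies within $O(g)$ of $2$ (starting from the basis estimate and summing the geometric bounds on the increments) we get $E_{\mathrm{res}}\in D(2,Cg)$; this is Theorem~\ref{limit-is-eigenvalue}, which is exactly the assertion. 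The main obstacle — and the heart of the paper — is the induction step: balancing the exponentially growing Neumann-series factor $\gamma^{-n}$ against the decay coming from the Feshbach-Schur denominator estimate, i.e.\ showing that the parameters $\vartheta,\rho_0,\gamma,g$ can be chosen (constraints \eqref{final-restriction-theta}, \eqref{final-restriction-rho-0}, \eqref{final-restriction-gamma}, \eqref{final-restriction-g}) so that all the bounds remain summable uniformly in $n$; everything else is a careful but routine adaptation of the ground-state multiscale scheme to the non-self-adjoint, complex-dilated setting.
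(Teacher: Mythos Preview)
Your outline is correct and follows essentially the same route as the paper's own proof, invoking the same chain of intermediate results (Theorems~\ref{estimate-difference-resolvents}, \ref{estimate-difference-energies}, \ref{resolvent-norm-2}, Corollaries~\ref{estimate-difference-projections}, \ref{convergences}, and Theorem~\ref{limit-is-eigenvalue}) in the same order. One small correction worth noting: the Feshbach projection used in the induction step is $\P=P_\uparrow\otimes\I_{\cF}$ as in~\eqref{def-projection-p}, not a projection involving the Fock vacuum of the newly integrated modes, and the mechanism that closes the induction is not that ``$\vartheta$ beats $\gamma^{-1}$'' but that the Neumann correction at step $n$ picks up a factor $\rho_n$ from integrating $|G_\theta|^2$ over the thin annulus $B_n\setminus B_{n+1}$ (see~\eqref{calc-invertibility-n+1-3}--\eqref{calc-invertibility-n+1-7}), which beats the growth of $C_n$ through the relation $\rho_n^{1/4}C_n<1$ enforced by~\eqref{parameter-restrictions-2}.
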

\addtocounter{theorem}{-1}
\endgroup

Since the proof is based on Pizzo's method, constructing a sequence of infrared-regularized Hamiltonians as well as their respective eigenvalues and the projections to the corresponding eigenspaces constitutes the core of the proof. They are established using an inductive argument similar to~\cite{BachBallesterosPizzo2017}. The respective parts are carried out in Sections~\ref{sect-ind-basis},~\ref{sect-ind-hypothesis} and~\ref{sect-ind-step}. The proof of Theorem~\ref{main-result} is then carried out in Section~\ref{sect-main}.

\subsection{Induction Basis: Eigenvalue of $H_g^{(0)}$}\label{sect-ind-basis}
In this section, we consider an initial fixed infrared cutoff $\rho_0$ and show that the corresponding operator~$H_g^{(0)}$ posesses a single, non-degenerate eigenvalue $E^{(0)}$ in the complex vicinity of $2$ using an argument similar to~\cite[Section~5]{BachBallesterosPizzo2017}. Since the analyticity required by Lemma~\ref{analytic-continuation} does not allow for a sharp ultraviolet cutoff in the coupling function, the operator~$H_g^{(0)}$ already includes a part of the interaction. The resolvent of the operator, therefore, is constructed by Neumann series expansion, which is the content of Lemma~\ref{invertibility-0}. Note that Lemma~\ref{properties} stays valid when the infrared cutoff is included. We then consider the projection~$P^{(0)}$ and show that it is rank-one by comparing it to the projection $P^{(-1)}$ mapping to the eigenspace of the unperturbed operator. This implies the existence of a single, simple eigenvalue, thus establishing the induction basis. The section is concluded with some general estimates for the norm of the corresponding eigenvector as well as the projection itself as a basis for further calculations. The following subsets of the complex plane are used in the analysis. Recalling that $\theta=i\vartheta$ here, define
\begin{align}		
\mathcal{A}&:=\Bigl(\bigl[2-\tfrac{\rho_0}{4},2+\tfrac{\rho_0}{4}\bigr]+i\bigl[-\tfrac{\sin(\vartheta)}{4}\rho_0,\tfrac{\sin(\vartheta)}{4}\rho_0\bigr]\Bigr)-e^{-i\vartheta}[0,\infty),\label{def-original-set}\\
\mathcal{A}_0&:=\mathcal{A}\setminus D\bigl(2,\tfrac{\sin(\vartheta)}{16}\rho_0\bigr).\label{def-a-0}
\end{align}
The set $\mathcal{A}$ describes a rectangle centered about $E^{(-1)}=2$ in the complex plane and the tail that it sweeps out when moving it continuously in the direction $-e^{-i\vartheta}$. For the set~$\mathcal{A}_0$, the eigenvalue $E^{(-1)}$ and a small neighborhood of it are excluded. The setup is sketched in the following image.
\clearpage
\begin{figure}[htb]	
\centering
\input{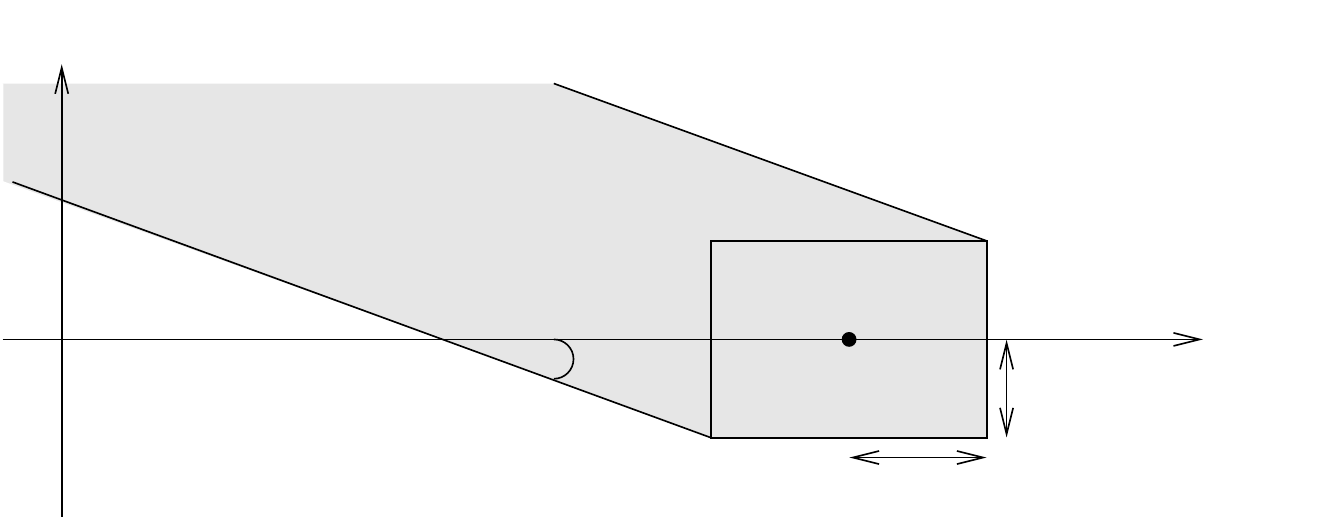_t}
\caption{The set $\mathcal{A}$ in the complex plane}\label{fig-set-a-0}
\end{figure}

The next step is the construction of the resolvent. Note that the bound for the coupling constant assumed in the lemma explicitly depends on $\rho_0$.
\begin{lemma}\label{invertibility-0}		
Let $g$  be small enough such that $Cg<\frac{\sin(\vartheta)}{16}$ with the constant $C$ introduced in Lemma \ref{properties} and assume that
\begin{equation}\label{g-restriction-1}	
g<\frac{1}{6}\Biggl[\frac{8}{\sin(\vartheta)\sqrt{2\rho_0}}\Bigl(\Big\|\frac{G_{\theta}}{\sqrt{\omega}}\Big\|_{L^2}+\frac{1}{\sqrt{\rho_0}}\|G_{\theta}\|_{L^2}\Bigr)\Biggr]^{-1}.
\end{equation}
Then the resolvent of $H_g^{(0)}$ exists for all $z\in\mathcal{A}_0$ and obeys the bound
\begin{equation}\label{norm-resolvent-0}	
\big\|(H_g^{(0)}-z)^{-1}\big\|\leq\frac{6}{5}\ \frac{1}{|2-z|}.
\end{equation}
\end{lemma}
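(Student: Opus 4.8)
The plan is to construct the resolvent of $H_g^{(0)}=H_0^{(0)}+gW_\theta^{(0)}$ via a Neumann series around the resolvent of the normal operator $H_0^{(0)}$, exactly as in part (ii) of Lemma~\ref{properties}, but now tracking constants explicitly on the scale $\rho_0$. First I would fix $z\in\mathcal{A}_0$ and write formally
\begin{equation}
(H_g^{(0)}-z)^{-1}=(H_0^{(0)}-z)^{-1}\sum_{j=0}^{\infty}\bigl[-gW_\theta^{(0)}(H_0^{(0)}-z)^{-1}\bigr]^j,
\end{equation}
which converges in operator norm provided $\varepsilon:=g\,\|W_\theta^{(0)}(H_0^{(0)}-z)^{-1}\|<1$. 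The core of the argument is thus to bound this quantity by $\tfrac16$ uniformly for $z\in\mathcal{A}_0$, since then the geometric series gives $\|(H_g^{(0)}-z)^{-1}\|\le\frac{1}{1-1/6}\|(H_0^{(0)}-z)^{-1}\|=\frac65\|(H_0^{(0)}-z)^{-1}\|$, and it remains to check $\|(H_0^{(0)}-z)^{-1}\|\le|2-z|^{-1}$.

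For the factorization I would interpose $(H_{ph}^{(0)}+\rho_0)^{\pm 1/2}$ (these commute with each other and with $H_0^{(0)}$), writing
\begin{equation}
W_\theta^{(0)}(H_0^{(0)}-z)^{-1}=\Bigl[W_\theta^{(0)}(H_{ph}^{(0)}+\rho_0)^{-\frac12}\Bigr]\Bigl[(H_{ph}^{(0)}+\rho_0)^{\frac12}(H_0^{(0)}-z)^{-1}\Bigr].
\end{equation}
The first bracket is controlled by Lemma~\ref{standard-estimate}: since $G_\theta^{(0)}$ is supported in $\R^3\setminus B_0$ where $\omega\ge\rho_0$, one gets $\|W_\theta^{(0)}(H_{ph}^{(0)}+\rho_0)^{-1/2}\|\le 2\bigl(\|G_\theta/\sqrt\omega\|_{L^2}+\rho_0^{-1/2}\|G_\theta\|_{L^2}\bigr)$. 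For the second bracket, $H_0^{(0)}$ is normal, so by the functional calculus the norm equals $\sup_{r\ge0}|r+\rho_0|^{1/2}\,|2+e^{-\theta}r-z|^{-1}$ on the relevant spectral block (the other block, with $H_{at}$-eigenvalue $0$, gives a strictly smaller contribution for $z$ near $2$). Writing $r+\rho_0\le \rho_0\bigl(1+|2+e^{-\theta}r-z|/\dist(z,\sigma(H_0^{(0)}))\bigr)^2$-type estimates — more precisely splitting $(r+\rho_0)^{1/2}/|2+e^{-\theta}r-z|$ using that $\dist(z,2+e^{-\theta}[0,\infty))$ is comparable to $|2-z|$ for $z\in\mathcal{A}_0$ — one obtains a bound of the form $C'(\rho_0)\,|2-z|^{-1/2}$, and the worst case $|2-z|\sim\tfrac{\sin(\vartheta)}{16}\rho_0$ on $\mathcal A_0$ produces the factor $\tfrac{8}{\sin(\vartheta)\sqrt{2\rho_0}}$ appearing in~\eqref{g-restriction-1}. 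Multiplying the two brackets and imposing~\eqref{g-restriction-1} forces $\varepsilon\le\tfrac16$.

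The main obstacle is the second bracket: unlike in Lemma~\ref{properties}, where only finiteness was needed, here one needs the sharp geometric dependence on $\rho_0$ and $\vartheta$, so the elementary one-variable optimization of $(r+\rho_0)^{1/2}/|2+e^{-\theta}r-z|$ over $r\ge0$ must be done carefully, using that the tilted ray $2+e^{-\theta}[0,\infty)$ stays at distance $\gtrsim|2-z|$ from every $z\in\mathcal A_0$ (this is where excluding $D(2,\tfrac{\sin\vartheta}{16}\rho_0)$ and the assumption $Cg<\tfrac{\sin\vartheta}{16}$ are used, guaranteeing $z$ is neither too close to $2$ nor to the ray). Once $\varepsilon\le\tfrac16$ is secured, the resolvent bound~\eqref{norm-resolvent-0} follows from the Neumann series together with $\|(H_0^{(0)}-z)^{-1}\|=\sup_{r\ge0}\max\{|2+e^{-\theta}r-z|^{-1},|e^{-\theta}r-z|^{-1}\}\le|2-z|^{-1}$, the last inequality again being the geometric fact that $2$ is the nearest point of $\sigma(H_0^{(0)})$ to $z\in\mathcal A_0$.
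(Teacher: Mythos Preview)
Your overall strategy matches the paper's exactly: Neumann series around $(H_0^{(0)}-z)^{-1}$, factorization through $(H_{ph}^{(0)}+\rho_0)^{\pm1/2}$, Lemma~\ref{standard-estimate} for the first factor, functional calculus for the normal operator $H_0^{(0)}$ for the second, then summing the geometric series and reading off~\eqref{norm-resolvent-0}. The only substantive divergence is in how you control the second factor, and there the sketch goes astray.

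The paper does \emph{not} establish a $z$-dependent bound of the shape $C'(\rho_0)\,|2-z|^{-1/2}$ and then insert the worst case. Instead it proves directly, by comparing imaginary parts against the lower corners $z_1,z_2$ of the rectangle, that for every $z\in\mathcal A_0$ and every $r\ge\rho_0$,
\[
|2+e^{-i\vartheta}r-z|\ \ge\ \tfrac{\sin\vartheta}{4}\,(r+\rho_0)
\qquad\text{and}\qquad
|e^{-i\vartheta}r-z|\ \ge\ \tfrac{\sin\vartheta}{4}\,(r+\rho_0),
\]
so that both entries of the maximum are bounded by $\tfrac{4}{\sin\vartheta}\,(r+\rho_0)^{-1/2}\le\tfrac{4}{\sin\vartheta}\,(2\rho_0)^{-1/2}$, \emph{uniformly in $z$}. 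This is where the factor $\tfrac{8}{\sin(\vartheta)\sqrt{2\rho_0}}$ in~\eqref{g-restriction-1} actually originates (the extra $2$ is the constant from Lemma~\ref{standard-estimate}), not from worst-casing $|2-z|$. Your proposed route via $|2-z|^{-1/2}$ would at best recover the right order in $\rho_0$ but with a worse power of $\sin\vartheta$, so it does not reproduce the stated hypothesis; and the heuristic you offer for it (``distance to the tilted ray comparable to $|2-z|$'') is not the mechanism that delivers the constant.

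Separately, you cannot dismiss the $H_{at}$-eigenvalue-$0$ block as ``strictly smaller for $z$ near $2$'': the set $\mathcal A_0$ contains an unbounded tail in the direction $-e^{-i\vartheta}$, and for $z$ in that tail the ray $e^{-i\vartheta}[\rho_0,\infty)$ is no farther from $z$ than the ray $2+e^{-i\vartheta}[\rho_0,\infty)$. The paper treats both blocks symmetrically with the two displayed inequalities above and never compares them.
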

 
\begin{proof}		
First, the resolvent of $H_g^{(0)}$ is expanded, which yields
\begin{align}		
\bigl(H_g^{(0)}-z\bigr)^{-1}&=\sum_{j=0}^{\infty}\bigl(H_0^{(0)}-z\bigr)^{-1}\bigl[-gW_{\theta}^{(0)}\bigl(H_0^{(0)}-z\bigr)^{-1}\bigr]^j\label{calc-invertibility-0-1}
\end{align}
using a formal Neumann series expansion. Expanding the term by inserting the identity between the two operator involved yields
\begin{align}		
g\cdot\big\|W_{\theta}^{(0)}\bigl(H_0^{(0)}-z\bigr)^{-1}\big\|&=g\cdot\big\|W_{\theta}^{(0)}\bigl(H_{ph}^{(0)}+\rho_0\bigr)^{-\frac{1}{2}}\big\|\cdot\Big\|\frac{\bigl(H_{ph}^{(0)}+\rho_0\bigr)^{\frac{1}{2}}}{H_0^{(0)}-z}\Big\|,\label{calc-invertibility-0-2}
\end{align}
where the commuting operators are once more written as fractions. The first term may be estimated using Lemma \ref{standard-estimate}, giving the bound
\begin{equation}\label{calc-invertibility-0-3}		
\big\|W_{\theta}^{(0)}(H_{ph}^{(0)}+\rho_0)^{-\frac{1}{2}}\big\|\leq2\Bigl(\Big\|\frac{G_{\theta}}{\sqrt{\omega}}\Big\|_{L^2}+\rho_0^{-\frac{1}{2}}\|G_{\theta}\|_{L^2}\Bigr).
\end{equation}
For the second term note that $H_0^{(0)}$ is normal and one may, therefore, write
\begin{align}	
\Big\|\frac{(H_{ph}^{(0)}+\rho_0)^{\frac{1}{2}}}{H_0^{(0)}-z}\Big\|
&\leq\sup_{r\geq\rho_0}\big|(r+\rho_0)^{\frac{1}{2}}\bigl(H_{at}+e^{-i\vartheta}r-z\bigr)^{-1}\big|\nonumber\\
&=\max\Big\{\sup_{r\geq\rho_0}\frac{(r+\rho_0)^{\frac{1}{2}}}{|2+e^{-i\vartheta}r-z|},\ \sup_{r\geq\rho_0}\frac{(r+\rho_0)^{\frac{1}{2}}}{|e^{-i\vartheta}r-z|}\Big\}.\label{calc-invertibility-0-4}
\end{align}

To estimate this term, explicit bounds for $|2+e^{-i\vartheta}r-z|$ and $|e^{-i\vartheta}r-z|$ in terms of $r$ are needed. Keeping the picture of $\mathcal{A}_0$ in the complex plane in mind, note that the terms in question are given by the distance between some $z$ in the set and some $r$ located in the ray $e^{-i\vartheta}[\rho_0,\infty)$ or $2+e^{-i\vartheta}[\rho_0,\infty)$ respectively. We denote the right and the left bottom corner of the rectangle by $z_1$ and~$z_2$, i.e., we consider the complex  numbers given by
\begin{equation}\label{def-z-1-z-2}		
z_1:=2+\tfrac{\rho_0}{4}-i\tfrac{\sin(\vartheta)}{4},\quad z_2:=2-\tfrac{\rho_0}{4}-i\tfrac{\sin(\vartheta)}{4}.
\end{equation}
Before the estimates are considered, the setup is sketched in the following image.

\begin{figure}[htb]	
\centering
\input{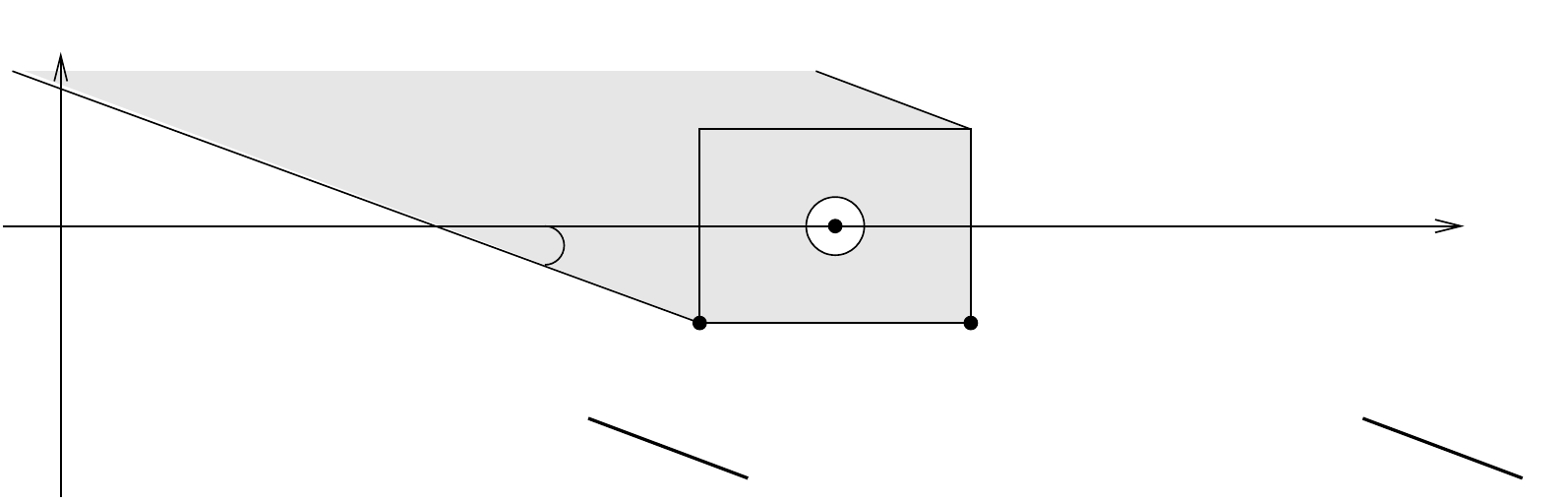_t}
\caption{The basis for the geometric argument}\label{fig-geometry-estimate-1}
\end{figure}

For $z\in\mathcal{A}_0$ and $r\geq\rho_0$, direct calculation yields
\begin{align}		
|2+e^{-i\vartheta}r-z|&\geq|2+e^{-i\vartheta}r-z_1|\geq\frac{\sin(\vartheta)}{4}(r+\rho_0).\label{calc-invertibility-0-5}
\end{align}
The estimate for the second term appearing in the maximum in~\eqref{calc-invertibility-0-4} is derived similarly, implying
\begin{equation}\label{calc-invertibility-0-6}		
|e^{-i\vartheta}r-z|\geq|-ir\sin(\vartheta)-\mathrm{Im}(z_2)|\geq\frac{\sin(\vartheta)}{4}(r+\rho_0).
\end{equation}
This leads to the bound
\begin{equation}\label{calc-invertibility-0-7}
\max\Big\{\sup_{r\geq\rho_0}\frac{(r+\rho_0)^{\frac{1}{2}}}{|2+e^{-i\vartheta}r-z|},\ \sup_{r\geq\rho_0}\frac{(r+\rho_0)^{\frac{1}{2}}}{|e^{-i\vartheta}r-z|}\Big\}\leq\frac{4}{\sin(\vartheta)\sqrt{2\rho_0}}.
\end{equation}
One may, therefore, assemble a bound for~\eqref{calc-invertibility-0-2} from~\eqref{calc-invertibility-0-3} and~\eqref{calc-invertibility-0-7}, which yields a finite multiple of the coupling constant. Since $g$ is assumed to satisfy \eqref{g-restriction-1}, this bound is smaller than $\frac{1}{6}$, thus implying the convergence of the Neumann series. The explicit formula given in the lemma now follows by calculating the value of the geometric series and applying the spectral theorem to the normal operator $H_0^{(0)}$.
\end{proof}

With the resolvent of $H_g^{(0)}$ well-defined, we define the first projection
\begin{equation}\label{def-p0}		
P^{(0)}:=\frac{-1}{2\pi i}\int_{\gamma^{(0)}_{-1}}\frac{\dx z}{H_g^{(0)}-z},
\end{equation}
where $\gamma^{(0)}_{-1}$ is the circle with radius $r^{(0)}:=\frac{\sin(\vartheta)}{8}\rho_0$ and center $E^{(-1)}=2$.  Considering~$\Psi^{(0)}:=P^{(0)}\Psi^{(-1)}$, where~$\Psi^{(-1)}$ is the eigenvector corresponding to the eigenvalue~$E^{(-1)}$ of $H_0^{(0)}$, now yields a vector in the eigenspace of $H_g^{(0)}$. Note that
\begin{equation}\label{def-difference-p-0-p-1}		
\big\|\Psi^{(0)}-\Psi^{(-1)}\big\|=\big\|\bigl(P^{(0)}-P^{(-1)}\bigr)\Psi^{(-1)}\big\|\leq\big\|\bigl(P^{(0)}-P^{(-1)}\bigr)\big\|\cdot\big\|\Psi^{(-1)}\big\|.
\end{equation}
By the the second resolvent identity, the bounds from the previous lemma and the spectral theorem, the difference of the projections can be estimated as
\begin{align}		
\big\|\bigl(P^{(0)}-P^{(-1)}\bigr)\big\|&=\Big\|\frac{-1}{2\pi i}\int_{\gamma^{(0)}_{-1}}\Bigl(\bigl(H_g^{(0)}-z\bigr)^{-1}-\bigl(H_0^{(0)}-z\bigr)^{-1}\Bigr)\dx z\Big\|\nonumber\\
&=\frac{1}{2\pi}\ \Big\|\int_{\gamma^{(0)}_{-1}}\Bigl(\bigl(H_g^{(0)}-z\bigr)^{-1}\bigl(-gW_{\theta}^{(0)}\bigr)\bigl(H_0^{(0)}-z\bigr)^{-1}\Bigr)\dx z\Big\|\nonumber\\
&\leq\frac{1}{2\pi}\ 2\pi\ \frac{\sin(\vartheta)}{8}\rho_0\sup_{z\in\gamma^{(0)}_{-1}}\Bigl\{\big\|(H_g^{(0)}-z)^{-1}\big\|\cdot\big\|-gW_{\theta}^{(0)}\bigl(H_0^{(0)}-z\bigr)^{-1}\big\|\Bigr\}\nonumber\\
&\leq\frac{1}{6}\Bigl( \frac{1}{1-\frac{1}{6}}\Bigr) \frac{\sin(\vartheta)}{8}\rho_0\sup_{z\in\gamma^{(0)}_{-1}}\Bigl\{\big\|(H_0^{(0)}-z)^{-1}\big\|\Bigr\}=\frac{1}{5}.\label{calc-difference-p-0-p-1}
\end{align}
Note that the power of $\rho_0$ exactly cancels out, which is a consequence of considering the critical coupling function. If the model was slightly more regular, i.e., the coupling function behaved like~${|k|^{-1/2+\mu}}$ as $k\rightarrow0$ for some $\mu>0$, one would gain a positive power of $\rho_0$ here. The above calculation implies, in particular, that the distance between the projections is strictly less than one, thus allowing to identify $P^{(0)}$ as a rank-one projection. Applying~\eqref{calc-difference-p-0-p-1} to the eigenvectors yields
\begin{equation}\label{bounds-p-0}		
1-\frac{1}{5}\leq\big\|\Psi^{(0)}\big\|\leq1+\frac{1}{5},
\end{equation}
i.e., $\Psi^{(0)}$ is not zero. Together, these results establish that there is a unique eigenvalue~$E_g^{(0)}$ of~$H_g^{(0)}$ corresponding to the eigenvector $\Psi^{(0)}$ which is located in the interior of~$\gamma^{(0)}_{-1}$. In fact, one can localize it to $D(2,\frac{\sin(\vartheta)}{16}\rho_0)$ since the rest of the interior of the curve is included in the resolvent set of $H_g^{(0)}$. Starting from the eigenvalue equation
\begin{equation}\label{eigenvalue-equation-0}
{H_g^{(0)}\Psi^{(0)}=E_g^{(0)}\Psi^{(0)}},
\end{equation}
the value of $E_g^{(0)}$ can explicitly be written as
\begin{equation}\label{explicit-eigenvalue-0}		
E_g^{(0)}=\frac{\big\langle\Psi^{(-1)}\big|H_g^{(0)}\Psi^{(0)}\big\rangle}{\big\langle\Psi^{(-1)}\big|\Psi^{(0)}\big\rangle}.
\end{equation}
This completes the induction basis. The section is concluded with another estimate which is key to the following inductive argument as it allows to drop the additional restriction of having a minimal distance between~$z$ and the energy considered.
\begin{lemma}\label{resolvent-norm-0}	
Let $z\in\mathcal{A}$ and assume that the parameters satisfy the assumptions of Lemma~\ref{invertibility-0}. Then the following norm bound holds
\begin{equation}\label{formula-resolvent-norm-0}
\big\|\bigl(H_g^{(0)}-z\bigr)^{-1}\overline{P^{(0)}}\big\|\leq 43\ \frac{1}{\frac{\sin(\vartheta)}{2}\rho_0+|z-E_g^{(0)}|}.
\end{equation}
\end{lemma}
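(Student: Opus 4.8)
The plan is to upgrade the resolvent bound of Lemma~\ref{invertibility-0}, which is only available on $\mathcal{A}_0$, to a bound valid on all of $\mathcal{A}$ at the price of inserting $\overline{P^{(0)}}$ on the right. The point is that $P^{(0)}$, being the Riesz projection attached to $E_g^{(0)}$, commutes with $H_g^{(0)}$, so that $\Ran(P^{(0)})$ and $\Ran(\overline{P^{(0)}})$ are invariant and, writing $B$ for the restriction of $H_g^{(0)}$ to $\Ran(\overline{P^{(0)}})$, one has
\[
(H_g^{(0)}-z)^{-1}\overline{P^{(0)}}=(B-z)^{-1}\overline{P^{(0)}},\qquad \sigma(B)=\sigma\bigl(H_g^{(0)}\bigr)\setminus\{E_g^{(0)}\}.
\]
Since $E_g^{(0)}$ was localized in $D\bigl(2,\tfrac{\sin(\vartheta)}{16}\rho_0\bigr)$ and the remainder of the interior of $\gamma^{(0)}_{-1}$ lies in the resolvent set of $H_g^{(0)}$, the closed set $\sigma(B)$ has positive distance from the compact disk $\overline{D\bigl(2,\tfrac{\sin(\vartheta)}{8}\rho_0\bigr)}$. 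Hence $F(z):=(H_g^{(0)}-z)^{-1}\overline{P^{(0)}}$ is a $\mathcal{B}(\cH^{(0)})$-valued \emph{holomorphic} function on a neighbourhood of that disk: multiplying by $\overline{P^{(0)}}$ has removed the pole at $E_g^{(0)}$.

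First I would dispose of $z\in\mathcal{A}_0$. There Lemma~\ref{invertibility-0} gives $\|(H_g^{(0)}-z)^{-1}\|\le\tfrac{6}{5}|2-z|^{-1}$, and from $\|P^{(0)}-P^{(-1)}\|\le\tfrac15$ in~\eqref{calc-difference-p-0-p-1} together with $\|P^{(-1)}\|=1$ one gets $\|P^{(0)}\|\le\tfrac65$ and therefore $\|\overline{P^{(0)}}\|\le1+\|P^{(0)}\|\le\tfrac{11}{5}$. It then remains to compare $|2-z|^{-1}$ with $\bigl(\tfrac{\sin(\vartheta)}{2}\rho_0+|z-E_g^{(0)}|\bigr)^{-1}$; using $|2-E_g^{(0)}|<\tfrac{\sin(\vartheta)}{16}\rho_0\le|2-z|$ on $\mathcal{A}_0$ one finds $|z-E_g^{(0)}|\le2|2-z|$ and $\tfrac{\sin(\vartheta)}{2}\rho_0\le8|2-z|$, so $\tfrac{\sin(\vartheta)}{2}\rho_0+|z-E_g^{(0)}|\le10|2-z|$, and the asserted bound follows on $\mathcal{A}_0$ with constant $\tfrac{66}{25}\cdot10<43$.

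For $z$ in the remaining disk $D\bigl(2,\tfrac{\sin(\vartheta)}{16}\rho_0\bigr)$ I would apply the maximum modulus principle to the holomorphic function $F$ on $D\bigl(2,\tfrac{\sin(\vartheta)}{8}\rho_0\bigr)$, which yields $\|F(z)\|\le\max_{z'\in\gamma^{(0)}_{-1}}\|F(z')\|$. The circle $\gamma^{(0)}_{-1}$ has radius $\tfrac{\sin(\vartheta)}{8}\rho_0$, hence lies inside the rectangle defining $\mathcal{A}$ and outside $D\bigl(2,\tfrac{\sin(\vartheta)}{16}\rho_0\bigr)$, i.e., in $\mathcal{A}_0$; there the estimate of the previous paragraph applies and gives $\max_{\gamma^{(0)}_{-1}}\|F\|\le\tfrac{66}{25}\cdot8\,(\sin(\vartheta)\rho_0)^{-1}$. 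On the small disk one has $|z-E_g^{(0)}|<\tfrac{\sin(\vartheta)}{8}\rho_0$, so the right-hand side of~\eqref{formula-resolvent-norm-0} is bounded below by $43\bigl(\tfrac{5\sin(\vartheta)}{8}\rho_0\bigr)^{-1}$, which exceeds the circle bound; this completes the argument.

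The main obstacle is conceptual rather than computational: one must notice that inserting $\overline{P^{(0)}}$ makes $(H_g^{(0)}-z)^{-1}\overline{P^{(0)}}$ holomorphic \emph{through} the point $E_g^{(0)}$, which is precisely what allows the maximum principle to deliver a bound that stays finite, of order $(\sin(\vartheta)\rho_0)^{-1}$, near the eigenvalue, replacing the factor $|2-z|^{-1}$ of Lemma~\ref{invertibility-0} that would otherwise blow up as $z\to E_g^{(0)}$. Everything else is the elementary geometric bookkeeping on $\mathcal{A}_0$ sketched above; the only subtlety worth flagging is that $P^{(0)}$ is not an orthogonal projection, so $\|\overline{P^{(0)}}\|$ must be controlled through the bound $\|P^{(0)}-P^{(-1)}\|<1$ rather than taken to equal $1$.
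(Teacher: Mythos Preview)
Your proposal is correct and follows essentially the same approach as the paper: both split into the region $\mathcal{A}_0$, where Lemma~\ref{invertibility-0} combined with $\|\overline{P^{(0)}}\|\le\tfrac{11}{5}$ gives the bound directly, and the small disk around $2$, where the maximum modulus principle is applied to the holomorphic quantity $(H_g^{(0)}-z)^{-1}\overline{P^{(0)}}$. The only cosmetic differences are that the paper phrases the maximum principle via the scalar functions $z\mapsto\langle\psi,(H_g^{(0)}-z)^{-1}\overline{P^{(0)}}\phi\rangle$ and uses the disk of radius $\tfrac{\sin(\vartheta)}{16}\rho_0$, whereas you invoke it directly for the operator-valued map on the slightly larger disk of radius $\tfrac{\sin(\vartheta)}{8}\rho_0$; neither choice affects the argument.
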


\begin{proof}
In the case $|z-2|>\frac{\sin(\vartheta)}{2}\rho_0$, the norm in question can directly be calculated using Lemma~\ref{invertibility-0} and the bound obtained for the norm of the projection~\eqref{bounds-p-0}. This yields
\begin{align}
\big\|\bigl(H_g^{(0)}-z\bigr)^{-1}\overline{P^{(0)}}\big\|&\leq \frac{6}{5}\ \frac{1}{\frac{\sin(\vartheta)}{16}\rho_0}\cdot \frac{11}{5}\leq \frac{66}{25}\ \frac{2}{\frac{\sin(\vartheta)}{16}\rho_0+\frac{1}{2}(|z-2|+|2-E_g^{(0)}|)}\nonumber\\
&\leq 43\ \frac{1}{\frac{\sin(\vartheta)}{2}\rho_0+|z-E_g^{(0)}|},\label{calc-resolvent-norm-0-1}
\end{align}
which is exactly~\eqref{formula-resolvent-norm-0}. The values for $z$ left to consider lie in the closed circle of radius~$\frac{\sin(\vartheta)}{16}\rho_0$ around~$2$. As~$\overline{P^{(0)}}$ maps to the complement of the eigenspace corresponding to $E_g^{(0)}$ and the eigenvalue is the only spectral point in $\mathcal{A}$, the complex-valued function
\begin{equation}\label{calc-resolvent-norm-0-2}
D\bigl(2,\tfrac{\sin(\vartheta)}{16}\rho_0\bigr)\mapsto a(z):=\big\langle\psi,\bigl(H_g^{(0)}-z\bigr)^{-1}\overline{P^{(0)}}\phi\big\rangle\in\C
\end{equation}
is analytic for fixed $\psi,\phi\in\cH^{(0)}$. One may, therefore, apply the maximum modulus principle and find that the maximal value of $|a(z)|$ is attained for some~$z$ satisfying~$|z|=\frac{\sin(\vartheta)}{16}\rho_0$. Applying the Cauchy-Schwarz inequality yields
\begin{equation}
\max_{|z|\leq\frac{\sin(\vartheta)}{16}\rho_0}|a(z)|\leq \|\psi\|\cdot \frac{6}{5}\ \frac{1}{\frac{\sin(\vartheta)}{16}\rho_0}\cdot \frac{11}{5}\ \|\phi\|,\label{calc-resolvent-norm-0-3}
\end{equation}
as Lemma~\ref{resolvent-norm-1} now is applicable and calculating the norm of the operator becomes similar to the first case. In particular, a similar calculation to~\eqref{calc-resolvent-norm-0-1} now yields~\eqref{formula-resolvent-norm-0} for~$|z-2|\leq\frac{\sin(\vartheta)}{2}\rho_0$, which implies the claim of the lemma.
\end{proof}
\clearpage

\subsection{Induction Hypothesis}\label{sect-ind-hypothesis}
Fix some $n\in\N_0$. For the induction hypothesis assume the existence of the following objects for any index $m\leq n$. First, suppose that there are simple eigenvalues $E_g^{(m)}$ of~$H_g^{(m)}$ that satisfy
\begin{equation}\label{ind-hyp-energies}		
|E_g^{(m)}-E_g^{(m-1)}|\leq 9\sqrt{\pi}C_f\cdot g\rho_{m-1}.
\end{equation}
Here, it is $C_f:=\max_{|k|\leq\rho_0}|f(k)|$ with $f$ introduced in~\eqref{coupling-function-general}. Next, assume that there are subsets $\mathcal{A}_m$ of the complex plane given by
\begin{equation}\label{ind-hyp-sets}		
\mathcal{A}_m:=\mathcal{A}\setminus\Bigl\{E_g^{(m)}+r-is\Big|r\in\R,s\in\bigl(\tfrac{\sin(\vartheta)}{4}\rho_{m},\infty\bigr)\Bigr\}
\end{equation}
with $E_g^{(m)}$ being the only spectral point of $H_g^{(m)}$ in the interior of $\mathcal{A}_m$ such that the operator $(H_g^{(m)}-z)$ is invertible for $z\in\mathcal{A}_m\setminus E_g^{(m)}$. The set $\mathcal{A}$ is taken from \eqref{def-original-set} and the sides of the rectangle parallel to the imaginary axis are cut such that the distance between $E_g^{(m)}$ and the lower edge matches the scale of $\rho_{m}$.

\begin{figure}[htb]	
\centering
\input{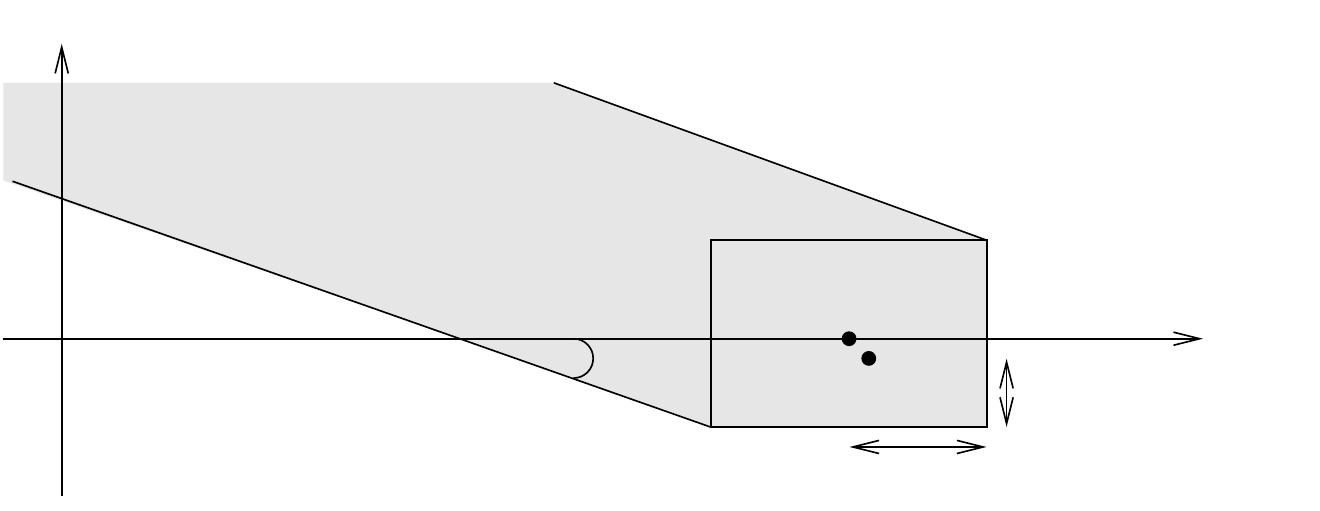_t}
\caption{The set $\mathcal{A}_m$ in the complex plane}\label{fig-set-a-n}
\end{figure}

Further, assume the existence of a projection $P^{(m)}$ for $m\leq n$ given by
\begin{equation}\label{ind-hyp-projections-1}	
P^{(m)}:=\frac{-1}{2\pi i}\int_{\gamma^{(m)}_m}\frac{1}{H_g^{(m)}-z}\ \dx z,
\end{equation}
where $\gamma^{(m)}_m$ is the curve describing the circle with radius $r^{(m)}:=\frac{\sin(\vartheta)}{8}\rho_m$ around $E_g^{(m)}$ in the complex plane. The projections are rank-one and to satisfy the following estimate
\begin{equation}\label{ind-hyp-projections-2}	
\big\|P^{(m)}-P^{(m-1)}\big\|\leq \frac{100}{\sin(\vartheta)}\bigl(\rho_{m-1}\bigr)^{\frac{1}{4}}.
\end{equation}
Finally, suppose that for $z\in\mathcal{A}_m$ the following norm bound holds true
\begin{equation}\label{ind-hyp-resolvent-norm}	
\big\|\bigl(H_g^{(m)}-z\bigr)^{-1}\overline{P^{(m)}}\big\|\leq \frac{C_m}{\frac{\sin(\vartheta)}{2}\rho_m+|z-E_g^{(m)}|},
\end{equation}
where $C_0=43$ can be taken from Lemma~\ref{resolvent-norm-0} and the constant for $m\geq1$ is given by
\begin{equation}\label{ind-hyp-constants}
C_m:=43\cdot\Bigl(\frac{1120}{\sin(\vartheta)}+\frac{600}{\vartheta d\cdot\sin(\vartheta)}\Bigr)^m.
\end{equation}

\clearpage

\subsection{Induction Step: Construction of the Next Eigenvalue}\label{sect-ind-step}
The task for the induction step is the construction of the eigenvalue $E_g^{(n+1)}$ of $H_g^{(n+1)}$, the set $\mathcal{A}_{n+1}$ and the projection $P^{(n+1)}$ as well as establishing \eqref{ind-hyp-energies},\eqref{ind-hyp-projections-2}, and \eqref{ind-hyp-resolvent-norm} for~${m=n+1}$. As a key tool we recall the Feshbach-Schur map introduced in Section~\ref{sect-application-feshbach} and its isospectrality. The approach taken here is similar to the one used in~\cite{BachBallesterosPizzo2017}. The calculations, however, are a lot more explicit due to the form of the spin-boson Hamiltonian. Throughout this section, choose $g$ such that the assumptions of Lemma~\ref{invertibility-0} are satisfied. First, the invertibility of $H_g^{(n+1)}$ is considered and the resolvent is constructed using Neumann series expansion. Since $H_g^{(n)}$ and $H_g^{(n+1)}$ do not act on the same domain,~$\tH_g^{(n)}$ is considered as an intermediate step. Further, define

\begin{equation}\label{def-a-tilde}	
\widetilde{\mathcal{A}}_n:=\mathcal{A}\setminus\Bigl\{E_g^{(n)}+r-is\Big|r\in\R,s\in\bigl(\tfrac{\sin(\vartheta)}{2}\rho_{n+1},\infty\bigr)\Bigr\},
\end{equation}

which is a subset of $\mathcal{A}_n$ if $\gamma<\frac{1}{2}$. Here, the sides of the rectangle parallel to the imaginary axis have been cut such that the distance between $E_g^{(n)}$ and the lower edge matches the scale of $\rho_{n+1}$. It can be visualized as follows:

\begin{figure}[htb]	
\centering
\input{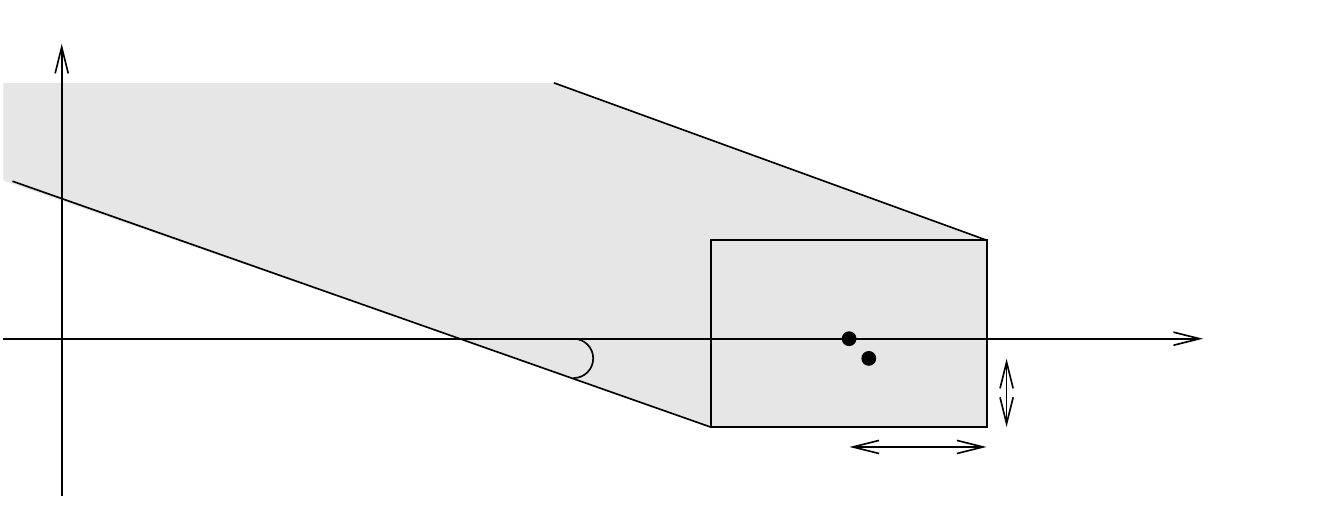_t}
\caption{The set $\widetilde{\mathcal{A}}_n$ in the complex plane}\label{fig-set-a-n-tilde}
\end{figure}

For the analysis, choose the parameters $\rho_0$ and $\gamma$ from~\eqref{def-rhos} such that
\begin{equation}\label{parameter-restrictions-1}
\rho_0<\Bigl(\frac{\sin(\vartheta)}{100}\Bigr)^4,\quad \gamma<\Bigl(\frac{1}{5}\Bigr)^4=\frac{1}{625}.
\end{equation}
This implies that
\begin{equation}
\frac{100}{\sin(\vartheta)}\ \bigl(\rho_m\bigr)^{\frac{1}{4}}\leq\frac{1}{5}\cdot \Bigl(\frac{1}{5}\Bigr)^m.
\end{equation}
which, together with \eqref{ind-hyp-projections-2} yields the estimate
\begin{equation}\label{norm-projections}		
1-\frac{1}{5}-\frac{1}{5}\sum_{j=0}^{m-1}\Bigl(\frac{1}{5}\Bigr)^j\leq\big\|P^{(m)}\big\|\leq1+\frac{1}{5}+\frac{1}{5}\sum_{j=0}^{m-1}\Bigl(\frac{1}{5}\Bigr)^j.
\end{equation}
when the bounds for the parameters are inserted. Since the sums in \eqref{norm-projections} may be estimated by a geometric series, $P^{(m)}$ can be bounded in norm by $\frac{3}{2}$ uniformly in $m$.

\subsubsection*{Construction of the Resolvent}
We start by establishing a similar bound to~\eqref{ind-hyp-resolvent-norm} for~$(\tH_g^{(n)}-z)$ with $z\in\widetilde{\mathcal{A}}_n$.
%
\begin{lemma}\label{invertibility-tilde}	
Let $z\in\widetilde{\mathcal{A}}_n$. Then the following norm bound holds true
\begin{equation}\label{formula-invertibility-tilde}
\big\|\bigl(\tH_g^{(n)}-z\bigr)^{-1}\overline{\tP^{(n)}}\big\|\leq \frac{4}{\sin(\vartheta)}\ \frac{C_n}{\frac{\sin(\vartheta)}{2}\rho_{n+1}+|z-E_g^{(n)}|}.
\end{equation}
\end{lemma}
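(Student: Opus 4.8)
The plan is to reduce the bound to the induction hypothesis~\eqref{ind-hyp-resolvent-norm} for $H_g^{(n)}$ by diagonalising the ``bridge'' degrees of freedom in $\cF^{(n,n+1)}$. Recall that $\tP^{(n)}$ is the rank-one Riesz projection of $\tH_g^{(n)}$ at the eigenvalue $E_g^{(n)}$, which by the tensor structure equals $P^{(n)}\otimes P_{\Omega^{(n,n+1)}}$, with $P_{\Omega^{(n,n+1)}}$ the vacuum projection on $\cF^{(n,n+1)}$. Since the bridge Hamiltonian annihilates $\Omega^{(n,n+1)}$ and commutes with the photon-number operator on $\cF^{(n,n+1)}$, the operator $\tH_g^{(n)}$ commutes with $\1_{\cH^{(n)}}\otimes P_{\Omega^{(n,n+1)}}$, and I would split
$$\overline{\tP^{(n)}}=\bigl(\overline{P^{(n)}}\otimes P_{\Omega^{(n,n+1)}}\bigr)+\bigl(\1_{\cH^{(n)}}\otimes\overline{P_{\Omega^{(n,n+1)}}}\bigr),$$
estimating $(\tH_g^{(n)}-z)^{-1}$ against each summand separately.

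Under the identification $\cH^{(n)}\otimes\C\Omega^{(n,n+1)}\cong\cH^{(n)}$ the operator $\tH_g^{(n)}$ restricts to $H_g^{(n)}$ and $\tP^{(n)}$ to $P^{(n)}$, so the first contribution has norm $\big\|(H_g^{(n)}-z)^{-1}\overline{P^{(n)}}\big\|$. Since $\gamma<\tfrac12$ gives $\widetilde{\mathcal{A}}_n\subseteq\mathcal{A}_n$, the hypothesis~\eqref{ind-hyp-resolvent-norm} applies and, using $\rho_{n+1}<\rho_n$, this is at most $C_n\bigl(\tfrac{\sin(\vartheta)}{2}\rho_{n+1}+|z-E_g^{(n)}|\bigr)^{-1}$.

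For the second contribution I would use the spectral representation of the bridge Hamiltonian $\1_{\cH^{(n)}}\otimes H_{ph}^{(n,n+1)}$, which is normal with spectrum contained in $\{0\}\cup e^{-i\vartheta}[\rho_{n+1},\infty)$ (any excitation in $B_n\setminus B_{n+1}$ costs at least $\rho_{n+1}$) and, on $\overline{P_{\Omega^{(n,n+1)}}}\cF^{(n,n+1)}$, in $e^{-i\vartheta}[\rho_{n+1},\infty)$; since it commutes with $H_g^{(n)}\otimes\1$,
$$\big\|(\tH_g^{(n)}-z)^{-1}\bigl(\1_{\cH^{(n)}}\otimes\overline{P_{\Omega^{(n,n+1)}}}\bigr)\big\|\le\sup_{t\ge\rho_{n+1}}\big\|\bigl(H_g^{(n)}-(z-e^{-i\vartheta}t)\bigr)^{-1}\big\|.$$
For each $w=z-e^{-i\vartheta}t$ I would write $(H_g^{(n)}-w)^{-1}=(E_g^{(n)}-w)^{-1}P^{(n)}+(H_g^{(n)}-w)^{-1}\overline{P^{(n)}}$ (legitimate since $P^{(n)}$ is the rank-one spectral projection for $E_g^{(n)}$), bound $\|P^{(n)}\|\le\tfrac32$ by the uniform projection bound following~\eqref{norm-projections}, and apply~\eqref{ind-hyp-resolvent-norm} to the reduced resolvent; this hinges on the geometric claim that $w\in\mathcal{A}_n$ and $|w-E_g^{(n)}|\ge c_\vartheta\bigl(\tfrac{\sin(\vartheta)}{2}\rho_{n+1}+|z-E_g^{(n)}|\bigr)$ for a constant $c_\vartheta$ of order $\sin(\vartheta)$.

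This geometric estimate is the main obstacle. That $w$ stays in $\mathcal{A}$ follows from the invariance of $\mathcal{A}$ under subtraction of $e^{-i\vartheta}[0,\infty)$, and the shift raises $\mathrm{Im}(w)$, so $w$ remains above the lower edge defining $\mathcal{A}_n$; moreover the vertical component alone gives $|w-E_g^{(n)}|\ge\tfrac{\sin(\vartheta)}{2}t\ge\tfrac{\sin(\vartheta)}{2}\rho_{n+1}$, using $\mathrm{Im}(z)\ge\mathrm{Im}(E_g^{(n)})-\tfrac{\sin(\vartheta)}{2}\rho_{n+1}$ for $z\in\widetilde{\mathcal{A}}_n$ together with $t\ge\rho_{n+1}$. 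The point of removing precisely the strip below $E_g^{(n)}-i\tfrac{\sin(\vartheta)}{2}\rho_{n+1}$ from $\widetilde{\mathcal{A}}_n$ is that it forbids $z-E_g^{(n)}$ from pointing nearly along the direction $e^{-i\vartheta}$ — the only configuration in which $|z-e^{-i\vartheta}t-E_g^{(n)}|$ could become much smaller than $|z-E_g^{(n)}|$ — so that comparing $t$ with $|z-E_g^{(n)}|$ by the triangle inequality together with the location of $z-E_g^{(n)}$ in the thickened ray $\mathcal{A}-E_g^{(n)}$ (of thickness $\mathcal{O}(\rho_0)$) upgrades the crude bound to the displayed one; this is an elementary but slightly tedious planar computation, most transparently carried out in the rotated coordinate $e^{i\vartheta}(z-E_g^{(n)})$, in which $\mathcal{A}$ becomes a horizontal strip with a horizontal tail. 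Adding the two contributions and tracking constants — the $\|P^{(n)}\|$-term is absorbed using $C_n\ge43$ — produces the factor $\tfrac{4}{\sin(\vartheta)}$, which is the assertion.
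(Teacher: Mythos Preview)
Your argument is correct and follows essentially the same route as the paper. The only cosmetic difference is the decomposition of the complementary projection: you write $\overline{\tP^{(n)}}=\overline{P^{(n)}}\otimes P_{\Omega^{(n,n+1)}}+\I\otimes\overline{P_{\Omega^{(n,n+1)}}}$, whereas the paper uses $\overline{\tP^{(n)}}=\overline{P^{(n)}}\otimes\I+P^{(n)}\otimes\overline{P_{\Omega^{(n,n+1)}}}$; after applying the spectral theorem for $H_{ph}^{(n,n+1)}$ and splitting into the $P^{(n)}$ and $\overline{P^{(n)}}$ parts, both produce exactly the same three contributions. The paper carries out the geometric estimate you identify as ``the main obstacle'' by an explicit case split of $\widetilde{\mathcal{A}}_n$ into the region where subtracting $e^{-i\vartheta}t$ increases $|z-E_g^{(n)}|$ (trivial case) and a small region near the lower-right where a direct computation gives the factor $2/\sin(\vartheta)$; combined with $|w-E_g^{(n)}|\ge\tfrac{\sin(\vartheta)}{2}\rho_{n+1}$ this yields the stated $4/\sin(\vartheta)$.
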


\begin{proof}		
Note that $E_g^{(n)}$ is the only eigenvalue of~$H_g^{(n)}$ in $\mathcal{A}_n$ and, therefore, in~$\tilde{\mathcal{A}}_n$. Due to the form of~$\tH_g^{(n)}$, the value~$E_g^{(n)}$ is also the only eigenvalue of~$\tH_g^{(n)}$ in~$\tilde{\mathcal{A}}_n$ and one may write
\begin{equation}\label{calc-invertibility-tilde-1}		
\tP^{(n)}=P^{(n)}\otimes P_{\Omega^{(n,n+1)}}
\end{equation}
with $P_{\Omega^{(n,n+1)}}$ denoting the projection to the Fock vacuum in $\cF^{(n,n+1)}$. Rewriting
\begin{align}		
\overline{\tP^{(n)}}&=\overline{P^{(n)}}\otimes\I^{(n,n+1)}+P^{(n)}\otimes P_{\Omega^{(n,n+1)}}^{\perp}.\label{calc-invertibility-tilde-2}
\end{align}
and inserting the result into the term to be estimated now yields
\begin{align}		
\bigl(\tH_g^{(n)}-z\bigr)^{-1}\overline{\tP^{(n)}}
&=\bigl(H_g^{(n)}\otimes\I^{(n,n+1)}+e^{\theta}\I^{(n)}\otimes H_{ph}^{(n,n+1)}-z\bigr)^{-1}\overline{P^{(n)}}\otimes\I^{(n,n+1)}\nonumber\\
&\quad+\bigl(E_g^{(n)}\otimes\I^{(n,n+1)}+e^{\theta}\I^{(n)}\otimes H_{ph}^{(n,n+1)}-z\bigr)^{-1}P^{(n)}\otimes P_{\Omega^{(n,n+1)}}^{\perp}.\label{calc-invertibility-tilde-3}
\end{align}
The norm can now be calculated by applying the spectral theorem to $H_{ph}^{(n,n+1)}$, allowing to reduce the operators acting on $\cH^{(n+1)}$ to operators acting on $\cH^{(n)}$ due to the different Hilbert spaces involved. This leads to
\begin{align}		
&\big\|\bigl(H_g^{(n)}\otimes\I^{(n,n+1)}+e^{\theta}\I^{(n)}\otimes H_{ph}^{(n,n+1)}-z\bigr)^{-1}\overline{P^{(n)}}\otimes\I^{(n,n+1)}\nonumber\\
&\quad+\bigl(E_g^{(n)}\otimes\I^{(n,n+1)}+e^{\theta}\I^{(n)}\otimes H_{ph}^{(n,n+1)}-z\bigr)^{-1}P^{(n)}\otimes P_{\Omega^{(n,n+1)}}^{\perp}\big\|\nonumber\\
&=\sup_{r\in\sigma\bigl(H_{ph}^{(n,n+1)}\bigr)}\big\|\bigl(H_g^{(n)}+e^{\theta}r-z\bigr)^{-1}\overline{P^{(n)}}\big\|+\sup_{r\geq\rho_{n+1}}\big\|\bigl(E_g^{(n)}+e^{-\theta}r-z\bigr)^{-1}P^{(n)}\big\|\label{calc-invertibility-tilde-4}
\end{align}
With $\|(H_g^{(n)}-z)^{-1}\overline{P^{(n)}}\|$ estimated by~\eqref{ind-hyp-resolvent-norm} in the induction hypothesis and Equation~\eqref{norm-projections} allowing to estimate the norm of the projection, it follows that
\begin{align}		
\sup_{r\in\sigma\bigl(H_{ph}^{(n,n+1)}\bigr)}\big\|\bigl(H_g^{(n)}+e^{\theta}r-z\bigr)^{-1}\overline{P^{(n)}}\big\|&\leq\sup_{r\in\sigma\bigl(H_{ph}^{(n,n+1)}\bigr)}\ \frac{C_n}{\frac{\sin(\vartheta)}{2}\rho_n+|z-(E_g^{(n)}+e^{-\theta}r)|},\nonumber\\
\sup_{r\geq\rho_{n+1}}\big\|\bigl(E_g^{(n)}+e^{-\theta}r-z\bigr)^{-1}P^{(n)}\big\|&\leq\sup_{r\geq\rho_{n+1}}\frac{3}{2}\ \big|\bigl(E_g^{(n)}+e^{-\theta}r-z\bigr)^{-1}\big|.\label{calc-invertibility-tilde-5}
\end{align}

Finally, inserting~\eqref{calc-invertibility-tilde-5} into~\eqref{calc-invertibility-tilde-4}, one can rewrite the term to obtain the form given in~\eqref{formula-invertibility-tilde}. Recall that $\rho_{n+1}<\rho_n$, and note that the definition of $\widetilde{\mathcal{A}}_n$ given in~\eqref{def-a-tilde} leads to the estimate
\begin{equation}\label{calc-invertibility-tilde-6}		
\big|E_g^{(n)}+e^{-\theta}r-z\big|\geq\frac{\sin(\vartheta)}{2}\rho_{n+1}\quad \forall z\in\widetilde{\mathcal{A}}_n,r\geq\rho_{n+1},
\end{equation}
which is also visible from Figure~\ref{fig-set-a-n-tilde}. We use another geometric estimate to put $|E_g^{(n)}-z|$ and $|E_g^{(n)}-(z-e^{-\theta}r)|$ into relation.
%
Choosing some~$\delta>0$, let $l_{n,0}$  and $l_{n,\delta}$ denote the sets
\begin{equation}\label{def-sets-l-n-delta}
l_{n,0}:=\big\{E_g^{(n)}-te^{-\theta}\big|t\in\R\big\},\quad l_{n,\delta}:=\big\{E_g^{(n)}+\delta e^{i\frac{\pi}{2}-\theta}-te^{-\theta}\big|t\in\R\big\},
\end{equation}
respectively. Here, $l_{n,0}$ describes a line going through the point $E_g^{(n)}$ in the complex plane with slope $e^{-\theta}$ and $l_{n,\delta}$ denotes its upward parallel translation by the distance $\delta$. This construction can now be used to split the set $\widetilde{\mathcal{A}}_n$ into two disjoint subsets
\begin{equation}\label{def-subsets-a-n-tilde}
\widetilde{\mathcal{A}}_n':=\Bigl(\bigcup_{\delta\geq0}l_{n,\delta}\Bigr)\cap\big\{z\in\widetilde{\mathcal{A}}_n\big|\mathrm{Re}(z)\geq\mathrm{Re}\bigl(E_g^{(n)}\bigr)\bigr\},\quad \widetilde{\mathcal{A}}_n'':=\widetilde{\mathcal{A}}_n\setminus\widetilde{\mathcal{A}}_n',
\end{equation} 
which are visualized in the following image.

\begin{figure}[htb]	
\centering
\input{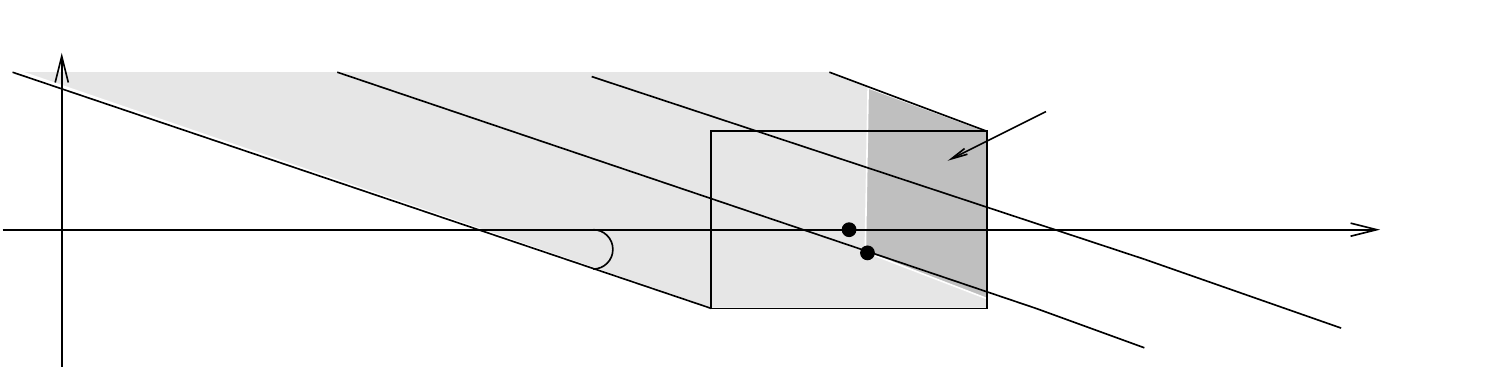_t}
\caption{The sets defined in~\eqref{def-sets-l-n-delta} and~\eqref{def-subsets-a-n-tilde}}\label{fig-geometry-estimate-2}
\end{figure}

Note that the choice of the subsets implies that
\begin{equation}\label{calc-invertibility-tilde-7}
|E_g^{(n)}-(z-e^{-\theta}r)|\geq|E_g^{(n)}-z|,\quad \forall r\geq\rho_{n+1},z\in\widetilde{\mathcal{A}}_n'',
\end{equation}
which yields the desired relation between the two terms in the case that $z$ is taken from~$\widetilde{\mathcal{A}}_n''$ and, therefore, motivates the decomposition. When considering $\widetilde{\mathcal{A}}_n'$,  let $z_1\in l_{n,\delta}$ denote the point that satisfies $\mathrm{Im}(z_1)=\mathrm{Im}(E_g^{(n)})-\frac{\sin(\vartheta)}{4}\rho_{n+1}$, i.e., the intersection of $l_{n,\delta}$ and the line along the lower edge of the rectangle included in $\widetilde{\mathcal{A}}_n$, and $z_2\in l_{n\delta}$ denote the point that satisfies~$\mathrm{Im}(z_2)=\mathrm{Im}(E_g^{(n)})$, i.e. the intersection of $l_{n,\delta}$ and a parallel of the real axis that goes through the point $E_g^{(n)}$. Further note $z_3\in l_{n,\delta}$ to be the point with the shortest distance to $E_g^{(n)}$. They can be added into the sketch as follows:

\begin{figure}[htb]	
\centering
\input{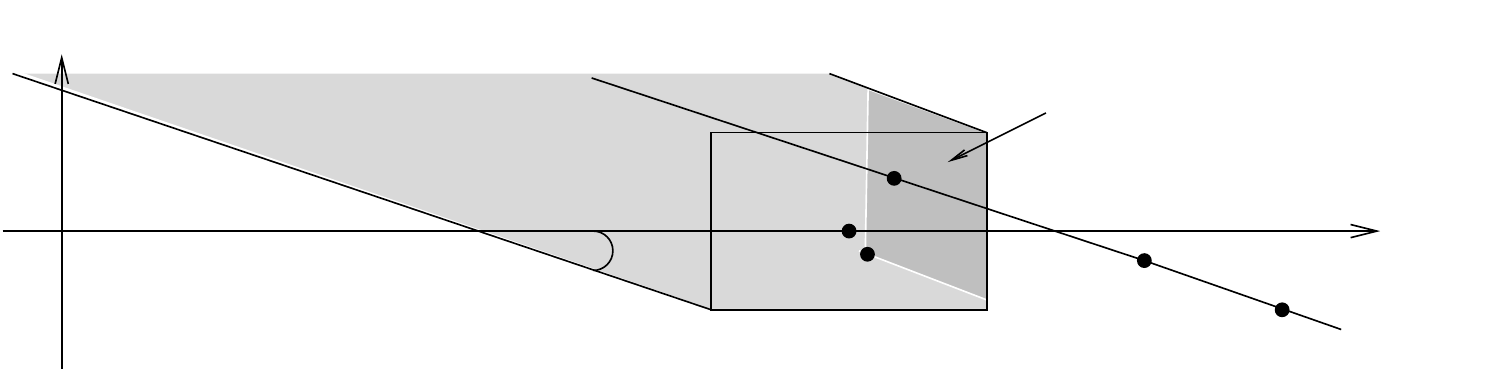_t}
\caption{The basis for the geometric argument}\label{fig-geometry-estimate-3}
\end{figure}

The distance between $E_g^{(n)}$ and the remainig values of $z$ can now be estimated by
\begin{align}
\sup_{z\in\widetilde{\mathcal{A}}_n'}|z-E_g^{(n)}|^2&\leq|z_1-E_g^{(n)}|^2=\delta^2+|z_3-z_1|^2=\delta^2+\bigl(|z_3-z_2|+|z_2-z_1|\bigr)^2\nonumber\\
&=\delta^2+\Bigl(\frac{\delta}{\tan(\vartheta)}+\frac{\rho_{n+1}}{2}\Bigr)^2.\label{calc-invertibility-tilde-8}
\end{align}
Recalling~\eqref{calc-invertibility-tilde-6}, one can now choose~$\delta=\frac{\rho_{n+1}}{2}\sin(\vartheta)$, which leads to
\begin{align}
\frac{|E_g^{(n)}-z|}{|E_g^{(n)}-(z-e^{-\theta}r)|}&\leq\Biggl(\frac{\bigl(\frac{\rho_{n+1}}{2}\sin(\vartheta)\bigr)^2+\bigl(\frac{\rho_{n+1}\sin(\vartheta)}{2\tan(\vartheta)}+\frac{\rho_{n+1}}{2}\bigr)^2}{\bigl(\frac{\sin(\vartheta)}{2}\rho_{n+1}\bigr)^2}\Biggr)^{\frac{1}{2}}\nonumber\\
&=\Bigl(1+\Bigl(\frac{\cos(\vartheta)}{\sin(\vartheta)}+\frac{1}{\sin(\vartheta)}\Bigr)^2\Bigr)^{\frac{1}{2}}\leq\frac{2}{\sin(\vartheta)}.\label{calc-invertibility-tilde-9}
\end{align}
This can be rearranged to yield
\begin{equation}\label{calc-invertibility-tilde-10}
\frac{1}{|E_g^{(n)}-(z-e^{-\theta}r)|}\leq\frac{2}{\sin(\vartheta)}\ \frac{1}{|E_g^{(n)}-z|}.
\end{equation}
Together with~\eqref{calc-invertibility-tilde-6} it implies that
\begin{equation}\label{calc-invertibility-tilde-11}
\frac{1}{|E_g^{(n)}-(z-e^{-\theta}r)|}\leq\frac{4}{\sin(\vartheta)}\ \frac{1}{\frac{\sin(\vartheta)}{2}\rho_{n+1}+|z-E_g^{(n)}|}\quad \forall r\geq\rho_{n+1},z\in\widetilde{\mathcal{A}}_n,
\end{equation}
which can directly be inserted for for the second term in~\eqref{calc-invertibility-tilde-5}. Since $\frac{1}{\sin(\vartheta)},C_n\geq1$ by assumption, the bound given in the claim of the lemma follows.
\end{proof}

Note that~\eqref{ind-hyp-resolvent-norm} is applied for both $z$ and $(z-e^{-\theta}r)$, which motivates the shape of the sets given in~\eqref{ind-hyp-sets}. With this result in place, the invertibility of~$H_g^{(n+1)}$ is considered next.
%
Building on the results of Section~\ref{sect-application-feshbach}, the calculations can be transferred to the infrared-regularized Hamiltonians as the infrared cutoff does not affect the part of $H_g$ acting on~$\C^2$. Note, however, that the component of the projection $\P$ defined in~\eqref{def-projection-p} acting on the Fock space needs to be adjusted to match the space considered. To avoid confusion with the projections mapping to the eigenspaces, any index corresponding to the underlying space is omitted and the projection is still denoted as $\P$. Considering the Feshbach-Schur map of $H_g^{(n)}$ now yields
\begin{equation}\label{def-feshbach-n}
H_{g_0,g}^{(n)}(z)=2-z+e^{-\theta}H_{ph}^{(n)}-g_0^2\int\frac{G_{\theta}^{(n)}(k)\overline{G_{\overline{\theta}}^{(n)}(k)}}{e^{-\theta}H_{ph}^{(n)}+e^{-\theta}\omega(k)-z}\ \dx k-g^2\bigl(W_{2,0}^{(n)}+W_{1,1}^{(n)}+W_{0,2}^{(n)}\bigr),
\end{equation}
with the operators appearing in the integrals defining $W_{2,0}^{(n)}$, $W_{1,1}^{(n)}$, and $W_{0,2}^{(n)}$, respectively, involving the same infrared cutoffs as the integral for $W_{0,0}^{(n)}$ explicitly given in~\eqref{def-feshbach-n}. Recall that~$G_{\theta}^{(n)}$ denotes~$\1_{\R^3\setminus B_n}G_{\theta}$ such that the characteristic function may be separated from the coupling function when needed. The operator $H_{g_0,g}^{(n)}(z)$ is studied in detail in the following lemma. Note that the results may be derived for any $m<n$ in the same way.

\begin{lemma}\label{feshbach-estimates-cutoff}	
Let $z\in\tilde{\mathcal{A}}_n\setminus\bigl\{E_g^{(n)}\big\}$ with ${\big|z-E_g^{(n)}\big|\geq\frac{\sin(\vartheta)}{16}\rho_{n+1}}$. Then  the operator $H_{g_0,g}^{(n)}(z)$ is well-defined and it is
\begin{align}		
&\big\|g\Phi_{\theta}^{(n)}\bigl(e^{-\theta}H_{ph}^{(n)}-z\bigr)^{-1}\big\|\leq g\cdot C_{FS}\Bigl(\Big\|\frac{G}{\sqrt{\omega}}\Big\|_{L^2}+\|G\|_{L^2}\Bigr)\label{formula-feshbach-estimates-cutoff-1}\\
&\big\|\bigl(e^{-\theta}H_{ph}^{(n)}-z\bigr)^{-1}g\Phi_{\theta}^{(n)}\big\|\leq g\cdot C_{FS}\Bigl(\Big\|\frac{G}{\sqrt{\omega}}\Big\|_{L^2}+\|G\|_{L^2}\Bigr),\label{formula-feshbach-estimates-cutoff-2}
\end{align}
where $\Phi_{\theta}^{(n)}=a^*(G_{\theta}^{(n)})+a(G_{\bar{\theta}}^{(n)})$. The constant can be chosen as
\begin{equation}\label{def-c-fs}
C_{FS}=2+7(\vartheta d)^{-1}.
\end{equation}
\end{lemma}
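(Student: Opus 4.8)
The two bounds are the infrared‑regularized analogues of the estimates carried out in Section~\ref{sect-application-feshbach} for $z\in D(2,Cg)$, and the plan is to re‑run that argument with $H_{ph}$ replaced by $H_{ph}^{(n)}$ and $G_{\theta}$ by $G_{\theta}^{(n)}=\1_{\R^3\setminus B_n}G_{\theta}$. Every computation there used only three facts: that $H_{ph}^{(n)}$ is self‑adjoint (so the spectral theorem applies), that $(e^{-\theta}H_{ph}^{(n)}-z)^{-1}$ is bounded, and Lemma~\ref{standard-estimate}. The first and third transfer verbatim, but the second was proved only for $z$ in the small disk $D(2,Cg)$, whereas here $z$ ranges over the \emph{unbounded} set $\widetilde{\mathcal{A}}_n$ of~\eqref{def-a-tilde}; establishing the corresponding geometric estimate is the one step that needs genuine care.

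First I would record that estimate. Since $H_{ph}^{(n)}$ is self‑adjoint with $\spec(H_{ph}^{(n)})=\{0\}\cup[\rho_n,\infty)$, we have $\spec(e^{-\theta}H_{ph}^{(n)})=\{0\}\cup e^{-i\vartheta}[\rho_n,\infty)$: the origin together with a ray of direction $e^{-i\vartheta}$ starting at $e^{-i\vartheta}\rho_n$. By construction $\widetilde{\mathcal{A}}_n\subset\mathcal{A}$; the rectangular part of $\mathcal{A}$ lies around $2$ at distance at least $2\sin(\vartheta)$ from that ray, and the tail of $\mathcal{A}$ runs \emph{parallel} to the ray at perpendicular distance $2\sin(\vartheta)$. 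Hence, for $\rho_0$ small enough,
\begin{equation}\label{dist-estimate-sketch}
\dist\Bigl(\widetilde{\mathcal{A}}_n,\ \spec\bigl(e^{-\theta}H_{ph}^{(n)}\bigr)\Bigr)\ \geq\ \vartheta d,
\end{equation}
where $\vartheta d$ is, up to a correction of order $\rho_0$, the quantity $2\sin(\vartheta)$ written as in~\eqref{value-for-delta}; the same lower bound holds after the spectrum is translated by $e^{-\theta}\omega(k)$ or $e^{-\theta}(\omega(k_1)+\omega(k_2))$ with $\omega\geq 0$, since such a shift only moves points further out along the ray. By the spectral theorem, \eqref{dist-estimate-sketch} gives $\|(e^{-\theta}H_{ph}^{(n)}-z)^{-1}\|\leq(\vartheta d)^{-1}$ and the analogous bounds for the shifted resolvents occurring in $W_{0,0}^{(n)},W_{2,0}^{(n)},W_{1,1}^{(n)},W_{0,2}^{(n)}$. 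Feeding these into the verification of Section~\ref{sect-application-feshbach} (with $G_{\theta}^{(n)}$ in place of $G_{\theta}$, using $\|G_{\theta}^{(n)}/\sqrt{\omega}\|_{L^2}\leq\|G_{\theta}/\sqrt{\omega}\|_{L^2}$ and $\|G_{\theta}^{(n)}\|_{L^2}\leq\|G_{\theta}\|_{L^2}$), one sees that $\overline{\P}\bigl(H_g^{(n)}(\theta)-z\bigr)\overline{\P}=P_{\downarrow}\otimes(e^{-\theta}H_{ph}^{(n)}-z)$ is boundedly invertible on $\Ran\overline{\P}$ and that all off‑diagonal products are bounded, so the Feshbach--Schur map exists and $H_{g_0,g}^{(n)}(z)$ of~\eqref{def-feshbach-n} is well defined.

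For~\eqref{formula-feshbach-estimates-cutoff-1} I would insert the identity and factor
\begin{equation}\label{factorisation-sketch}
g\,\Phi_{\theta}^{(n)}\bigl(e^{-\theta}H_{ph}^{(n)}-z\bigr)^{-1}=\Bigl(g\,\Phi_{\theta}^{(n)}\bigl(H_{ph}^{(n)}+1\bigr)^{-1/2}\Bigr)\,\Bigl(\bigl(H_{ph}^{(n)}+1\bigr)^{1/2}\bigl(e^{-\theta}H_{ph}^{(n)}-z\bigr)^{-1}\Bigr).
\end{equation}
The first factor is at most $2g\bigl(\|G_{\theta}^{(n)}/\sqrt{\omega}\|_{L^2}+\|G_{\theta}^{(n)}\|_{L^2}\bigr)$ by Lemma~\ref{standard-estimate} with $\rho=1$ and $\supp(G_{\theta}^{(n)})=\R^3\setminus B_n$. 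For the second, the spectral theorem for $H_{ph}^{(n)}$ turns it into $\sup_{r\in\spec(H_{ph}^{(n)})}(r+1)^{1/2}|e^{-i\vartheta}r-z|^{-1}$; using $\tfrac{H_{ph}^{(n)}+1}{e^{-\theta}H_{ph}^{(n)}-z}=e^{\theta}+(e^{\theta}z+1)\bigl(e^{-\theta}H_{ph}^{(n)}-z\bigr)^{-1}$ together with~\eqref{dist-estimate-sketch}, this is controlled once one has a uniform bound on $\sup_{r\in\spec(H_{ph}^{(n)})}(|z|+1)|e^{-i\vartheta}r-z|^{-1}$ over $z\in\widetilde{\mathcal{A}}_n$. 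That last supremum is the place where the unboundedness of $\widetilde{\mathcal{A}}_n$ bites: the crude estimate $(|z|+1)(\vartheta d)^{-1}$ used in Section~\ref{sect-application-feshbach} is worthless when $z$ runs down the tail, but since the tail is parallel to the spectral ray both $|z|$ and $|e^{-i\vartheta}r-z|$ grow linearly in the tail parameter, so the ratio stays bounded by a fixed multiple of $(\vartheta d)^{-1}$. Collecting the numerical constants and absorbing the leading order of $\sin(\vartheta)$ into $d$ then yields~\eqref{formula-feshbach-estimates-cutoff-1} with $C_{FS}=2+7(\vartheta d)^{-1}$ as in~\eqref{def-c-fs}. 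Estimate~\eqref{formula-feshbach-estimates-cutoff-2} follows from the same computation applied to the adjoint, using $\|(H_{ph}^{(n)}+1)^{-1/2}g\Phi_{\theta}^{(n)}\|=\|g\,(\Phi_{\theta}^{(n)})^{*}(H_{ph}^{(n)}+1)^{-1/2}\|$ and that $(\Phi_{\theta}^{(n)})^{*}$ is again of the form $a^{*}(\cdot)+a(\cdot)$ with a coupling function of the same $L^2$‑ and weighted‑$L^2$‑norms, so Lemma~\ref{standard-estimate} gives the identical bound.

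The one genuinely non‑routine ingredient is thus~\eqref{dist-estimate-sketch} together with the accompanying uniform control of $(|z|+1)|e^{-i\vartheta}r-z|^{-1}$ on the unbounded set $\widetilde{\mathcal{A}}_n$; everything else is the bookkeeping already performed in Section~\ref{sect-application-feshbach}. I would also remark that the extra hypothesis $|z-E_g^{(n)}|\geq\tfrac{\sin(\vartheta)}{16}\rho_{n+1}$ is not actually used for these two bounds, since $E_g^{(n)}$ lies near $2$ and hence far from $\spec(e^{-\theta}H_{ph}^{(n)})$; it is imposed only because the lemma is invoked under that restriction in the subsequent construction of the resolvent of $H_g^{(n+1)}$.
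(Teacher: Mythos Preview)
Your proposal is correct and follows essentially the same route as the paper: the same factorisation through $(H_{ph}^{(n)}+1)^{-1/2}$, Lemma~\ref{standard-estimate} for the $\Phi_\theta^{(n)}$-factor, the rewriting $\tfrac{H_{ph}^{(n)}+1}{e^{-\theta}H_{ph}^{(n)}-z}=e^{\theta}+(e^{\theta}z+1)(e^{-\theta}H_{ph}^{(n)}-z)^{-1}$, and a geometric control of $|z|/|e^{-i\vartheta}r-z|$ on the unbounded set $\widetilde{\mathcal{A}}_n$. The paper makes the last point explicit by splitting into the cases $\mathrm{Re}(z)\leq 0$ (where $|e^{-i\vartheta}r-z|\geq|z|$ directly) and $\mathrm{Re}(z)>0$ (where $|z|\leq\tfrac{5}{2}$), which is precisely the ``tail parallel to the ray'' observation you identify; your remark that the hypothesis $|z-E_g^{(n)}|\geq\tfrac{\sin(\vartheta)}{16}\rho_{n+1}$ is not actually used for these two bounds is also accurate.
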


\begin{proof}		
When considering $H_{g_0,g}^{(n)}(z)$, one can choose the same approach as used in Section~\ref{sect-application-feshbach}. As a first step, the existence of the term $\overline{\P}(\overline{\P}(H_g^{(n)}-z)\overline{\P})^{-1}$ is shown. Its norm can be estimated by
\begin{equation}\label{calc-feshbach-estimates-cutoff-1}	
\big\|\overline{\P}\bigl(\overline{\P}\bigl(H_g^{(n)}-z\bigr)\overline{\P}\bigr)^{-1}\big\|=\big\|\bigl(e^{-\theta}H_{ph}^{(n)}-z\bigr)^{-1}\big\|=\sup_{r\geq\rho_n}\frac{1}{|e^{-\theta}r-z|}\leq\frac{1}{\delta},
\end{equation}
using another geometric estimate. Note, however, that $\delta$ is not the same as in Section~\ref{sect-application-feshbach}. Here, the constant is estimated by calculating the distance between two parallel lines instead. The setup is sketched in the following image.\\

\begin{figure}[htb]	
\centering
\input{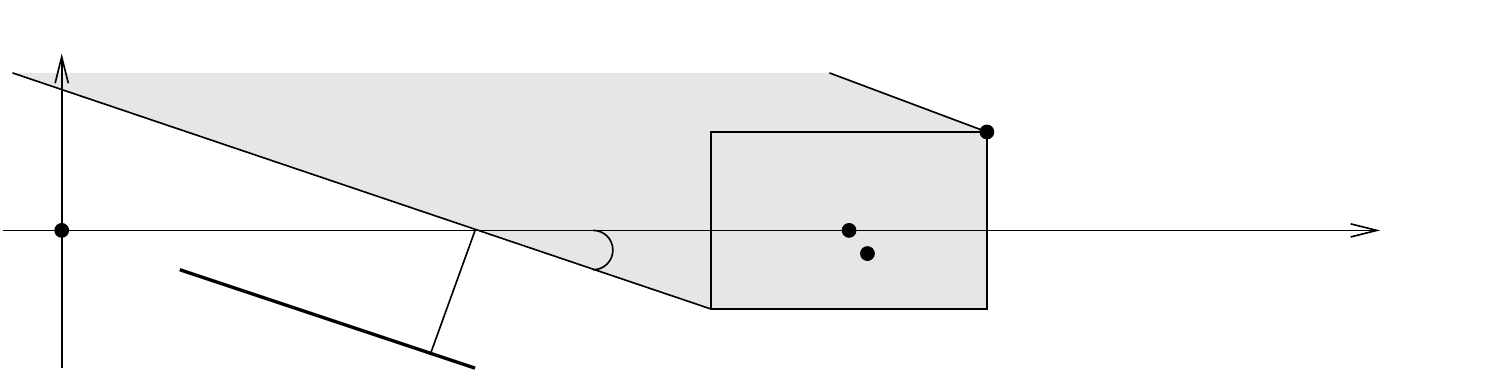_t}
\caption{The basis for the geometric estimate}\label{fig-geometry-estimate-4}
\end{figure}

The distance between the set~$\{e^{-\theta}r|r\geq\rho_{n+1}\}$ and the lower edge of~$\widetilde{\mathcal{A}}_n$ can now be explicitly stated as
\begin{equation}\label{calc-feshbach-estimates-cutoff-2}	
\delta=\sin(\vartheta)\Bigl[2-\frac{\rho_0}{4}-\cos(\vartheta)\Bigl(\mathrm{Im}\bigl(E_g^{(n)}\bigr)+\frac{\sin(\vartheta)}{2}\rho_{n+1}\Bigr)\Bigr].
\end{equation}
Note the two lines are parallel and that $\delta$, therefore, allows for an estimate that is uniform in both $z$ and $r$. As done in Section~\ref{sect-application-feshbach}, the leading order in $\vartheta$ is explicitly written out and the bound is replaced by $\vartheta d$ for a suitable constant $d$. Recall that $\rho_n$ is defined on an exponential scale and that the energy differences, by the induction hypothesis, are of the same magnitude. One can, therefore, estimate $\mathrm{Im}(E_g^{(n)})$ by a convergent geometric series and give a constant~$d$ that is uniform in $m\leq n$. The next terms to be considered are the ones given in~\eqref{formula-feshbach-estimates-cutoff-1} and~\eqref{formula-feshbach-estimates-cutoff-2}, respectively. As the proofs are similar, only the first one is carried out explicitly here. First, expanding the term noted in~\eqref{formula-feshbach-estimates-cutoff-1} leads to
\begin{align}		
&\big\|\P \bigl(H_g^{(n)}-z\bigr)\overline{\P}\bigl(\overline{\P}\bigl(H_g^{(n)}-z\bigr)\overline{\P}\bigr)^{-1}\overline{\P}\big\|=g\cdot\big\|\sigma_1 P_{\uparrow}\Phi_{\theta}^{(n)}\bigl(e^{-\theta}H_{ph}^{(n)}-z)^{-1}\big\|\nonumber\\
&\leq g\cdot\big\|\Phi_{\theta}^{(n)}\bigl(H_{ph}^{(n)}+1\bigr)^{-\frac{1}{2}}\big\|\cdot\big\|\bigl(H_{ph}^{(n)}+1\bigr)^{-\frac{1}{2}}\big\|\cdot\big\|\bigl(e^{-\theta}H_{ph}^{(n)}-z)^{-1}\bigl(H_{ph}^{(n)}+1\bigr)\big\|.\label{calc-feshbach-estimates-cutoff-3}
\end{align}
The three factors on the right side of~\eqref{calc-feshbach-estimates-cutoff-3} are now considered separately. First, Lemma~\ref{standard-estimate} yields the bound
\begin{equation}\label{calc-feshbach-estimates-cutoff-4}	
\big\|\Phi_{\theta}^{(n)}\bigl(H_{ph}^{(n)}+1\bigr)^{-\frac{1}{2}}\big\|\leq2\Bigl(\Big\|\frac{G_{\theta}}{\sqrt{\omega}}\Big\|_{L^2}+\|G_{\theta}\|_{L^2}\Bigr).
\end{equation}
Recalling that $H_{ph}^{(n)}$ is self-adjoint, the second term in~\eqref{calc-feshbach-estimates-cutoff-3} is estimated using the spectral theorem. It follows that
\begin{equation}\label{calc-feshbach-estimates-cutoff-5}	
\big\|\bigl(H_{ph}^{(n)}+1\bigr)^{-\frac{1}{2}}\big\|=\sup_{r\geq\rho_n}\frac{1}{\sqrt{r+1}}\leq1.
\end{equation}
The remaining factor in~\eqref{calc-feshbach-estimates-cutoff-3} can be rewritten just as done in Section~\ref{sect-application-feshbach}, yielding
\begin{align}		
\Big\|\frac{H_{ph}^{(n)}+1}{e^{-\theta}H_{ph}^{(n)}-z}\Big\|&=\Big\|e^{\theta}+\frac{e^{\theta}z+1}{e^{-\theta}H_{ph}^{(n)}-z}\Big\|\leq1+\sup_{r\geq\rho_n}\frac{|z|+1}{|e^{-\theta}r-z|}\nonumber\\
&\leq1+\sup_{r\geq\rho_n}\frac{|z|}{|e^{-\theta}r-z|}+\frac{1}{\vartheta d}.\label{calc-feshbach-estimates-cutoff-6}
\end{align}
Note, however, that $|z|$ is not bounded due to the definition of the set $\widetilde{\mathcal{A}}_n$. The tool to bound the quotient is, once more, a geometric estimate. Recall Figure~\ref{fig-set-a-n-tilde}. When considering~${z\in\widetilde{\mathcal{A}}_n}$ with $\mathrm{Re}(z)\leq0$, note that
\begin{equation}\label{calc-feshbach-estimates-cutoff-7}	
|z|\leq|e^{-\theta}r-z|,\quad \forall r\geq\rho_n,
\end{equation}
as subtracting $e^{-\theta}$ from these values of $z$ increases the distance to the origin since
\begin{align}\label{calc-real-imaginary-part}
\mathrm{Re}(z-e^{-\theta}r)&=\mathrm{Re}(z)-r\cos(\vartheta)<\mathrm{Re}(z),\nonumber\\
\mathrm{Im}(z-e^{-\theta}r)&=\mathrm{Im}(z)+r\sin(\vartheta)>\mathrm{Im}(z)
\end{align}
i.e., the absolute value of both the real and the imaginary part increases for the values of~$z$ considered. This implies~\eqref{calc-feshbach-estimates-cutoff-7} and the quotient in the middle of the right side of~\eqref{calc-feshbach-estimates-cutoff-6} can be estimated by 1. Conversely, if $\mathrm{Re}(z)>0$, note that the point yielding the maximal absolute value of $|z|$ is the top right corner of the rectangle marked as $z_1$ in Figure~\ref{fig-geometry-estimate-4}. It can be estimated by
\begin{equation}\label{calc-feshbach-estimates-cutoff-8}	
|z_1|=\sqrt{\Bigr(2+\frac{\rho_0}{4}\Bigr)^2+\Bigl(\frac{\sin(\vartheta)}{4}\rho_0\Bigr)^2}\leq\sqrt{\Bigr(\frac{9}{4}\Bigr)^2+\Bigl(\frac{1}{8}\Bigr)^2}\leq\frac{5}{2}
\end{equation}
using the restrictions on the parameters, i.e., $\vartheta<\frac{\pi}{6}$ and $\rho_0<1$. Inserting the result into~\eqref{calc-feshbach-estimates-cutoff-6} now implies that
\begin{equation}\label{calc-feshbach-estimates-cutoff-9}	
1+\sup_{r\geq\rho_n}\frac{|z|}{|e^{-\theta}r-z|}+\frac{1}{\vartheta d}\leq1+\frac{7}{2}\ \frac{1}{\vartheta d},
\end{equation}
as the denominator of the quotient can be estimated geometrically when $|z|$ is bounded and $\frac{5}{2}(\vartheta d)^{-1}\geq1$ by the assumptions for the parameters. This yields the value of $C_{FS}$ given in the claim of the lemma. Finally, consider
\begin{align}		
&\big\|\P\bigl(H_g^{(n)}-z\bigr)\overline{\P}\bigl(\overline{\P}\bigl(H_g^{(n)}-z\bigr)\overline{\P}\bigr)^{-1}\overline{\P}\bigl(H_g^{(n)}-z\bigr)\P\big\|\nonumber\\
&=g^2\cdot\big\|\Phi_{\theta}^{(n)}\bigl(e^{-\theta}H_{ph}^{(n)}-z\bigr)^{-1}\Phi_{\theta}^{(n)}\big\|\nonumber\\
&\leq g^2\cdot\big\|\Phi_{\theta}^{(n)}\bigl(H_{ph}^{(n)}+1\bigr)^{-\frac{1}{2}}\big\|\cdot\big\|\bigl(H_{ph}^{(n)}+1\bigr)^{-\frac{1}{2}}\bigl(e^{-\theta}H_{ph}^{(n)}-z\bigr)\bigl(H_{ph}^{(n)}+1\bigr)^{-\frac{1}{2}}\big\|\nonumber\\
&\quad\quad\cdot\big\|\bigl(H_{ph}^{(n)}+1\bigr)^{-\frac{1}{2}}\Phi_{\theta}^{(n)}\big\|\label{calc-feshbach-estimates-cutoff-10}
\end{align}
which can be treated using the same estimates as before and also yields a finite bound, thus implying that $H_{g_0,g}^{(n)}(z)$ is well-defined for the values of $z$ considered.
\end{proof}

Note that the same calculations may also be applied to $\tH_g^{(n)}$, leading to an operator of a similar form acting on $\cF^{(n+1)}$ which involves $H_{ph}^{(n+1)}$ instead of $H_{ph}^{(n)}$, but still includes the~${n^{th}}$ infrared cutoff in the terms involving the coupling function. In particular, the bounds given in Lemma~\ref{feshbach-estimates-cutoff} also apply to $\tH_g^{(n)}$ due to the estimates still holding if $H_{ph}^ {(n)}$ is replaced by $H_{ph}^{(n+1)}$. For the following computations, the operator $F_{\P}(\tH_g^{(n)})$ is denoted as $\tH_{g_0,g}^{(n)}$. It is given by
\begin{equation}\label{def-feshbach-n-tilde}
\tH_{g_0,g}^{(n)}(z)=2-z+e^{-\theta}H_{ph}^{(n+1)}-g_0^2\int\frac{G_{\theta}^{(n)}(k)\overline{G_{\overline{\theta}}^{(n)}(k)}}{e^{-\theta}H_{ph}^{(n+1)}+e^{-\theta}\omega(k)-z}\ \dx k-g^2\bigl(\tW_{2,0}^{(n)}+\tW_{1,1}^{(n)}+\tW_{0,2}^{(n)}\bigr).
\end{equation}
Here, $\tW_{m_1,m_2}^{(n)}$ reflects the notation of $\tH_{g_0,g}^{(n)}$ and marks that the operators involve $H_{ph}^{(n+1)}$ and, therefore, act on $\cF^{(n+1)}$, but that the interaction term only includes the ${n^{th}}$ infrared cutoff. Finally, the resolvent of $H_g^{(n+1)}$ is considered. By isospectrality and the above considerations, the desired result is obtained by studying the invertibility of~$H_{g_0,g}^{(n+1)}(z)$. For convenience, the norm bounds needed for the convergence of the Neumann series are split into two separate lemmas which are then assembled in the proof of Theorem~\ref{invertibility-n+1}. We prepare the calculations with another lemma.
%
\begin{lemma}\label{feshbach-norm}		
Let $H_{g_0,g}^{(n)}(z)$ and $\tH_{g_0,g}^{(n)}(z)$ be as defined in~\eqref{def-feshbach-n} and~\eqref{def-feshbach-n-tilde}, respectively, and consider $z\in\tilde{\mathcal{A}}_n\setminus\bigl\{E_g^{(n)}\big\}$ with ${\big|z-E_g^{(n)}\big|\geq\frac{\sin(\vartheta)}{16}\rho_{n+1}}$. Then the following norm bounds hold true
\begin{align}
&\big\|H_{g_0,g}^{(n)}(z)^{-1}\big\|\leq25\ \frac{C_n}{\frac{\sin(\vartheta)}{2}\rho_{n+1}+|z-E_g^{(n)}|},\label{formula-feshbach-norm-n}\\
&\big\|\tH_{g_0,g}^{(n)}(z)^{-1}\big\|\leq \frac{28}{\sin(\vartheta)}\ \frac{C_n}{\frac{\sin(\vartheta)}{2}\rho_{n+1}+|z-E_g^{(n)}|}.\label{formula-feshbach-norm-n-tilde}
\end{align}
\end{lemma}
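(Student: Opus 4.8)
The plan is to avoid inverting $H_{g_0,g}^{(n)}(z)$ directly and instead read its inverse off from the resolvent of $H_g^{(n)}$ using the isospectrality of the Feshbach--Schur map. By Lemma~\ref{feshbach-estimates-cutoff}, for the values of $z$ considered here the operator $\overline{\P}\bigl(H_g^{(n)}-z\bigr)\overline{\P}$ is boundedly invertible on $\Ran(\overline{\P})$ and $F_{\P}\bigl(H_g^{(n)}-z\bigr)=P_{\uparrow}\otimes H_{g_0,g}^{(n)}(z)$, with the same statements holding for $\tH_g^{(n)}$ and $\tH_{g_0,g}^{(n)}(z)$ by the remark following that lemma. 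Moreover, $z\in\widetilde{\mathcal{A}}_n\setminus\bigl\{E_g^{(n)}\bigr\}\subset\mathcal{A}_n\setminus\bigl\{E_g^{(n)}\bigr\}$ lies in the resolvent set of $H_g^{(n)}$ by the induction hypothesis; the analogous assertion for $\tH_g^{(n)}$ follows from the proof of Lemma~\ref{invertibility-tilde}. The Feshbach--Schur theorem (see e.g.~\cite[Theorem~IV.1]{BachFroehlichSigal1998a}) then yields the block-inversion identity $F_{\P}\bigl(H_g^{(n)}-z\bigr)^{-1}=\P\bigl(H_g^{(n)}-z\bigr)^{-1}\P$ on $\Ran(\P)$, hence
\begin{equation*}
\big\|H_{g_0,g}^{(n)}(z)^{-1}\big\|=\big\|\P\bigl(H_g^{(n)}-z\bigr)^{-1}\P\big\|\leq\big\|\bigl(H_g^{(n)}-z\bigr)^{-1}\big\|,
\end{equation*}
and likewise $\big\|\tH_{g_0,g}^{(n)}(z)^{-1}\big\|\leq\big\|\bigl(\tH_g^{(n)}-z\bigr)^{-1}\big\|$.

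The next step is to split the resolvent along the rank-one Riesz projection. Since $E_g^{(n)}$ is a simple eigenvalue, $P^{(n)}$ from~\eqref{ind-hyp-projections-1} commutes with $H_g^{(n)}$ and satisfies $H_g^{(n)}P^{(n)}=E_g^{(n)}P^{(n)}$, so $\bigl(H_g^{(n)}-z\bigr)^{-1}P^{(n)}=\bigl(E_g^{(n)}-z\bigr)^{-1}P^{(n)}$. Writing $\bigl(H_g^{(n)}-z\bigr)^{-1}=\bigl(H_g^{(n)}-z\bigr)^{-1}P^{(n)}+\bigl(H_g^{(n)}-z\bigr)^{-1}\overline{P^{(n)}}$, using the uniform bound $\|P^{(n)}\|\leq\tfrac32$ from~\eqref{norm-projections} for the first summand, the induction hypothesis~\eqref{ind-hyp-resolvent-norm} for the second, and $\rho_{n+1}<\rho_n$, one gets
\begin{equation*}
\big\|\bigl(H_g^{(n)}-z\bigr)^{-1}\big\|\leq\frac{3/2}{|z-E_g^{(n)}|}+\frac{C_n}{\tfrac{\sin(\vartheta)}{2}\rho_{n+1}+|z-E_g^{(n)}|}.
\end{equation*}

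It remains to merge the two terms. The hypothesis $|z-E_g^{(n)}|\geq\tfrac{\sin(\vartheta)}{16}\rho_{n+1}$ gives $\tfrac{\sin(\vartheta)}{2}\rho_{n+1}\leq 8\,|z-E_g^{(n)}|$, hence $\tfrac{\sin(\vartheta)}{2}\rho_{n+1}+|z-E_g^{(n)}|\leq 9\,|z-E_g^{(n)}|$, so $\tfrac32|z-E_g^{(n)}|^{-1}\leq\tfrac{27}{2}\bigl(\tfrac{\sin(\vartheta)}{2}\rho_{n+1}+|z-E_g^{(n)}|\bigr)^{-1}$; adding this to the $C_n$-term and using $C_n\geq43$ proves~\eqref{formula-feshbach-norm-n}. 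For~\eqref{formula-feshbach-norm-n-tilde} the identical argument applies with $\tH_g^{(n)}$ in place of $H_g^{(n)}$, with $\tP^{(n)}=P^{(n)}\otimes P_{\Omega^{(n,n+1)}}$ so that again $\|\tP^{(n)}\|\leq\tfrac32$, and with Lemma~\ref{invertibility-tilde} replacing~\eqref{ind-hyp-resolvent-norm}; the extra factor $\tfrac{4}{\sin(\vartheta)}$ appearing in Lemma~\ref{invertibility-tilde} is exactly what produces the additional $\sin(\vartheta)^{-1}$ in the constant $\tfrac{28}{\sin(\vartheta)}$.

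The substantive point is the first step, that the Feshbach--Schur inverse is literally the compression $\P\bigl(H_g^{(n)}-z\bigr)^{-1}\P$; once this is in place the lemma collapses to the bounds already established in the induction hypothesis and in Lemma~\ref{invertibility-tilde}, plus the trivial pole term, and the remaining work is constant bookkeeping with generous slack. I expect the only real care to be needed in checking that $\widetilde{\mathcal{A}}_n\setminus\bigl\{E_g^{(n)}\bigr\}$, and its analogue for $\tH_g^{(n)}$, genuinely sits inside the region where the induction hypothesis (respectively Lemma~\ref{invertibility-tilde}) provides invertibility --- which is precisely why $\gamma<\tfrac12$, forcing $\widetilde{\mathcal{A}}_n\subset\mathcal{A}_n$, is imposed.
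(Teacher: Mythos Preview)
Your proof is correct and follows essentially the same approach as the paper: identify $H_{g_0,g}^{(n)}(z)^{-1}$ with $\P(H_g^{(n)}-z)^{-1}\P$ via Feshbach--Schur isospectrality, split the resolvent along $P^{(n)}$ and $\overline{P^{(n)}}$, bound the complement piece by the induction hypothesis~\eqref{ind-hyp-resolvent-norm} (respectively Lemma~\ref{invertibility-tilde} for the tilde case), and absorb the pole term $\tfrac{3/2}{|z-E_g^{(n)}|}$ into the same denominator using the gap hypothesis $|z-E_g^{(n)}|\geq\tfrac{\sin(\vartheta)}{16}\rho_{n+1}$. The only difference is cosmetic arithmetic (the paper obtains $24$ where you obtain $\tfrac{27}{2}$ for the pole contribution, both comfortably absorbed by $C_n\geq 43$).
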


\begin{proof}		
From the isospectrality of the Feshbach map it follows that
\begin{equation}
H_{g_0,g}^{(n)}(z)^{-1}=\P\bigl(H_g^{(n)}-z\bigr)^{-1}\P,\quad \tH_{g_0,g}^{(n)}(z)^{-1}=\P\bigl(\tH_g^{(n)}-z\bigr)^{-1}\P,
\end{equation}
where the projection $\P$ is of the form~\eqref{def-projection-p} with the identity operator involved adjusted to match the space considered. To avoid confusion with the projections to the eigenspaces, the infrared cutoff is not explicitly included here. With~\eqref{def-projection-p} implying~$\|\P\|=1$ for both cases considered, the bounds~\eqref{formula-feshbach-norm-n} and~\eqref{formula-feshbach-norm-n-tilde} can be derived by estimating the norm of the respective resolvent. Since the proofs are similar, the calculations are only carried out for the first bound. As $z$ is restricted such that it respects a minimal distance to $E_g^{(n)}$, one may calculate
\begin{align}
\big\|(H_g^{(n)}-z)^{-1}\big\|&=\big\|(H_g^{(n)}-z)^{-1}(\overline{P^{(n)}}+P^{(n)})\big\|\nonumber\\
&\leq\big\|(H_g^{(n)}-z)^{-1}\overline{P^{(n)}}\big\|+\big\|(H_g^{(n)}-z)^{-1}\tP^{(n)}\big\|\nonumber\\
&\leq \frac{C_n}{\frac{\sin(\vartheta)}{2}\rho_{n+1}+|z-E_g^{(n)}|}+\Big\|\frac{P_n}{E_g^{(n)}-z}\Big\|,
\end{align}
using~\eqref{ind-hyp-resolvent-norm} from the induction hypothesis and the fact that~$E_g^{(n)}$ is an eigenvalue of $H_g^{(n)}$. Note that $E_g^{(n)}$ is also an eigenvalue of $\tH_g^{(n)}$. When considering $\tH_g^{(n)}$, Lemma~\ref{invertibility-tilde} is used instead of~\eqref{ind-hyp-resolvent-norm}. The result can be rewritten and estimated further, leading to
\begin{align}
&\frac{C_n}{\frac{\sin(\vartheta)}{2}\rho_{n+1}+|z-E_g^{(n)}|}+\Big\|\frac{P_n}{E_g^{(n)}-z}\Big\|\nonumber\\
&\leq \frac{C_n}{\frac{\sin(\vartheta)}{2}\rho_{n}+|z-E_g^{(n)}|}+\frac{3}{2}\ \frac{2}{|E_g^{(n)}-z|+\frac{\sin(\vartheta)}{16}\rho_{n+1}}\nonumber\\
&\leq\Bigl(C_n+24\Bigr)\frac{1}{\frac{\sin(\vartheta)}{2}\rho_{n+1}+|z-E_g^{(n)}|},
\end{align}
which yields the desired bounds as both $C_n$ and $\frac{1}{\sin(\vartheta)}$ are greater than one.
\end{proof}

With this estimate in place, one may now begin to establish the bounds needed for the Neumann series expansion.
%
\begin{lemma}\label{neumann-estimate-1}		
Let $W_{m_1,m_2}^{(n,n+1)}$ with $m_1,m_2\in\{0,1,2\}$ such that $m_1+m_2=2$ denote the interaction terms involving the coupling function restricted to~$\rho_{n+1}\leq|k|\leq\rho_n$ and~$H_{ph}^{(n,n+1)}=\I^{(n)}\otimes H_{ph}^{(n,n+1)}$ denote the part of the photon field Hamiltonian acting on $\cF^{(n,n+1)}$ respectively. Then for $\varepsilon>0$ and $\phi\in\cF^{(n+1)}$, the following norm bounds hold
\begin{align}
&\big\|\tW_{m_1,m_2}^{(n,n+1)}\phi\big\|\leq\frac{C_{m_1,m_2}}{\vartheta d}\int_{\{\rho_{n+1}\leq|k|\leq\rho_n\}}\Bigl(\frac{1}{\omega(k)}+\frac{1}{\varepsilon}\Bigr)\big|G_{\theta}(k)\big|^2\ \dx^3 k\ \big\|\bigl(H_{ph}^{(n,n+1)}+\varepsilon\bigr)\phi\big\|\label{formula-neumann-estimate-1-separate}\\
&\big\|\bigl(\tW_{2,0}^{(n,n+1)}+\tW_{1,1}^{(n,n+1)}+\tW_{0,2}^{(n,n+1)}\bigr) \bigl(H_{ph}^{(n,n+1)}+\varepsilon\bigr)^{-1}\big\|\nonumber\\
&\quad\leq\frac{C}{\vartheta d}\int_{\{\rho_{n+1}\leq|k|\leq\rho_n\}}\Bigl(\frac{1}{\omega(k)}+\frac{1}{\varepsilon}\Bigr)\big|G_{\theta}(k)\big|^2\ \dx^3 k,\label{formula-neumann-estimate-1-assembled}
\end{align}
where the constants $C_{m_1,m_2},C>0$ neither depend on $n$ nor $\theta$. In particular, one may chose the constant in~\eqref{formula-neumann-estimate-1-assembled} as $C=10$.
\end{lemma}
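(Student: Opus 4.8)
The plan is to reduce both estimates to Lemma~\ref{standard-estimate}, treating the three cases $(m_1,m_2)\in\{(2,0),(1,1),(0,2)\}$ separately and summing them for the assembled bound~\eqref{formula-neumann-estimate-1-assembled}. For each pair, $\tW_{m_1,m_2}^{(n,n+1)}$ has the form read off from~\eqref{def-w-2-0},~\eqref{def-w-1-1}, resp.~\eqref{def-w-0-2}, with both coupling functions cut to the shell $\{\rho_{n+1}\le|k|\le\rho_n\}$ and the field Hamiltonian replaced by the one acting on $\cF^{(n+1)}$; besides the two factors of $G_\theta$ (or $\overline{G_{\bar\theta}}$), its kernel $w_{m_1,m_2}$ carries one operator-valued factor — two when $m_1=m_2=1$ — of resolvent type in $e^{-\theta}H_{ph}^{(n+1)}$, and after the creation operators of the string are commuted through them these are all of the form $\bigl(e^{-\theta}H_{ph}^{(n+1)}+c-z\bigr)^{-1}$ with $c$ a real scalar of modulus $\lesssim\rho_n$.

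First I would extract these resolvent factors from the momentum integrals. Since $e^{-\theta}H_{ph}^{(n+1)}+c$ has spectrum on the ray $e^{-\theta}([0,\infty)+c)$ and $z\in\widetilde{\mathcal{A}}_n$, the distance between $z$ and this ray is bounded below by a fixed multiple of $\vartheta$ for all admissible $c$ — this is exactly the geometric estimate already performed in the proof of Lemma~\ref{feshbach-estimates-cutoff}, cf.~\eqref{calc-feshbach-estimates-cutoff-1} — so each such factor is bounded in operator norm by $(\vartheta d)^{-1}$, uniformly in the momenta. This is the source of the prefactor $(\vartheta d)^{-1}$ in~\eqref{formula-neumann-estimate-1-separate}, and the presence of two resolvent terms in the $W_{1,1}$ kernel is what makes $C_{1,1}$ the largest of the three constants.

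It then remains to bound, for $\phi\in\cF^{(n+1)}$, the creation/annihilation strings left after this extraction — schematically $\int a^{*}(k_1)a^{*}(k_2)G_\theta(k_1)G_\theta(k_2)\,\dx^3k_1\dx^3k_2$, $\int a^{*}(k)G_\theta(k)\overline{G_{\bar\theta}(\tilde k)}a(\tilde k)\,\dx^3k\,\dx^3\tilde k$, and $\int\overline{G_{\bar\theta}(\tilde k_1)}\,\overline{G_{\bar\theta}(\tilde k_2)}a(\tilde k_1)a(\tilde k_2)\,\dx^3\tilde k_1\dx^3\tilde k_2$, with all momenta in the shell. Each field operator is handled by Lemma~\ref{standard-estimate} with $\rho=\varepsilon$ and $\supp(G)$ the shell, so $H_{ph}(\1_{\supp(G)}\omega)=H_{ph}^{(n,n+1)}$: an annihilation operator contributes $\|G_\theta/\sqrt\omega\|_{L^2}$ over the shell and a half-power $(H_{ph}^{(n,n+1)})^{1/2}$, while a creation operator is treated via the CCR exactly as in~\eqref{standard-estimate-calculation-2}, which is where the extra $\|G_\theta\|_{L^2}$ term enters; squaring the contributions assembles $\int_{\{\rho_{n+1}\le|k|\le\rho_n\}}(\omega(k)^{-1}+\varepsilon^{-1})|G_\theta(k)|^2\,\dx^3k$. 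For $W_{0,2}$ the annihilation estimate is applied twice, using the pull-through inequality $\|(H_{ph}^{(n,n+1)})^{1/2}a(\tilde k)\phi\|\le\|a(\tilde k)(H_{ph}^{(n,n+1)})^{1/2}\phi\|$ to move the first half-power past the second annihilation operator without losing its dispersion value; the two half-powers then combine into the single factor $\|(H_{ph}^{(n,n+1)}+\varepsilon)\phi\|$ of~\eqref{formula-neumann-estimate-1-separate}, for which $W_{2,0}$ and $W_{1,1}$ need at most the first power (or none) and so satisfy the uniform bound a fortiori. Finally, for~\eqref{formula-neumann-estimate-1-assembled} I would apply~\eqref{formula-neumann-estimate-1-separate} to each summand with $\phi$ replaced by $(H_{ph}^{(n,n+1)}+\varepsilon)^{-1}\psi$, so that the right-hand side factor becomes $\|\psi\|$, add the three bounds, and track the numerical constants to conclude $C_{2,0}+C_{1,1}+C_{0,2}\le 10$.

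The main obstacle is not the combinatorics but two points of care. First, one must justify that the resolvent factors really can be pulled out of the momentum integrals with a bound uniform in the integration variables: this works precisely because each is a function of the single normal operator $e^{-\theta}H_{ph}^{(n+1)}$ plus a bounded real shift, so the spectral-theorem/geometry estimate of Lemma~\ref{feshbach-estimates-cutoff} applies verbatim with $\delta\gtrsim\vartheta d$. Second, one must get the power of $H_{ph}^{(n,n+1)}+\varepsilon$ exactly right in the $W_{0,2}$ case, which forces the pull-through inequality above and is what keeps the dispersion values $\omega(\tilde k)\le\rho_n$ from leaking into the estimate; without it one would only obtain $\|(H_{ph}^{(n,n+1)}+\rho_n+\varepsilon)\phi\|$ on the right.
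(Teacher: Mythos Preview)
Your plan is correct in its essentials and shares with the paper the decisive step: bounding each resolvent factor in the kernels $w_{m_1,m_2}$ uniformly by $(\vartheta d)^{-1}$ via the geometric estimate of Lemma~\ref{feshbach-estimates-cutoff}, and then deriving~\eqref{formula-neumann-estimate-1-assembled} from~\eqref{formula-neumann-estimate-1-separate} by substituting $\phi=(H_{ph}^{(n,n+1)}+\varepsilon)^{-1}\psi$ and summing. Where you diverge from the paper is in the handling of the field-operator products. The paper does \emph{not} iterate Lemma~\ref{standard-estimate}; instead it squares $\|\tW_{m_1,m_2}^{(n,n+1)}\phi\|$, applies the CCR to normal-order the resulting four creation/annihilation operators, and bounds the surviving integrals by Cauchy--Schwarz together with the identification $\int\omega(k_1)\omega(k_2)\|a(k_1)a(k_2)\phi\|^2\,\dx k_1\dx k_2\le\|H_{ph}^{(n,n+1)}\phi\|^2$. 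For $W_{0,2}$ this is a single term and gives $C_{0,2}=1$; for $W_{1,1}$ the CCR produces two terms ($I_1$, $I_2$) and $C_{1,1}=4$; for $W_{2,0}$ the expansion of $a(k_2)a(k_1)a^*(k_3)a^*(k_4)$ yields \emph{seven} terms and $C_{2,0}=5$. So contrary to your guess it is $W_{2,0}$, not $W_{1,1}$, that carries the largest constant --- the two resolvent pieces in $w_{1,1}$ contribute only a factor of $2$ inside the square, whereas the CCR combinatorics for two creation operators is what drives $C_{2,0}$.

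Your route via iterated Lemma~\ref{standard-estimate} can be made to work, and the pull-through inequality you quote is correct (it follows from $a(k)H_{ph}^{1/2}=(H_{ph}+\omega(k))^{1/2}a(k)$). But note that after bounding the operator-valued resolvent in norm you are left with $\int|G_\theta(k_1)||G_\theta(k_2)|\,\|a(k_1)a(k_2)\phi\|\,\dx k_1\dx k_2$ rather than $\|a(G_\theta)a(G_\theta)\phi\|$, since the kernel depends on the integration variables; to close you must still apply Cauchy--Schwarz on this double integral, which is precisely the paper's step. In that sense the two arguments converge at the point where the actual estimate happens, and what you describe is a repackaging of the same computation rather than a genuinely different mechanism.
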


\begin{proof}		
To derive the second inequality, the estimates for $W_{2,0}^{(n,n+1)}$, $W_{1,1}^{(n,n+1)}$ and~$W_{0,2}^{(n,n+1)}$ need to be developed first. First, we consider $W_{0,2}^{(n,n+1)}$. 
%
By continuity and the Cauchy-Schwarz inequality, the term may be written as
\begin{align}		
\big\|W_{0,2}^{(n,n+1)}\phi\big\|^2
&=\int\big\langle w_{0,2}^{(n,n+1)}a(k_1)a(k_2)\phi\big| w_{0,2}^{(n,n+1)}a(k_3)a(k_4)\phi\big\rangle\ \dx k_1\dx k_2\dx k_3\dx k_4\nonumber\\
&\leq\int\Bigl(\big\|w_{0,2}^{(n,n+1)}a(k_1)a(k_2)\phi\big\|\cdot\big\|w_{0,2}^{(n,n+1)}a(k_3)a(k_4)\phi\big\|\Bigr)\ \dx k_1\dx k_2\dx k_3\dx k_4\nonumber\\
&\leq\Bigr(\int\big\|w_{0,2}^{(n,n+1)}\big\|\cdot\big\| a(k_1)a(k_2)\phi\big\|\ \dx k_1\dx k_2\Bigl)^2,\label{calc-neumann-estimate-1-2}
\end{align}
with $\dx k_j$ denoting $\dx^3 k_j$ for readability. Since $w_{0,2}$ is a function of $H_{ph}^{(n+1)}$, the spectral theorem and an argument similar to the one used in the proof of Lemma \ref{standard-estimate} lead to
\begin{align}		
&\Bigr(\int\sup_{r\geq\rho_{n+1}}\big|w_{0,2}^{(n,n+1)}\big|\cdot\big\| a(k_1)a(k_2)\phi\big\|\ \dx k_1\dx k_2\Bigl)^2\label{calc-neumann-estimate-1-3}\\
&\leq\Bigr(\int\sup_{r\geq\rho_{n+1}}\big|w_{0,2}^{(n,n+1)}(r)\big|^2\frac{\dx k_1\dx k_2}{\omega(k_1)\omega(k_2)}\Bigr)\Bigl(\int\omega^{(n,n+1)}(k_1)\omega^{(n,n+1)}(k_2)\big\|a(k_1)a(k_2)\phi\big\|\ \dx k_1\dx k_2\Bigl),\nonumber
\end{align}
with $\omega^{(n,n+1)}$ denoting $\1_{B_n\setminus B_{n+1}}\omega$. The second term can now be identified as~$\|H_{ph}^{(n,n+1)}\phi\|^2$. Inserting the definition of $w_{0,2}$ then yields for the first term
\begin{align}		
\int\sup_{r\geq\rho_{n+1}}\big|w_{0,2}^{(n,n+1)}(r)\big|^2\frac{\dx k_1\dx k_2}{\omega(k_1)\omega(k_2)}
&\leq\sup_{s\geq\rho_{n+1}}\frac{1}{|e^{-\theta}s-z|^2}\Bigl(\int\frac{\big|G_{\bar{\theta}}^{(n,n+1)}(k)\big|^2}{\omega(k)}\dx k\Bigr)^2.\label{calc-neumann-estimate-1-4}
\end{align}
Here, the term $|e^{-\theta}s-z|$ can once more be estimated geometrically by the bound~$\vartheta d$ using the same argument as in Lemma~\ref{feshbach-estimates-cutoff}. Reassembling the two terms now leads to
\begin{align}		
\big\|W_{0,2}^{(n,n+1)}\phi\big\|&\leq\frac{1}{\vartheta d}\int\frac{1}{\omega(k)}\big|G_{\bar{\theta}}^{(n,n+1)}(k)\big|^2\ \dx k\ \big\|H_{ph}^{(n,n+1)}\phi\big\|\nonumber\\
&\leq\frac{1}{\vartheta d}\int\Bigl(\frac{1}{\omega(k)}+\frac{1}{\varepsilon}\Bigr)\big|G_{\bar{\theta}}^{(n,n+1)}(k)\big|^2\ \dx k\ \big\|\bigl(H_{ph}^{(n,n+1)}+\varepsilon\bigr)\phi\big\|,\label{calc-neumann-estimate-1-5}
\end{align}
with the last inequality holding for any $\varepsilon>0$, yielding the desired norm bound and the constant $C_{0,2}=1$.
%
The next term to be considered is $W_{1,1}^{(n,n+1)}$. We rewrite the term as
\begin{align}		
\big\|W_{1,1}^{(n,n+1)}\phi\big\|^2
&=\int\big\langle a^*(k_1)w_{1,1}^{(n,n+1)}a(k_2)\big| a^*(k_3)w_{1,1}^{(n,n+1)}a(k_4)\big\rangle\ \dx k_1\dx k_2\dx k_3\dx k_4\nonumber\\
&=\int\big\langle\phi\big|a^*(k_2)\bigl(w_{1,1}^{(n,n+1)}\bigr)^*a(k_1)a^*(k_3)w_{1,1}^{(n,n+1)}a(k_4)\big\rangle\ \dx k_1\dx k_2\dx k_3\dx k_4\nonumber\\
&=\int\big\langle\phi\big|a^*(k_3)a^*(k_2)\bigl(w_{1,1}^{(n,n+1)}\bigr)^*w_{1,1}^{(n,n+1)}a(k_1)a(k_4)\big\rangle\ \dx k_1\dx k_2\dx k_3\dx k_4\nonumber\\
&\quad+\int\big\langle\phi\big|a^*(k_2)\bigl(w_{1,1}^{(n,n+1)}\bigr)^*w_{1,1}^{(n,n+1)}a(k_4)\big\rangle\ \dx k_1\dx k_2\dx k_4\label{calc-neumann-estimate-1-6}
\end{align}
Note that it is $w_{1,1}^{(n,n+1)}=w_{1,1}^{(n,n+1)}[H_{ph}^{(n+1)}+\omega(k_1)]$ after applying the pull-through formulae in the last step. Similarly, it is $(w_{1,1}^{(n,n+1)})^*=(w_{1,1}^{(n,n+1)}[H_{ph}^{(n+1)}+\omega(k_3)])^*$. Let $I_1$ and $I_2$ denote the first and second summand of the above result respectively. First, it is
\begin{align}		
I_1&=\int\big\langle\phi\big|a^*(k_3)a^*(k_2)\bigl(w_{1,1}^{(n,n+1)}\bigr)^*w_{1,1}^{(n,n+1)}a(k_1)a(k_4)\big\rangle\ \dx k_1\dx k_2\dx k_3\dx k_4\nonumber\\
&=\int\big\langle w_{1,1}^{(n,n+1)}a(k_2)a(k_3)\phi\big|w_{1,1}^{(n,n+1)}a(k_1)a(k_4)\big\rangle\ \dx k_1\dx k_2\dx k_3\dx k_4,\label{calc-neumann-estimate-1-7}
\end{align}
which can be treated similarly to $W_{0,2}^{(n,n+1)}$, yielding
\begin{align}		
I_1&\leq\frac{4}{(\vartheta d)^2}\int\frac{\big|\1^{n,n+1}(k_1)G_{\bar{\theta}}(k_1)\1^{n,n+1}(k_2)G_{\theta}(k_2)\big|^2}{\omega(k_1)\omega(k_2)}\ \dx k_1\dx k_2\ \big\|H_{ph}^{(n,n+1)}\phi\big\|^2\nonumber\\
&\leq\frac{4}{(\vartheta d)^2}\Bigl(\int_{\{\rho_{n+1}\leq|k|\leq\rho_n\}}\Bigl(\frac{1}{\omega(k)}+\frac{1}{\varepsilon}\Bigr)\big|G_{\theta}(k)\big|^2\dx k\Bigr)^2\ \big\|\bigl(H_{ph}^{(n,n+1)}+\varepsilon\bigr)\phi\big\|^2.\label{calc-neumann-estimate-1-9}
\end{align}
The next term to be considered is $I_2$. The argument can be applied similarly, however, note that $w_{1,1}^{(n,n+1)}=w_{1,1}^{(n,n+1)}[H_{ph}^{(n+1)},k_1,k_2,k_1,k_4]$, i.e. both terms involving $\overline{G_{\bar{\theta}}}$ depend on~$k_1$. Applying the Cauchy-Schwarz inequality yields
\begin{align}		
I_2&=\int\big\langle\phi\big|a^*(k_2)\bigl(w_{1,1}^{(n,n+1)}\bigr)^*w_{1,1}^{(n,n+1)}a(k_4)\big\rangle\ \dx k_1\dx k_2\dx k_4\nonumber\\
&\leq\int\big\| w_{1,1}^{(n,n+1)}a(k_2)\phi\big\|\cdot\big\|w_{1,1}^{(n,n+1)}a(k_4)\big\|\ \dx k_1\dx k_2\dx k_4.\label{calc-neumann-estimate-1-10}
\end{align}
Inserting the definition of $w_{1,1}$ and applying a similar argument as previously leads to
\begin{align}		
I_2&\leq\frac{4}{(\vartheta d)^2}\Bigl(\int_{\{\rho_{n+1}\leq|k|\leq\rho_n\}}\frac{\big|G_{\theta}(k)\big|^2}{\varepsilon}\dx k\Bigr)\Bigl(\int_{\{\rho_{n+1}\leq|k|\leq\rho_n\}}\frac{\big|G_{\theta}(k)\big|^2}{\omega(k)}\ \dx k\Bigr)\ \big\|\bigl(H_{ph}^{(n,n+1)}+\varepsilon\bigr)\phi\big\|^2\nonumber\\
&\leq\frac{4}{(\vartheta d)^2}\Bigl(\int_{\{\rho_{n+1}\leq|k|\leq\rho_n\}}\Bigl(\frac{1}{\omega(k)}+\frac{1}{\varepsilon}\Bigr)\big|G_{\theta}(k)\big|^2\ \dx^3 k\Bigr)^2\big\|\bigl(H_{ph}^{(n,n+1)}+\varepsilon\bigr)\phi\big\|^2.\label{calc-neumann-estimate-1-16}
\end{align}
The final estimate for $W_{1,1}^{(n,n+1)}$ now follows from
\begin{equation}\label{calc-neumann-estimate-1-17}	
\big\|W_{1,1}^{(n,n+1)}\big\|\leq\sqrt{I_1+I_2}\leq\sqrt{I_1}+\sqrt{I_2}
\end{equation}
and adding the different bounds established above, yielding $C_{1,1}=4$. 
%
The last term to be considered is~$W_{2,0}^{(n,n+1)}$. Taking the same approach as previously, one arrives at
\begin{align}		
\big\|W_{2,0}^{(n,n+1)}\phi\big\|^2
&=\int\big\langle a^*(k_1)a^*(k_2)w_{2,0}^{(n,n+1)}\big| a^*(k_3)a^*(k_4)w_{2,0}^{(n,n+1)}\big\rangle\ \dx k_1\dx k_2\dx k_3\dx k_4\label{calc-neumann-estimate-1-18}\\
&=\int\big\langle\phi\big|\bigl(w_{2,0}^{(n,n+1)}\bigr)^*a(k_2)a(k_1)a^*(k_3)a^*(k_4)w_{2,0}^{(n,n+1)}\phi\big\rangle\ \dx k_1\dx k_2\dx k_3\dx k_4.\nonumber
\end{align}
Applying the canonical commutation relations to rearrange the operators leads to
\begin{align}		
a(k_2)a(k_1)a^*(k_3)a^*(k_4)&=a^*(k_3)a^*(k_4)a(k_2)a(k_1)\nonumber\\
&\quad+\delta(k_2-k_4)a^*(k_3)a(k_1)+\delta(k_3-k_1)a^*(k_4)a(k_2)\nonumber\\
&\quad+\delta(k_2-k_3)a^*(k_4)a(k_1)+\delta(k_1-k_4)a^*(k_3)a(k_2)\nonumber\\
&\quad+\delta(k_1-k_3)\delta(k_2-k_4)+\delta(k_1-k_4)\delta(k_2-k_3).\label{calc-neumann-estimate-1-19}
\end{align}
The result can now be inserted back into the integral, yielding a total of seven terms. The first of them, including two creation and annihilation operators each, can be treated similarly to the one occurring for $W_{0,2}^{(n,n+1)}$ and in $I_1$. The following four terms, each including only one creation and one annihilation operator, yield similar calculations as done for $I_2$. Considering one of the remaining terms leads to the estimate
\begin{align}		
I_3&:=\int\big\langle\phi\big|\bigl(w_{2,0}^{(n,n+1)}\bigr)^*\delta(k_1-k_3)\delta(k_2-k_4)w_{2,0}^{(n,n+1)}\phi\big\rangle\ \dx k_1\dx k_2\dx k_3\dx k_4\nonumber\\
&\leq\frac{1}{(\vartheta d)^2}\Bigl(\int_{\{\rho_{n+1}\leq|k|\leq\rho_n\}}\big|G_{\theta}(k)\big|^2\ \dx k\Bigr)^2\ \|\phi\|^2.\label{calc-neumann-estimate-1-20}
\end{align}
The second term can be treated similarly and yields the same bound. Taking the square root and expanding the term now leads to
\begin{align}		
\sqrt{I_3}&\leq\frac{1}{\vartheta d}\Bigl(\int_{\{\rho_{n+1}\leq|k|\leq\rho_n\}}\big|G_{\theta}(k)\big|^2\ \dx k\Bigr)\ \|\phi\|\nonumber\\
&\leq\frac{1}{\vartheta d}\Bigl(\int_{\{\rho_{n+1}\leq|k|\leq\rho_n\}}\frac{1}{\varepsilon}\big|G_{\theta}(k)\big|^2\ \dx k\Bigr)\ \|\varepsilon\phi\|\nonumber\\
&\leq\frac{1}{\vartheta d}\Bigl(\int_{\{\rho_{n+1}\leq|k|\leq\rho_n\}}\Bigl(\frac{1}{\omega}+\frac{1}{\varepsilon}\Bigr)\big|G_{\theta}(k)\big|^2\ \dx k\Bigr)\ \big\|\bigl(H_{ph}^{(n+1)}+\varepsilon)\phi\big\|.\label{calc-neumann-estimate-1-21}
\end{align}
The bound for $\|W_{2,0}^{(n,n+1)}\phi\|$ now follows by adding the bounds for the different terms, leading to $C_{2,0}=5$ and completing the proof of (i). Expanding
\begin{equation}\label{calc-neumann-estimate-1-22}	
\big\|\bigl(\tW_{2,0}^{(n,n+1)}+\tW_{1,1}^{(n,n+1)}+\tW_{0,2}^{(n,n+1)}\bigr)\big\|\leq\big\|\tW_{2,0}^{(n,n+1)}\phi\big\|+\big\|\tW_{1,1}^{(n,n+1)}\phi\big\|+\big\|\tW_{0,2}^{(n,n+1)}\big\|
\end{equation}
and choosing $\phi=\bigl(H_{ph}^{(n,n+1)}+\varepsilon\bigr)^{-1}\psi$ as done for the proof of Lemma \ref{standard-estimate} now yields~\eqref{formula-neumann-estimate-1-assembled} with $C=10$ by adding the constants.
\end{proof}
%
\begin{lemma}\label{neumann-estimate-2}	
Let $z\in\widetilde{\mathcal{A}}_n$ and $|z-E_g^{(n)}|\geq\frac{\sin(\vartheta)}{16}\rho_{n+1}$. Then the norm bound
\begin{equation}\label{formula-neumann-estimate-2}		
\big\|\bigl(H_{ph}^{(n,n+1)}+\rho_{n+1}\bigr)\tH_{g_0,g}^{(n)}(z)^{-1}\big\|\leq\frac{100\ C_n}{\sin(\vartheta)}
\end{equation}
holds. Here, the constant $C_n$ is the same as in~\eqref{ind-hyp-resolvent-norm}.
\end{lemma}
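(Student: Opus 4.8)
The plan is to exploit the product structure $\cF^{(n+1)}\cong\cF^{(n)}\otimes\cF^{(n,n+1)}$ from~\eqref{isomorphisms-tilde-to-n+1}, using that $\tH_{g_0,g}^{(n)}(z)$ fibres nicely over the spectrum of the bridge Hamiltonian. First I would record that, by its construction in Section~\ref{sect-application-feshbach} (the pull-through formulae and the canonical commutation relations~\eqref{ccr}), the operator $H_{g_0,g}^{(n)}(w)$ of~\eqref{def-feshbach-n} depends on $w$ and on $H_{ph}^{(n)}$ only through the combination $e^{-\theta}H_{ph}^{(n)}-w$; hence $\tH_{g_0,g}^{(n)}(z)$, which arises from it by the substitution $H_{ph}^{(n)}\mapsto H_{ph}^{(n+1)}=H_{ph}^{(n)}\otimes\I+\I\otimes H_{ph}^{(n,n+1)}$, depends on $z$ and $H_{ph}^{(n+1)}$ only through $e^{-\theta}H_{ph}^{(n+1)}-z$. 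In particular $\tH_{g_0,g}^{(n)}(z)$ commutes with $\I^{(n)}\otimes H_{ph}^{(n,n+1)}$, and, decomposing $\cF^{(n+1)}$ along the spectral resolution of $H_{ph}^{(n,n+1)}$, the fibre of $\tH_{g_0,g}^{(n)}(z)$ over the value $q$ is exactly $H_{g_0,g}^{(n)}(z-e^{-\theta}q)$ acting on the $\cF^{(n)}$-factor, because $e^{-\theta}H_{ph}^{(n+1)}-z$ becomes $e^{-\theta}H_{ph}^{(n)}-(z-e^{-\theta}q)$ there. Since every nonzero photon energy in $B_n\setminus B_{n+1}$ is at least $\rho_{n+1}$, the spectrum of $H_{ph}^{(n,n+1)}$ lies in $\{0\}\cup[\rho_{n+1},\infty)$, so that $\bigl(H_{ph}^{(n,n+1)}+\rho_{n+1}\bigr)\tH_{g_0,g}^{(n)}(z)^{-1}$ is the direct integral with fibre $(q+\rho_{n+1})\,H_{g_0,g}^{(n)}(z-e^{-\theta}q)^{-1}$ over $q\in\{0\}\cup[\rho_{n+1},\infty)$, and it remains to bound these fibre norms uniformly in $q$.

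To estimate the fibre at $q$ I would apply the bound~\eqref{formula-feshbach-norm-n} of Lemma~\ref{feshbach-norm} to $H_{g_0,g}^{(n)}(w)$ with $w=z-e^{-\theta}q$, after checking that $w\in\widetilde{\mathcal{A}}_n\setminus\{E_g^{(n)}\}$ and $|w-E_g^{(n)}|\geq\tfrac{\sin(\vartheta)}{16}\rho_{n+1}$. Both follow from the geometry of~\eqref{def-a-tilde}: since $\mathcal{A}$ of~\eqref{def-original-set} is a rectangle swept in the direction $-e^{-i\vartheta}=-e^{-\theta}$, subtracting $e^{-\theta}q$ keeps $z$ inside $\mathcal{A}$; and since $\mathrm{Im}(e^{-\theta}q)=-q\sin(\vartheta)$ while $z\in\widetilde{\mathcal{A}}_n$ forces $\mathrm{Im}(z)\geq\mathrm{Im}(E_g^{(n)})-\tfrac{\sin(\vartheta)}{2}\rho_{n+1}$, one has
\[
\mathrm{Im}\bigl(w-E_g^{(n)}\bigr)=\mathrm{Im}(z)-\mathrm{Im}(E_g^{(n)})+q\sin(\vartheta)\geq q\sin(\vartheta)-\tfrac{\sin(\vartheta)}{2}\rho_{n+1},
\]
which is $>-\tfrac{\sin(\vartheta)}{2}\rho_{n+1}$, so $w$ avoids the half-plane removed in~\eqref{def-a-tilde} and hence $w\in\widetilde{\mathcal{A}}_n$; moreover, for $q\geq\rho_{n+1}$ this shows $|w-E_g^{(n)}|\geq\tfrac{\sin(\vartheta)}{2}q$, while for $q=0$ the condition $|w-E_g^{(n)}|=|z-E_g^{(n)}|\geq\tfrac{\sin(\vartheta)}{16}\rho_{n+1}$ is the hypothesis of the lemma.

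Feeding this into~\eqref{formula-feshbach-norm-n} gives, on the fibre at $q$,
\[
\bigl\|(q+\rho_{n+1})\,H_{g_0,g}^{(n)}(z-e^{-\theta}q)^{-1}\bigr\|\leq\frac{25\,C_n\,(q+\rho_{n+1})}{\tfrac{\sin(\vartheta)}{2}\rho_{n+1}+|w-E_g^{(n)}|}\leq\frac{25\,C_n\,(q+\rho_{n+1})}{\tfrac{\sin(\vartheta)}{2}(\rho_{n+1}+q)}=\frac{50\,C_n}{\sin(\vartheta)},
\]
where for $q\geq\rho_{n+1}$ one uses $|w-E_g^{(n)}|\geq\tfrac{\sin(\vartheta)}{2}q$ and for $q=0$ simply $|w-E_g^{(n)}|\geq0$. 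Since this operator is block-diagonal with respect to the decomposition, its norm is the supremum over $q$ of these fibre norms, which is at most $\tfrac{50\,C_n}{\sin(\vartheta)}\leq\tfrac{100\,C_n}{\sin(\vartheta)}$, the claimed bound (with a factor of two to spare). The step I expect to require the most care is the structural reduction of the first paragraph — verifying, from the pull-through computation behind~\eqref{def-w-2-0}--\eqref{def-w-0-2}, that $H_{g_0,g}^{(n)}(w)$ really does depend on $w$ and $H_{ph}^{(n)}$ only through $e^{-\theta}H_{ph}^{(n)}-w$ for each of $W_{2,0}^{(n)}$, $W_{1,1}^{(n)}$, $W_{0,2}^{(n)}$ and the $W_{0,0}^{(n)}$-integral — whereas the two geometric estimates are short and parallel to those already used in Lemmas~\ref{invertibility-0} and~\ref{invertibility-tilde}.
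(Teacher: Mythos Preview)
Your proposal is correct and follows essentially the same route as the paper: both reduce via the spectral decomposition of $H_{ph}^{(n,n+1)}$ to bounding the fibres $(q+\rho_{n+1})\,H_{g_0,g}^{(n)}(z-e^{-\theta}q)^{-1}$, invoke Lemma~\ref{feshbach-norm} at the shifted point, and then control the denominator by the imaginary-part estimate $\mathrm{Im}(w-E_g^{(n)})\geq q\sin(\vartheta)-\tfrac{\sin(\vartheta)}{2}\rho_{n+1}$. Your write-up is in fact slightly tidier --- you explicitly verify that $w=z-e^{-\theta}q\in\widetilde{\mathcal{A}}_n$ before applying Lemma~\ref{feshbach-norm}, and your lower bound $|w-E_g^{(n)}|\geq\tfrac{\sin(\vartheta)}{2}q$ is marginally sharper than the paper's $\tfrac{\sin(\vartheta)}{4}(q+\rho_{n+1})$, which is why you land at $50\,C_n/\sin(\vartheta)$ rather than $100\,C_n/\sin(\vartheta)$.
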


\begin{proof}		
Recall that $\tH_{g_0,g}^{(n)}(z)$ is given by~\eqref{def-feshbach-n-tilde} and note that
\begin{equation}\label{calc-neumann-estimate-2-1}		
H_{ph}^{(n+1)}=H_{ph}^{(n)}\otimes\I^{(n,n+1)}+\I^{(n)}\otimes H_{ph}^{(n,n+1)}.
\end{equation}
The identity operators are omitted in the following calculations, i.e., \eqref{calc-neumann-estimate-2-1} is written as~$H_{ph}^{(n+1)}=H_{ph}^{(n)}+H_{ph}^{(n,n+1)}$. Observe that applying the spectral theorem for~$H_{ph}^{(n,n+1)}$ to the term to be estimated leads to
\begin{align}		
&\big\|\bigl(H_{ph}^{(n,n+1)}+\rho_{n+1}\bigr)\tH_{g_0,g}^{(n)}(z)^{-1}\big\|\nonumber\\
&=\big\|\bigl(H_{ph}^{(n,n+1)}+\rho_{n+1}\bigr)\bigl[2-z+e^{-\theta}\bigl(H_{ph}^{(n)}+H_{ph}^{(n,n+1)}\bigr)-g_0^2W_{0,0}\bigl(H_{ph}^{(n)}+H_{ph}^{(n,n+1)}\bigr)\nonumber\\
&\quad-g^2\bigl(\tW_{2,0}^{(n)}\bigl(H_{ph}^{(n)}+H_{ph}^{(n,n+1)}\bigr) +\tW_{1,1}^{(n)}\bigl(H_{ph}^{(n)}+H_{ph}^{(n,n+1)}\bigr)+\tW_{0,2}^{(n)}\bigl(H_{ph}^{(n)}+H_{ph}^{(n,n+1)}\bigr)\bigr)\bigr]^{-1}\big\|\nonumber\\
&=\sup_{r\in\sigma\bigl(H_{ph}^{(n,n+1)}\bigr)}\big\|(r+\rho_{n+1})\bigl[2-z+e^{-\theta}\bigl(H_{ph}^{(n)}+r\bigr)-g_0^2W_{0,0}\bigl(H_{ph}^{(n)}+r\bigr)\nonumber\\
&\quad-g^2\bigl(\tW_{2,0}^{(n)}\bigl(H_{ph}^{(n)}+r\bigr) +\tW_{1,1}^{(n)}\bigl(H_{ph}^{(n)}+r\bigr)+\tW_{0,2}^{(n)}\bigl(H_{ph}^{(n)}+r\bigr)\bigr)\bigr]^{-1}\big\|\nonumber\\
&=\sup_{r\in\sigma\bigl(H_{ph}^{(n,n+1)}\bigr)}\big\|(r+\rho_{n+1})H_{g_0,g}^{(n)}(z-e^{-\theta}r)\big\|.\label{calc-neumann-estimate-2-2}
\end{align}
The result can directly be estimated using Lemma \ref{feshbach-norm}. By the definition of the operator, it is $\sigma\bigl(H_{ph}^{(n,n+1)}\bigr)\subset\{0\}\cup[\rho_{n+1},\infty)$ such that the term can be split up, leading to
\begin{align}		
&\sup_{r\in\sigma\bigl(H_{ph}^{(n,n+1)}\bigr)}\big\|(r+\rho_{n+1})H_{g_0,g}^{(n)}(z-e^{-\theta}r)^{-1}\big\|\nonumber\\
&=\max\Bigl\{\rho_{n+1}\big\|H_{g_0,g}^{(n)}(z)^{-1}\big\|,\sup_{r\geq\rho_{n+1}}(r+\rho_{n+1})\big\|H_{g_0,g}^{(n)}(z-e^{-\theta}r)\big\|\Bigr\}\nonumber\\
&\leq\max\Bigl\{\frac{45\ C_n}{\sin(\vartheta)},\ \sup_{r\geq\rho_{n+1}}(r+\rho_{n+1})\frac{25\ C_n}{\frac{\sin(\vartheta)}{2}\rho_{n+1}+|E_g^{(n)}-z+e^{-\theta}r|}\Bigr\}\label{calc-neumann-estimate-2-3}
\end{align}
using the restrictions on $z$ from the assumptions of the lemma. For the second entry, a geometric argument is used to obtain the desired bound. Recall the shape of $\widetilde{\mathcal{A}}_n$ from Figure~\ref{fig-set-a-n-tilde} and note that the term can be interpreted as the distance between the line~$\{e^{-\theta}r|r\geq\rho_{n+1}\}$ and the set consisting of the possible values for $z-E_g^{(n)}$. By the definition of the set $\widetilde{\mathcal{A}}_n$, it is
\begin{align}\label{calc-neumann-estimate-2-5}		
\mathrm{Im}\bigl(z-E_g^{(n)}\bigr)&\geq\mathrm{Im}\bigl(E_g^{(n)}\bigr)-\frac{\sin(\vartheta)}{2}\rho_{n+1}-\mathrm{Im}\bigl(E_g^{(n)}\bigr)=-\frac{\sin(\vartheta)}{2}\rho_{n+1},
\end{align}
which suggests a similar approach to Lemma~\ref{invertibility-0} in the induction hypothesis. Choose a value~${z_0\in\widetilde{\mathcal{A}}}$ such that the minimum in~\eqref{calc-neumann-estimate-2-5} is attained, i.e., $\mathrm{Im}(z_0)=-\frac{\sin(\vartheta)}{2}\rho_{n+1}$. Estimating the distance to the line $\{e^{-\theta}r|r\geq\rho_{n+1}\}$ now leads to
\begin{align}		
|z-E_g^{(n)}-e^{-\theta}r|&\geq\Big|-\frac{\sin(\vartheta)}{2}\rho_{n+1}+r\sin(\vartheta)\Big|
\geq\frac{\sin(\vartheta)}{4}(r+\rho_{n+1})\label{calc-neumann-estimate-2-6}
\end{align}
by only considering the imaginary part. This implies that
\begin{equation}\label{calc-neumann-estimate-2-7}		
(r+\rho_{n+1})\cdot\frac{25\ C_n}{\frac{\sin(\vartheta)}{2}\rho_{n+1}+\frac{\sin(\vartheta)}{4}(\rho_{n+1}+r)}\leq\frac{100\ C_n}{\sin(\vartheta)}
\end{equation}
holds uniformly in $r\geq\rho_{n+1}$ when the estimate is applied to the second term in~\eqref{calc-neumann-estimate-2-3}. The bound can now be assembled to yield
\begin{equation}\label{calc-neumann-estimate-2-8}		
\big\|\bigl(H_{ph}^{(n,n+1)}+\rho_{n+1}\bigr)\tH_{g_0,g}^{(n)}(z)^{-1}\big\|\leq\max\Bigl\{\frac{45\ C_n}{\sin(\vartheta)},\ \frac{100\ C_n}{\sin(\vartheta)}\Bigr\}=\frac{100\ C_n}{\sin(\vartheta)},
\end{equation}
which is exactly~\eqref{formula-neumann-estimate-2} and, therefore, completes the proof.
\end{proof}

As the next step, the existence of $H_{g_0,g}^{(n+1)}(z)^{-1}$ is considered using a Neumann series expansion. Here, the results of Lemmas~\ref{neumann-estimate-1} and~\ref{neumann-estimate-2} can be assembled to show convergence and, therefore, establish the existence of the resolvent of $H_g^{(n+1)}$ by the isospectrality of the Feshbach-Schur map.
%
\begin{theorem}\label{invertibility-n+1}		
Let $z\in\widetilde{\mathcal{A}}_n\setminus\bigl\{E_g^{(n)}\bigr\}$ with $\big|z-E_g^{(n)}\big|\geq\frac{\sin(\vartheta)}{16}\rho_{n+1}$ and choose the parameters $g$, $\rho_0$ and $\gamma$ such that

\begin{equation}\label{parameter-restrictions-2}		
\rho_0\leq\Bigl(\frac{1}{43}\Bigr)^4,\quad \gamma<\Bigl(\frac{1120}{\sin(\vartheta)}+\frac{600}{\vartheta d\cdot\sin(\vartheta)}\Bigr)^4,\quad  g^2\cdot\frac{4100C_f^2\pi}{\sin(\vartheta)\gamma\vartheta d}<\frac{1}{2},
\end{equation}
which implies the relation $\rho_n^{\frac{1}{4}}C_n<1$. Then the norm bound
\begin{equation}\label{formula-invertibility-n+1}	
\big\|H_{g_0,g}^{(n+1)}(z)^{-1}\big\|\leq2\ \big\|\tH_{g_0,g}^{(n)}(z)^{-1}\big\|
\end{equation}
holds and $z$ lies in the resolvent set of $H_g^{(n+1)}$.
\end{theorem}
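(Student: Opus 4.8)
The plan is to compare $H_{g_0,g}^{(n+1)}(z)$ with the operator $\tH_{g_0,g}^{(n)}(z)$ of~\eqref{def-feshbach-n-tilde}, which acts on the same Fock space $\cF^{(n+1)}$ and whose inverse is already under control by Lemma~\ref{feshbach-norm}. Both operators share the free part $2-z+e^{-\theta}H_{ph}^{(n+1)}$ and differ only in the infrared cutoff of the coupling function, so the difference
\[
V:=H_{g_0,g}^{(n+1)}(z)-\tH_{g_0,g}^{(n)}(z)=-g_0^2W_{0,0}^{(n,n+1)}-g^2\bigl(\tW_{2,0}^{(n,n+1)}+\tW_{1,1}^{(n,n+1)}+\tW_{0,2}^{(n,n+1)}\bigr)
\]
is built entirely from interaction terms carrying the coupling function restricted to the shell $\rho_{n+1}\le|k|\le\rho_n$. (That $H_{g_0,g}^{(n+1)}(z)$ is itself well-defined for the $z$ under consideration follows from Lemma~\ref{feshbach-estimates-cutoff} applied at scale $n+1$, exactly as noted there.) Factoring
\[
H_{g_0,g}^{(n+1)}(z)=\bigl(\I+V\,\tH_{g_0,g}^{(n)}(z)^{-1}\bigr)\,\tH_{g_0,g}^{(n)}(z),
\]
the claim reduces to the single estimate $\bigl\|V\,\tH_{g_0,g}^{(n)}(z)^{-1}\bigr\|<\tfrac12$: a Neumann series then yields $\bigl(\I+V\,\tH_{g_0,g}^{(n)}(z)^{-1}\bigr)^{-1}$ with norm at most $2$, whence $H_{g_0,g}^{(n+1)}(z)^{-1}=\tH_{g_0,g}^{(n)}(z)^{-1}\bigl(\I+V\,\tH_{g_0,g}^{(n)}(z)^{-1}\bigr)^{-1}$ exists and satisfies~\eqref{formula-invertibility-n+1}.

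To obtain this estimate I would insert $\bigl(H_{ph}^{(n,n+1)}+\rho_{n+1}\bigr)^{-1}\bigl(H_{ph}^{(n,n+1)}+\rho_{n+1}\bigr)$ between $V$ and $\tH_{g_0,g}^{(n)}(z)^{-1}$. The right-hand factor $\bigl(H_{ph}^{(n,n+1)}+\rho_{n+1}\bigr)\tH_{g_0,g}^{(n)}(z)^{-1}$ is bounded by $100\,C_n/\sin(\vartheta)$ by Lemma~\ref{neumann-estimate-2}. For the left-hand factor $V\bigl(H_{ph}^{(n,n+1)}+\rho_{n+1}\bigr)^{-1}$, the contribution of the $m_1+m_2=2$ terms is controlled by~\eqref{formula-neumann-estimate-1-assembled} of Lemma~\ref{neumann-estimate-1} with $\varepsilon=\rho_{n+1}$, while the contribution of $W_{0,0}^{(n,n+1)}$, which commutes with $H_{ph}^{(n,n+1)}$, is bounded directly using $\bigl\|\bigl(e^{-\theta}H_{ph}^{(n+1)}+e^{-\theta}\omega(k)-z\bigr)^{-1}\bigr\|\le(\vartheta d)^{-1}$ (the geometric estimate of Lemma~\ref{feshbach-estimates-cutoff}) and $\bigl\|\bigl(H_{ph}^{(n,n+1)}+\rho_{n+1}\bigr)^{-1}\bigr\|=\rho_{n+1}^{-1}$. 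In both cases one ends up with a fixed multiple of $g^2(\vartheta d)^{-1}\int_{\rho_{n+1}\le|k|\le\rho_n}\bigl(\omega(k)^{-1}+\rho_{n+1}^{-1}\bigr)|G_\theta(k)|^2\,\dx^3k$, and evaluating this shell integral in polar coordinates — with $|G_\theta(k)|^2\le C_f^2|k|^{-1}$ on $B_0$ and $\rho_{n+1}=\gamma\rho_n$ — gives a bound of order $C_f^2\,\rho_n/\gamma$.

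Combining the two factors gives $\bigl\|V\,\tH_{g_0,g}^{(n)}(z)^{-1}\bigr\|\le\const\cdot g^2C_f^2\,\rho_nC_n\,\bigl(\gamma\,\vartheta d\,\sin(\vartheta)\bigr)^{-1}$, and this is where the bookkeeping of the scales closes: the restrictions~\eqref{parameter-restrictions-2} are chosen precisely so that $\rho_n^{1/4}C_n<1$, hence $\rho_nC_n\le\rho_n^{1/4}C_n<1$, and what remains is a fixed multiple of $g^2C_f^2\bigl(\gamma\,\vartheta d\,\sin(\vartheta)\bigr)^{-1}$, which is below $\tfrac12$ by the third restriction in~\eqref{parameter-restrictions-2}. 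This yields $\bigl\|V\,\tH_{g_0,g}^{(n)}(z)^{-1}\bigr\|<\tfrac12$ and hence~\eqref{formula-invertibility-n+1}. Finally, since $F_{\P}\bigl(H_g^{(n+1)}-z\bigr)=P_{\uparrow}\otimes H_{g_0,g}^{(n+1)}(z)$ by the construction in Section~\ref{sect-application-feshbach} and the auxiliary operator $\overline{\P}\bigl(H_g^{(n+1)}-z\bigr)\overline{\P}$ is invertible on $\Ran\overline{\P}$ by Lemma~\ref{feshbach-estimates-cutoff}, the isospectrality of the Feshbach-Schur map gives that $H_g^{(n+1)}-z$ is invertible, i.e.\ $z$ lies in the resolvent set of $H_g^{(n+1)}$. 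The main obstacle throughout is precisely the competition between the exponential growth of the resolvent constant $C_n$ (a consequence of the critical coupling, propagated through the induction) and the exponential shrinking of the shell $\rho_n$; the reason Lemmas~\ref{neumann-estimate-1} and~\ref{neumann-estimate-2} are cast in terms of $\bigl(H_{ph}^{(n,n+1)}+\rho_{n+1}\bigr)^{\pm1}$ is exactly that these two factors then recombine into the product $\rho_nC_n$, which the parameter restrictions are tailored to defeat.
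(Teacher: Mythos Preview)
Your proposal is correct and follows essentially the same route as the paper: you write the Neumann expansion of $H_{g_0,g}^{(n+1)}(z)^{-1}$ around $\tH_{g_0,g}^{(n)}(z)^{-1}$, insert $(H_{ph}^{(n,n+1)}+\rho_{n+1})^{\pm1}$, invoke Lemmas~\ref{neumann-estimate-1} and~\ref{neumann-estimate-2} for the two factors, compute the shell integral to extract the $\rho_n/\gamma$ dependence, and combine with~\eqref{parameter-restrictions-2} to get the bound below $\tfrac12$ before concluding via Feshbach--Schur isospectrality. The paper additionally records the slightly sharper intermediate bound $\|V\,\tH_{g_0,g}^{(n)}(z)^{-1}\|<\tfrac12\,\rho_n^{3/4}$ (rather than just $<\tfrac12$), which it reuses later in the proof of Theorem~\ref{estimate-difference-resolvents}, so you may want to keep track of that extra factor.
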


\begin{proof}		
With the invertibility of $\tH_{g_0,g}^{(n)}(z)$ already established,  we use Neumann series expansion to write
\begin{align}		
H_{g_0,g}^{(n+1)}(z)^{-1}
&=\sum_{j=1}^{\infty}\tH_{g_0,g}^{(n)}(z)^{-1}\Bigl[\Bigl(g_0^2\tW_{0,0}^{(n,n+1)}+g^2\bigl(\tW_{2,0}^{(n,n+1)}+\tW_{1,1}^{(n,n+1)}\nonumber\\
&\quad\quad\quad\quad\quad\quad\quad\quad+\tW_{0,2}^{(n,n+1)}\bigr)\Bigr)\tH_{g_0,g}^{(n)}(z)^{-1}\Bigr]^j\label{calc-invertibility-n+1-1}
\end{align}
with the series converging when the norm of the operator in square brackets is less than one. Expanding part of the term yields
\begin{align}		
&g^2\cdot\big\|\bigl(\tW_{2,0}^{(n,n+1)}+\tW_{1,1}^{(n,n+1)}+\tW_{0,2}^{(n,n+1)}\bigr)\tH_{g_0,g}^{(n)}(z)^{-1}\big\|\\
&\leq g^2\cdot\big\|\bigl(\tW_{2,0}^{(n,n+1)}+\tW_{1,1}^{(n,n+1)}+\tW_{0,2}^{(n,n+1)}\bigr)\label{calc-invertibility-n+1-2} \bigl(H_{ph}^{(n,n+1)}+\varepsilon\bigr)^{-1}\big\|\cdot\big\|\bigl(H_{ph}^{(n,n+1)}+\varepsilon\bigr) \tH_{g_0,g}^{(n)}(z)^{-1}\big\|\nonumber
\end{align}
Here, Lemmas \ref{neumann-estimate-1} and \ref{neumann-estimate-2} can be applied. Note that the choice  $\varepsilon=\rho_{n+1}$ yields the best estimate for the first term such that it can be estimated by
\begin{align}
\frac{10}{\vartheta d}\int_{\{\rho_{n+1}\leq|k|\leq\rho_n\}}\Bigl(\frac{1}{\omega(k)}+\frac{1}{\varepsilon}\Bigr)\big|G_{\theta}(k)\big|^2\ \dx^3 k\leq\frac{20}{\vartheta d\rho_{n+1}}\int_{\{\rho_{n+1}\leq|k|\leq\rho_n\}}\big|G_{\theta}(k)\big|^2\ \dx^3 k.
\end{align}
Next, the definition of $G_{\theta}$ is inserted and the integrand is estimated once more. The integral can then be calculated using spherical coordinates, leading to
\begin{align}		
\int_{\{\rho_{n+1}\leq|k|\leq\rho_n\}}\big|G_{\theta}(k)\big|^2\ \dx^3 k
&=2\pi\ C_f^2\ (\rho_n^2-\rho_{n+1}^2)\leq2\pi\ C_f^2\ \rho_n^2.\label{calc-invertibility-n+1-3}
\end{align}
Here, $C_f$ denotes the maximal value of $|f(k)|$ for $k\in\overline{B(0,\rho_0)}$. In particular, the estimate for $f$ is uniform in $n$ and, therefore, yields a constant that is independent of the infrared cutoff. The constants can now be assembled, implying
\begin{align}		
g^2\cdot \big\|\bigl(\tW_{2,0}^{(n,n+1)}+\tW_{1,1}^{(n,n+1)}+\tW_{0,2}^{(n,n+1)}\bigr)\tH_{g_0,g}^{(n)}(z)^{-1}\big\|
&\leq g^2\cdot\frac{4000C_f^2\pi}{\sin(\vartheta)\gamma\vartheta d}\cdot C_n\cdot\rho_n.\label{calc-invertibility-n+1-4}
\end{align}
The last term to be estimated is the one involving $W_{0,0}^{(n,n+1)}$. It can be expanded in the same way. Here, the bound follows from Lemma~\ref{neumann-estimate-2}, the spectral theorem as well as the constants and intermediate results from the previous calculation, leading to
\begin{align}		
\big\|W_{0,0}^{(n,n+1)}\bigl(H_{ph}^{(n,n+1)}+\rho_{n+1}\bigr)^{-1}\big\|
&\leq\int\sup_{s\geq\rho_{n+1}}\frac{\1_{B_n\setminus B_{n+1}}(k)\big|G_{\theta}(k)G_{\overline{\theta}(k)}\big|}{|e^{-\theta}s-z|}\ \dx k\cdot\sup_{r\geq\rho_{n+1}}\frac{1}{r+\rho_{n+1}}\nonumber\\
&\leq\frac{4\pi C_f^2}{\vartheta d}\ \frac{1}{2}\rho_n^2\cdot \frac{1}{2\rho_{n+1}}=\frac{\pi C_f^2}{\gamma\vartheta d}\cdot\rho_n.\label{calc-invertibility-n+1-6}
\end{align}
Collecting all the constants now yields
\begin{align}		
&g^2\cdot \big\|\bigl(W_{0,0}^{(n,n+1)}+\tW_{2,0}^{(n,n+1)}+\tW_{1,1}^{(n,n+1)}+\tW_{0,2}^{(n,n+1)}\bigr)\tH_{g_0,g}^{(n)}(z)^{-1}\big\|\nonumber\\
&\leq g^2\cdot\frac{4100\pi C_f^2}{\sin(\vartheta)\gamma\vartheta d}\cdot C_n\ \rho_n<\frac{1}{2}\cdot\rho_n^{\frac{3}{4}}<\frac{1}{2}.\label{calc-invertibility-n+1-7}
\end{align}
by choosing $g_0=g$ and inserting~\eqref{parameter-restrictions-2}. In total, the calculations establish an exponential decay in the form of a positive power of $\rho_n$ that allows to balance out the exponential growth of the constant $C_n$ for increasing $n$. This implies the convergence of the Neumann series as well as the bound
\begin{equation}\label{calc-invertibility-n+1-8}	
\big\|H_{g_0,g}^{(n+1)}(z)^{-1}\big\|\leq2\ \big\|\tH_{g_0,g}^{(n)}(z)^{-1}\big\|
\end{equation}
by applying the above estimate and calculating the value of the convergent geometric series. The existence of $(H_g^{(n+1)}-z)^{-1}$ now follows from the isospectrality of the Feshbach-Schur map.
\end{proof}

Theorem~\ref{invertibility-n+1} is crucial to the induction as it allows for a single restriction on the coupling constant for any $n\in\N$, thus avoiding an adjustment that depends on the infrared cutoff. As $(H_g^{(n+1)}-z)^{-1}$ is well-defined for $z\in\gamma^{(n+1)}_n$, the existence of the projection $P^{(n+1)}$ follows.

\subsubsection*{Estimating the Difference of the Projections}
As the difference of $\tP^{(n)}$ and $P^{(n+1)}$ can be stated as an operator-valued integral, we start by estimating the integrand.
%
\begin{theorem}\label{estimate-difference-resolvents}	
Let $z\in\widetilde{\mathcal{A}}_n\setminus\bigl\{E_g^{(n)}\bigr\}$ with $\big|z-E_g^{(n)}\big|\geq\frac{\sin(\vartheta)}{16}\rho_{n+1}$ and assume that~\eqref{parameter-restrictions-2} holds. Further, choose the coupling constant $g$ such that
\begin{equation}\label{g-restriction-2}
g<\Bigl[C_{FS}\Bigl(\Big\|\frac{G}{\sqrt{\omega}}\Big\|_{L^2}+\|G\|_{L^2}\Bigr)\Bigr]^{-1}.
\end{equation}
Then the following norm bound holds true
\begin{equation}\label{formula-estimate-difference-resolvents}		
\big\|\bigl(H_g^{(n+1)}-z\bigr)^{-1}-\bigl(H_g^{(n)}-z\bigr)^{-1}\big\|\leq g\cdot \frac{600\sqrt{\pi}}{\gamma\bigl(\sin(\vartheta)\bigr)^2}\ C_{FS}\ C_f\cdot\rho_n^{-\frac{3}{4}} +\frac{400}{\gamma\bigl(\sin(\vartheta)\bigr)^2}\cdot \rho_n^{-\frac{1}{2}}.
\end{equation}
\end{theorem}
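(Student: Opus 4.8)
The plan is to realise both resolvents through the standard reconstruction formula associated with the Feshbach--Schur map $F_{\P}$ of Section~\ref{sect-application-feshbach} and then to compare them increment by increment; here $(H_g^{(n)}-z)^{-1}$ is read as $(\tH_g^{(n)}-z)^{-1}$, so that both operators act on $\cH^{(n+1)}$ via the isomorphisms~\eqref{isomorphisms-tilde-to-n+1}. For an operator $H$ with $F_{\P}(H-z)$ invertible on $\Ran\P$, isospectrality gives
\[
(H-z)^{-1}=Q(z)\,F_{\P}(H-z)^{-1}\,Q^{\#}(z)+\overline{\P}\bigl(\overline{\P}(H-z)\overline{\P}\bigr)^{-1}\overline{\P},
\]
with $Q(z)=\P-\P(H-z)\overline{\P}\bigl(\overline{\P}(H-z)\overline{\P}\bigr)^{-1}\overline{\P}$ and $Q^{\#}(z)=\P-\overline{\P}\bigl(\overline{\P}(H-z)\overline{\P}\bigr)^{-1}\overline{\P}(H-z)\P$. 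Because the atomic matrix in the interaction is off--diagonal, $\overline{\P}(gW_{\theta}^{(m)})\overline{\P}=0$ and hence $\overline{\P}(H_g^{(n+1)}-z)\overline{\P}=\overline{\P}(\tH_g^{(n)}-z)\overline{\P}=P_{\downarrow}\otimes\bigl(e^{-\theta}H_{ph}^{(n+1)}-z\bigr)$; the last summands of the two reconstruction formulae are therefore literally equal and cancel. Writing $F_{n}:=\tH_{g_0,g}^{(n)}(z)$, $F_{n+1}:=H_{g_0,g}^{(n+1)}(z)$ for the effective Hamiltonians~\eqref{def-feshbach-n-tilde},~\eqref{def-feshbach-n} and $Q^{(m)},Q_m^{\#}$ for the corresponding operators, what remains is split by adding and subtracting,
\[
Q^{(n+1)}F_{n+1}^{-1}Q_{n+1}^{\#}-Q^{(n)}F_{n}^{-1}Q_{n}^{\#}
=\bigl(Q^{(n+1)}-Q^{(n)}\bigr)F_{n+1}^{-1}Q_{n+1}^{\#}
+Q^{(n)}\bigl(F_{n+1}^{-1}-F_{n}^{-1}\bigr)Q_{n+1}^{\#}
+Q^{(n)}F_{n}^{-1}\bigl(Q_{n+1}^{\#}-Q_{n}^{\#}\bigr).
\]

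The algebraic point is that every increment is carried by the coupling function restricted to the shell $B_n\setminus B_{n+1}$: one has $Q^{(n+1)}-Q^{(n)}=-g\bigl(P_{\uparrow}\sigma_1\otimes\Phi_{\theta}^{(n,n+1)}\bigr)\bigl(P_{\downarrow}\otimes(e^{-\theta}H_{ph}^{(n+1)}-z)^{-1}\bigr)$ and the analogue for $Q_{n+1}^{\#}-Q_{n}^{\#}$, where $\Phi_{\theta}^{(n,n+1)}=a^{*}(G_{\theta}^{(n,n+1)})+a(G_{\bar\theta}^{(n,n+1)})$ with $G_{\theta}^{(n,n+1)}=\1_{B_n\setminus B_{n+1}}G_{\theta}$, while the second resolvent identity together with $H_{g_0,g}^{(n+1)}(z)-\tH_{g_0,g}^{(n)}(z)=-\bigl(g_0^2\tW_{0,0}^{(n,n+1)}+g^2(\tW_{2,0}^{(n,n+1)}+\tW_{1,1}^{(n,n+1)}+\tW_{0,2}^{(n,n+1)})\bigr)$ gives $F_{n+1}^{-1}-F_{n}^{-1}=F_{n}^{-1}\bigl(g_0^2\tW_{0,0}^{(n,n+1)}+g^2(\tW_{2,0}^{(n,n+1)}+\tW_{1,1}^{(n,n+1)}+\tW_{0,2}^{(n,n+1)})\bigr)F_{n+1}^{-1}$. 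I would then bound each factor with the material already available: $\|Q^{(n)}\|,\|Q_{n+1}^{\#}\|\le2$ (Lemma~\ref{feshbach-estimates-cutoff} and the restriction~\eqref{g-restriction-2}); $\|\Phi_{\theta}^{(n,n+1)}(e^{-\theta}H_{ph}^{(n+1)}-z)^{-1}\|\le2\|G_{\theta}^{(n,n+1)}\|_{\rho_{n+1}}\cdot\|(H_{ph}^{(n,n+1)}+\rho_{n+1})^{1/2}(e^{-\theta}H_{ph}^{(n+1)}-z)^{-1}\|$ by inserting $(H_{ph}^{(n,n+1)}+\rho_{n+1})^{\pm1/2}$, Lemma~\ref{standard-estimate} on the shell, and the geometric/spectral estimate for the second factor from Lemma~\ref{feshbach-estimates-cutoff} (absorbed into $C_{FS}$); $\|G_{\theta}^{(n,n+1)}\|_{\rho_{n+1}}\lesssim C_f\sqrt{\rho_n/\gamma}$ from the spherical--coordinate computation~\eqref{calc-invertibility-n+1-3}; $\|F_{n}^{-1}\|\le\frac{28}{\sin(\vartheta)}\frac{C_n}{\frac{\sin(\vartheta)}{2}\rho_{n+1}+|z-E_g^{(n)}|}$ (Lemma~\ref{feshbach-norm}) and $\|F_{n+1}^{-1}\|\le2\|F_{n}^{-1}\|$ (Theorem~\ref{invertibility-n+1}), both $\lesssim C_n/\bigl((\sin(\vartheta))^2\gamma\rho_n\bigr)$ on the admissible region $|z-E_g^{(n)}|\ge\frac{\sin(\vartheta)}{16}\rho_{n+1}$; and the product bound $\|(g_0^2\tW_{0,0}^{(n,n+1)}+g^2(\tW_{2,0}^{(n,n+1)}+\tW_{1,1}^{(n,n+1)}+\tW_{0,2}^{(n,n+1)}))F_{n}^{-1}\|\le g^2\frac{4100\pi C_f^2}{\sin(\vartheta)\gamma\vartheta d}C_n\rho_n$ established inside the proof of Theorem~\ref{invertibility-n+1} (and the same with $F_{n+1}^{-1}$ up to a factor $2$ via the Neumann series).

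Putting the pieces together, the first and third telescoping terms are each $\lesssim g\cdot\|G_{\theta}^{(n,n+1)}\|_{\rho_{n+1}}\cdot\|F_{n+1}^{-1}\|\lesssim g\,C_{FS}C_f\,\rho_n^{1/2}\cdot C_n\rho_n^{-1}$, while in the middle term the $\rho_n$ from $\|F_{n}^{-1}\|$ cancels the $\rho_n$ from the $\tW$--product bound, leaving $\lesssim C_n^2\cdot\frac{g^2C_f^2}{(\sin(\vartheta))^3\gamma^2\vartheta d}$ with $\frac{g^2C_f^2}{\gamma\vartheta d}=O(\sin(\vartheta))$ by~\eqref{parameter-restrictions-2}, so that the $g$--dependence drops out there. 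Invoking the relation $\rho_n^{1/4}C_n<1$ recorded in Theorem~\ref{invertibility-n+1} gives $C_n\rho_n^{-1}\le\rho_n^{-5/4}$ and $C_n^2\le\rho_n^{-1/2}$; combined with the extra $\rho_n^{1/2}$ from the shell norm, the first and third terms become the $g\cdot\rho_n^{-3/4}$ contribution and the middle term the $\rho_n^{-1/2}$ contribution of~\eqref{formula-estimate-difference-resolvents}, the numerical constants following by tracking the explicit factors in Lemmas~\ref{feshbach-estimates-cutoff}--\ref{neumann-estimate-2} and Theorem~\ref{invertibility-n+1}. I expect the main difficulty to be bookkeeping rather than conceptual: writing the reconstruction formula and the operators $Q,Q^{\#}$ correctly, checking the exact cancellation of the $\overline{\P}$--parts, and routing each increment through precisely the right earlier estimate so that the single small parameter $\rho_n^{1/4}C_n$ does the balancing; one must in particular keep the block $(g_0^2\tW_{0,0}^{(n,n+1)}+g^2(\ldots))F_{n}^{-1}$ together rather than split off a bare $\tW^{(n,n+1)}$, since the $\tW$'s are only relatively bounded with respect to $H_{ph}^{(n,n+1)}$.
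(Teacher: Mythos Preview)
Your proposal is correct and follows essentially the same route as the paper: both write the two resolvents via the Feshbach--Schur reconstruction formula, observe that the $\overline{\P}$--blocks coincide (hence cancel), and then bound the remaining difference by separating it into pieces carried either by $[\Phi_{\theta}^{(n+1)}-\Phi_{\theta}^{(n)}]=\Phi_{\theta}^{(n,n+1)}$ or by $[H_{g_0,g}^{(n+1)}(z)^{-1}-\tH_{g_0,g}^{(n)}(z)^{-1}]$, invoking Lemma~\ref{feshbach-estimates-cutoff}, Lemma~\ref{feshbach-norm}, Theorem~\ref{invertibility-n+1}, the shell integral~\eqref{calc-invertibility-n+1-3}, and the balancing relation $\rho_n^{1/4}C_n<1$. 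Your three-term telescoping $A_1B_1C_1-A_0B_0C_0=(A_1-A_0)B_1C_1+A_0(B_1-B_0)C_1+A_0B_0(C_1-C_0)$ is a slightly cleaner organisation than the paper's, which expands the products fully and then regroups into the single block~\eqref{calc-estimate-difference-resolvents-6} plus four $\Phi$-difference remainders; the estimates themselves are the same. One small slip: with your definitions of $Q$ and $Q^{\#}$ the reconstruction formula reads $(H-z)^{-1}=Q^{\#}F_{\P}(H-z)^{-1}Q+\overline{\P}(\cdots)\overline{\P}$ rather than $QF^{-1}Q^{\#}$ (compare~\eqref{calc-resolvent-norm-1-1}), but since the norm bounds and increments for $Q$ and $Q^{\#}$ are symmetric this does not affect the argument.
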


\begin{proof}		
The existence of the operator is ensured by Theorem \ref{invertibility-n+1}. With $H_{g_0,g}^{(n+1)}(z)$ and its norm already calculated, one can now use the isospectrality of the Feshbach-Schur map to transfer the results to the actual resolvent, leading to
\begin{align}		
&\bigl(H_g^{(n+1)}-z\bigr)^{-1}-\bigl(H_g^{(n)}-z\bigr)^{-1}\nonumber\\
&=\bigl[\P-\overline{\P}\bigl(e^{-\theta}H_{ph}^{(n+1)}-z\bigr)^{-1}g\sigma_1P_{\uparrow}\otimes\Phi_{\theta}^{(n+1)}\bigr]H_{g_0,g}^{(n+1)}(z)^{-1}\nonumber\\
&\quad\ \bigl[\P-gP_{\uparrow}\sigma_1\otimes\Phi_{\theta}^{(n+1)}\bigl(e^{-\theta}H_{ph}^{(n+1)}-z\bigr)^{-1}\overline{\P}\bigr]\nonumber\\
&\quad-\bigl[\P-\overline{\P}\bigl(e^{-\theta}H_{ph}^{(n+1)}-z\bigr)^{-1}g\sigma_1P_{\uparrow}\otimes\Phi_{\theta}^{(n)}\bigr]\tH_{g_0,g}^{(n)}(z)^{-1}\nonumber\\
&\quad\ \bigl[\P-gP_{\uparrow}\sigma_1\otimes\Phi_{\theta}^{(n)}\bigl(e^{-\theta}H_{ph}^{(n+1)}-z\bigr)^{-1}\overline{\P}\bigr].\label{calc-estimate-difference-resolvents-1}
\end{align}
Note that the summand appearing at the end of the respective terms cancels out due to it equalling $(e^{-\theta}H_{ph}^{(n+1)}-z)^{-1}$ for both operators. Note that the first terms that result from expanding the brackets can be written as
\begin{equation}\label{calc-estimate-difference-resolvents-2}	
\P H_{g_0,g}^{(n+1)}(z)^{-1}\P-\P\tH_{g_0,g}^{(n)}(z)^{-1}\P=\P\bigl[H_{g_0,g}^{(n+1)}(z)^{-1}-\tH_{g_0,g}^{(n)}(z)^{-1}\bigr]\P.
\end{equation}
where the operator in its middle is given by the difference of the inverses of the Feshbach-Schur map of $H_g^{(n+1)}$ and $\tH_g^{(n)}$ respectively. This form is also aimed for when treating the other terms. We rewrite
\begin{align}		
&-\P H_{g_0,g}^{(n+1)}(z)^{-1}gP_{\uparrow}\sigma_1\otimes\Phi_{\theta}^{(n+1)}\bigl(e^{-\theta}H_{ph}^{(n+1)}-z\bigr)^{-1}\overline{\P}\nonumber\\
&\quad+\P\tH_{g_0,g}^{(n)}(z)^{-1}gP_{\uparrow}\sigma_1\otimes\Phi_{\theta}^{(n)}\bigl(e^{-\theta}H_{ph}^{(n+1)}-z\bigr)^{-1}\overline{\P}\nonumber\\
&=-\P\bigl[H_{g_0,g}^{(n+1)}(z)^{-1}-\tH_{g_0,g}^{(n)}(z)^{-1}\bigr]gP_{\uparrow}\sigma_1\otimes\Phi_{\theta}^{(n+1)}\bigl(e^{-\theta}H_{ph}^{(n+1)}-z\bigr)^{-1}\overline{\P}\nonumber\\
&\quad-\P\tH_{g_0,g}^{(n)}(z)^{-1}gP_{\uparrow}\sigma_1\otimes\bigl[\Phi_{\theta}^{(n+1)}-\Phi_{\theta}^{(n)}\bigr]\bigl(e^{-\theta}H_{ph}^{(n+1)}-z\bigr)^{-1}\overline{\P}.\label{calc-estimate-difference-resolvents-3}
\end{align}
Similarly, it follows that
\begin{align}		
&-\overline{\P}\bigl(e^{-\theta}H_{ph}^{(n+1)}-z\bigr)^{-1}g\sigma_1P_{\uparrow}\otimes\Phi_{\theta}^{(n+1)}H_{g_0,g}^{(n+1)}(z)^{-1}\P\nonumber\\
&\quad+\overline{\P}\bigl(e^{-\theta}H_{ph}^{(n+1)}-z\bigr)^{-1}g\sigma_1P_{\uparrow}\otimes\Phi_{\theta}^{(n)}\tH_{g_0,g}^{(n)}(z)^{-1}\P\nonumber\\
&=-\overline{\P}\bigl(e^{-\theta}H_{ph}^{(n+1)}-z\bigr)^{-1}g\sigma_1P_{\uparrow}\otimes\Phi_{\theta}^{(n+1)}\bigl[H_{g_0,g}^{(n+1)}(z)^{-1}-\tH_{g_0,g}^{(n)}(z)^{-1}\bigr]\P\nonumber\\
&\quad-\overline{\P}\bigl(e^{-\theta}H_{ph}^{(n+1)}-z\bigr)^{-1}g\sigma_1P_{\uparrow}\otimes\bigl[\Phi_{\theta}^{(n+1)}-\Phi_{\theta}^{(n)}\bigr]H_{g_0,g}^{(n+1)}(z)^{-1}\P.\label{calc-estimate-difference-resolvents-4}
\end{align}
The remaining terms can be treated in the same way, leading to
\begin{align}		
&\overline{\P}\bigl(e^{-\theta}H_{ph}^{(n+1)}-z\bigr)^{-1}g\sigma_1P_{\uparrow}\otimes\Phi_{\theta}^{(n+1)}H_{g_0,g}^{(n+1)}(z)^{-1}gP_{\uparrow}\sigma_1\otimes\Phi_{\theta}^{(n+1)}\bigl(e^{-\theta}H_{ph}^{(n+1)}-z\bigr)^{-1}\overline{\P}\nonumber\\
&\quad-\overline{\P}\bigl(e^{-\theta}H_{ph}^{(n+1)}-z\bigr)^{-1}g\sigma_1P_{\uparrow}\otimes\Phi_{\theta}^{(n)}\tH_{g_0,g}^{(n)}(z)^{-1}gP_{\uparrow}\sigma_1\otimes\Phi_{\theta}^{(n)}\bigl(e^{-\theta}H_{ph}^{(n+1)}-z\bigr)^{-1}\overline{\P}\nonumber\\
&=\overline{\P}\bigl(e^{-\theta}H_{ph}^{(n+1)}-z\bigr)^{-1}g\sigma_1P_{\uparrow}\otimes\Phi_{\theta}^{(n)}\bigl[H_{g_0,g}^{(n+1)}(z)^{-1}-\tH_{g_0,g}^{(n)}(z)^{-1}\bigr]\nonumber\\
&\quad\ gP_{\uparrow}\sigma_1\otimes\Phi_{\theta}^{(n+1)}\bigl(e^{-\theta}H_{ph}^{(n+1)}-z\bigr)^{-1}\overline{\P}\nonumber\\
&\quad+\overline{\P}\bigl(e^{-\theta}H_{ph}^{(n+1)}-z\bigr)^{-1}g\sigma_1P_{\uparrow}\otimes\bigl[\Phi_{\theta}^{(n+1)}-\Phi_{\theta}^{(n)}\bigr]H_{g_0,g}^{(n+1)}(z)^{-1}\nonumber\\
&\quad\ gP_{\uparrow}\sigma_1\otimes\Phi_{\theta}^{(n+1)}\bigl(e^{-\theta}H_{ph}^{(n+1)}-z\bigr)^{-1}\overline{\P}\nonumber\\
&\quad-\overline{\P}\bigl(e^{-\theta}H_{ph}^{(n+1)}-z\bigr)^{-1}g\sigma_1P_{\uparrow}\otimes\Phi_{\theta}^{(n)}\tH_{g_0,g}^{(n)}(z)^{-1}\nonumber\\
&\quad\ gP_{\uparrow}\sigma_1\otimes\bigl[\Phi_{\theta}^{(n+1)}-\Phi_{\theta}^{(n)}\bigr]\bigl(e^{-\theta}H_{ph}^{(n+1)}-z\bigr)^{-1}\overline{\P}.\label{calc-estimate-difference-resolvents-5}
\end{align}
The first term of each result can now be regrouped to yield
\begin{align}		
&\bigl[\P-\overline{\P}\bigl(e^{-\theta}H_{ph}^{(n+1)}-z\bigr)^{-1}g\sigma_1P_{\uparrow}\otimes\Phi_{\theta}^{(n)}\bigr]\bigl[H_{g_0,g}^{(n+1)}(z)^{-1}-\tH_{g_0,g}^{(n)}(z)^{-1}\bigr)\bigr]\nonumber\\
&\quad\cdot \bigl[\P-gP_{\uparrow}\sigma_1\otimes\Phi_{\theta}^{(n+1)}\bigl(e^{-\theta}H_{ph}^{(n+1)}-z\bigr)^{-1}\overline{\P}\bigr],\label{calc-estimate-difference-resolvents-6}
\end{align}
which is the form that has been aimed for. Note that all the remaining terms now, too, involve differences. The estimate for the operators including $[\Phi_{\theta}^{(n+1)}-\Phi_{\theta}^{(n)}]$ is explicitly carried out for one of them here and can be done similarly for the three remaining terms of the same form. First, we split up the term as
\begin{align}
&\big\|\P\tH_{g_0,g}^{(n)}(z)^{-1}gP_{\uparrow}\sigma_1\otimes\bigl[\Phi_{\theta}^{(n+1)}-\Phi_{\theta}^{(n)}\bigr] \bigl(e^{-\theta}H_{ph}^{(n+1)}-z\bigr)^{-1}\overline{\P}\big\|\nonumber\\
&\leq g\cdot \big\|\tH_{g_0,g}^{(n)}(z)^{-1}\big\|\cdot\big\|[\Phi_{\theta}^{(n+1)}-\Phi_{\theta}^{(n)}\bigr]\bigl(e^{-\theta}H_{ph}^{(n+1)}-z\bigr)^{-1}\big\|,\label{calc-estimate-difference-resolvents-7}
\end{align}
recalling that it is $\|\P\|=\|\overline{\P}\|=1$ and $\|\sigma_1\|=1$. The individual factors can now be estimated by Lemma~\ref{feshbach-norm} and similar calculations to Lemma~\ref{feshbach-estimates-cutoff}. Note, however, that the function considered here is $\1_{B_n\setminus B_{n+1}}G_{\theta}$ and that the integrals, therefore, can be calculated similarly to Theorem~\ref{invertibility-n+1}, yielding a positive power of~$\rho_n$. For the terms involving~$\|H_{g_0,g}^{(n+1)}(z)^{-1}\|$, Theorem~\ref{invertibility-n+1} is used instead of Lemma~\ref{feshbach-norm}. The constants are now assembled to yield
\begin{align}		
&g\cdot \big\|\tH_{g_0,g}^{(n)}(z)^{-1}\big\|\cdot\big\|[\Phi_{\theta}^{(n+1)}-\Phi_{\theta}^{(n)}\bigr]\bigl(e^{-\theta}H_{ph}^{(n+1)}-z\bigr)^{-1}\big\|\nonumber\\
&<g\cdot \frac{100\sqrt{\pi}}{\gamma\bigl(\sin(\vartheta)\bigr)^2}\ C_{FS}\ C_f\cdot\rho_n^{-\frac{3}{4}},\label{calc-estimate-difference-resolvents-8}
\end{align}
when the restrictions on $z$ and the parameters are inserted. Note that Lemma~\ref{feshbach-estimates-cutoff} needs to be applied twice for the terms obtained from~\eqref{calc-estimate-difference-resolvents-5}, yielding an additional constant. Since these terms also include an additional power of $g$,~\eqref{g-restriction-2} allows estimating this contribution by one. The last term to be estimated is given by
\begin{align}		
&\big\|\bigl[\P-\overline{\P}\bigl(e^{-\theta}H_{ph}^{(n+1)}-z\bigr)^{-1}g\sigma_1P_{\uparrow}\otimes\Phi_{\theta}^{(n)}\bigr]\bigl[H_{g_0,g}^{(n+1)}(z)^{-1}-\tH_{g_0,g}^{(n)}(z)^{-1}\bigr]\nonumber\\
&\quad\cdot \bigl[\P-gP_{\uparrow}\sigma_1\otimes\Phi_{\theta}^{(n+1)}\bigl(e^{-\theta}H_{ph}^{(n+1)}-z\bigr)^{-1}\overline{\P}\bigr]\big\|\nonumber\\
&\leq\bigl(1+g\big\|\bigl(e^{-\theta}H_{ph}^{(n+1)}-z\bigr)^{-1}\Phi_{\theta}^{(n)}\big\|\bigr)\cdot\big\|\bigl(H_{g_0,g}^{(n+1)}(z)^{-1}-\tH_{g_0,g}^{(n)}(z)^{-1}\big\|\nonumber\\
&\quad\cdot (1+g\big\|\bigl(e^{-\theta}H_{ph}^{(n+1)}-z\bigr)^{-1}\Phi_{\theta}^{(n+1)}\big\|\bigr).\label{calc-estimate-difference-resolvents-9}
\end{align}
Here, the estimate for the first term follows directly from Lemma~\ref{feshbach-estimates-cutoff}, yielding
\begin{equation}\label{calc-estimate-difference-resolvents-10}	
g\cdot \big\|\bigl(e^{-\theta}H_{ph}^{(n+1)}-z\bigr)^{-1}\Phi_{\theta}^{(n)}\big\|\leq g\cdot C_{FS}\Bigl(\Big\|\frac{G}{\sqrt{\omega}}\Big\|_{L^2}+\|G\|_{L^2}\Bigr)<1
\end{equation}
together with the assumptions of the theorem. The same relation can be applied for the term involving $\Phi_{\theta}^{(n+1)}$, allowing to estimate both brackets in~\eqref{calc-estimate-difference-resolvents-9} by two respectively. The last term can be rewritten using the second resolvent identity since both operators appearing in the difference are of the form $(A-z)^{-1}$ for suitable operators $A$. This yields
\begin{align}		
\big\|H_{g_0,g}^{(n+1)}(z)^{-1}-\tH_{g_0,g}^{(n)}(z)^{-1}\big\|
\leq\big\|H_{g_0,g}^{(n+1)}(z)^{-1}\big\|\cdot\big\|\bigl[H_{g_0,g}^{(n+1)}(z)-\tH_{g_0,g}^{(n)}(z)\bigr]\tH_{g_0,g}^{(n)}(z)^{-1}\big\|,\label{calc-estimate-difference-resolvents-11}
\end{align}
which implies that the desired bound can be assembled from~\eqref{formula-invertibility-n+1},~\eqref{formula-feshbach-norm-n-tilde} and~\eqref{calc-invertibility-n+1-7}. This implies that
\begin{align}		
\big\|H_{g_0,g}^{(n+1)}(z)^{-1}-\tH_{g_0,g}^{(n)}(z)^{-1}\big\|<\frac{100}{\gamma\bigl(\sin(\vartheta)\bigr)^2}\cdot\rho_n^{-\frac{1}{2}},\label{calc-estimate-difference-resolvents-12}
\end{align}
which yields a bound for~\eqref{calc-estimate-difference-resolvents-9} when multiplied by four. The bounds can now be assembled to yield the constants given in the claim of the theorem.
\end{proof}

With the difference of the resolvents estimated, one can directly proceed to treat the projections defined from them.
%
\begin{corollary}[Establishing~\eqref{ind-hyp-projections-2} for the induction step]\label{estimate-difference-projections}
Under the assumptions of Theorem~\ref{estimate-difference-resolvents}, choose the coupling constant such that it additionally satisfies
\begin{equation}\label{g-restriction-3}	
g<\Bigl[\frac{3}{2}\sqrt{\pi}C_{FS}C_f\Bigr]^{-1}
\end{equation}
then the following norm bound holds
\begin{equation}\label{formula-estimate-difference-projections}	
\big\|P^{(n+1)}-\tP^{(n)}\big\|\leq \frac{100}{\sin(\vartheta)}\rho_n^{\frac{1}{4}}.
\end{equation}
Further, $P^{(n+1)}$ is a rank-one projection.
\end{corollary}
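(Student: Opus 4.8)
The plan is to express both $P^{(n+1)}$ and $\tP^{(n)}$ as Cauchy (Riesz) integrals of their resolvents over the common contour $\gamma^{(n+1)}_n$ — the circle of radius $r^{(n+1)}=\tfrac{\sin(\vartheta)}{8}\rho_{n+1}$ centered at $E_g^{(n)}$ — and to estimate the difference by feeding Theorem~\ref{estimate-difference-resolvents} into the standard contour bound. First I would record that every $z\in\gamma^{(n+1)}_n$ satisfies $|z-E_g^{(n)}|=\tfrac{\sin(\vartheta)}{8}\rho_{n+1}\ge\tfrac{\sin(\vartheta)}{16}\rho_{n+1}$ and lies in $\widetilde{\mathcal{A}}_n$, since the radius clears the lower cut of $\widetilde{\mathcal{A}}_n$ (which sits at distance $\tfrac{\sin(\vartheta)}{2}\rho_{n+1}$ below $E_g^{(n)}$) and stays well inside the rectangle defining $\mathcal{A}$; hence $(H_g^{(n+1)}-z)^{-1}$ exists there by Theorem~\ref{invertibility-n+1} and $(\tH_g^{(n)}-z)^{-1}$ by Lemma~\ref{invertibility-tilde}. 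Because the interior of $\gamma^{(n+1)}_n$ contains no spectrum of $\tH_g^{(n)}$ other than $E_g^{(n)}$, whose rank-one eigenprojection is $\tP^{(n)}=P^{(n)}\otimes P_{\Omega^{(n,n+1)}}$, the Riesz integral of $(\tH_g^{(n)}-z)^{-1}$ over $\gamma^{(n+1)}_n$ reproduces $\tP^{(n)}$, so
\[
\big\|P^{(n+1)}-\tP^{(n)}\big\|\le r^{(n+1)}\sup_{z\in\gamma^{(n+1)}_n}\big\|\bigl(H_g^{(n+1)}-z\bigr)^{-1}-\bigl(\tH_g^{(n)}-z\bigr)^{-1}\big\|.
\]

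Then I would insert $r^{(n+1)}=\tfrac{\sin(\vartheta)}{8}\gamma\rho_n$ and bound the supremum by Theorem~\ref{estimate-difference-resolvents}. The contour-length factor $\gamma\rho_n$ cancels the $\gamma^{-1}$ in the denominator of~\eqref{formula-estimate-difference-resolvents} and converts the singular powers $\rho_n^{-3/4}$, $\rho_n^{-1/2}$ into $\rho_n^{1/4}$, $\rho_n^{1/2}$; using $\rho_n\le\rho_0<1$ one absorbs $\rho_n^{1/2}$ into $\rho_n^{1/4}$, and the coefficient of $\rho_n^{1/4}$ works out to $\bigl(75\sqrt{\pi}\,g\,C_{FS}C_f+50\bigr)/\sin(\vartheta)$. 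The restriction~\eqref{g-restriction-3} on $g$ is calibrated precisely so that $75\sqrt{\pi}\,g\,C_{FS}C_f<50$, which leaves $\big\|P^{(n+1)}-\tP^{(n)}\big\|\le\tfrac{100}{\sin(\vartheta)}\rho_n^{1/4}$, i.e.\ \eqref{formula-estimate-difference-projections}.

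For the rank-one statement I would use~\eqref{parameter-restrictions-1}, which forces $\rho_n\le\rho_0<(\sin(\vartheta)/100)^4$ and hence $\big\|P^{(n+1)}-\tP^{(n)}\big\|<1$. Since $\tP^{(n)}$ is rank-one by the induction hypothesis, the standard fact that projections at norm-distance less than one have equal rank — which needs no self-adjointness, as $w\in\Ran P^{(n+1)}$ with $\tP^{(n)}w=0$ gives $w=(P^{(n+1)}-\tP^{(n)})w$ and thus $w=0$, so $\tP^{(n)}$ is injective on $\Ran P^{(n+1)}$ and $P^{(n+1)}\neq0$ — shows $P^{(n+1)}$ is a rank-one projection.

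The bookkeeping of the contour length and the arithmetic of the constants are routine; the only place demanding attention is the admissibility check at the start: one has to be sure the single radius $\tfrac{\sin(\vartheta)}{8}\rho_{n+1}$ simultaneously obeys the exclusion radius $\tfrac{\sin(\vartheta)}{16}\rho_{n+1}$ in Theorem~\ref{estimate-difference-resolvents}, keeps $\gamma^{(n+1)}_n$ inside $\widetilde{\mathcal{A}}_n$ so that $\tH_g^{(n)}$ adds no spurious spectral contribution, and still encircles $E_g^{(n)}$, all of which reduce to $\tfrac{1}{16}<\tfrac{1}{8}<\tfrac{1}{2}$ together with $\rho_{n+1}<\rho_n\le\rho_0$.
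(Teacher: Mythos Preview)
Your proposal is correct and follows essentially the same approach as the paper: express both projections as Riesz integrals over the common contour $\gamma^{(n+1)}_n$, bound the difference by the contour length times the resolvent difference from Theorem~\ref{estimate-difference-resolvents}, and use~\eqref{parameter-restrictions-1} to conclude the rank-one property. Your writeup is in fact more explicit than the paper's own terse proof---you spell out the arithmetic showing how~\eqref{g-restriction-3} converts the constant to $100/\sin(\vartheta)$, and you supply the admissibility checks and the non-self-adjoint rank-equality argument that the paper leaves implicit.
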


\begin{proof}
Recall that $\gamma^{(n+1)}_n$ denotes the curve along the circle with radius~$r^{(n+1)}=\frac{\sin(\vartheta)}{8}\rho_{n+1}$ centered at~$E_g^{(n)}$. It now follows that
\begin{align}
\big\|P^{(n+1)}-\tP^{(n)}\big\|=\frac{1}{2\pi}\Bigg|\Bigg|\int_{\gamma^{(n+1)}_n}\Biggl[\frac{1}{H_g^{(n+1)}-z}-\frac{1}{\tH_g^{(n)}-z}\Biggr]\dx z\Bigg|\Bigg|\nonumber\\
\leq\frac{1}{2\pi}\cdot2\pi r^{(n+1)}\big\|\bigl(H_g^{(n+1)}-z\bigr)^{-1}-\bigl(H_g^{(n)}-z\bigr)^{-1}\big\|,\label{calc-estimate-difference-projections}
\end{align}
which yields the desired estimate by applying Theorem~\ref{estimate-difference-projections}. This yields the bound
\begin{align}
r^{(n+1)}\cdot\big\|\bigl(H_g^{(n+1)}-z\bigr)^{-1}-\bigl(H_g^{(n)}-z\bigr)^{-1}\big\|
&\leq\frac{100}{\sin(\vartheta)}\rho_n^{\frac{1}{4}}
\end{align}
when~\eqref{g-restriction-3} is inserted. This implies, in particular, that~\eqref{norm-projections} holds for $m=n+1$. Using~\eqref{parameter-restrictions-1}, it now follows that
\begin{equation}
\big\|P^{(n+1)}-\tP^{(n)}\big\|\leq\frac{1}{5}\cdot \Bigl(\frac{1}{5}\Bigr)^n<1,
\end{equation}
which implies that $P^{(n+1)}$ is a rank-one projection.
\end{proof}

\subsubsection*{Difference of the Eigenvalues}
As a direct consequence of Corollary \ref{estimate-difference-projections}, the existence of an eigenvalue $E_g^{n+1}$ of $H_g^{(n+1)}$ in the complex vicinity of $E_g^{(n)}$is implied. Note that the distance of these two numbers is, by construction, smaller than $\frac{\sin(\vartheta)}{16}\rho_{n+1}$. The bound can be improved to also include the coupling constant. This is the content of the following theorem.
%
\begin{theorem}[Establishing~\eqref{ind-hyp-energies} for the induction step]\label{estimate-difference-energies}		
Let $g,\rho_0,\gamma>0$ such that the parameters satisfy the assumptions of Theorem~\ref{estimate-difference-resolvents}. Then the following bound holds true
\begin{equation}\label{formula-estimate-difference-energies}
\big|E_g^{(n+1)}-E_g^{(n)}\big|\leq 9\sqrt{\pi}C_f\cdot g\rho_n
\end{equation}
\end{theorem}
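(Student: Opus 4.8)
The plan is to isolate the perturbation responsible for the shift $E_g^{(n+1)}-E_g^{(n)}$ and to bound it by a first–order matrix element, in the spirit of~\eqref{explicit-eigenvalue-0}. Recall that, at the level of the original Hamiltonians, $H_g^{(n+1)}=\tH_g^{(n)}+gW_\theta^{(n,n+1)}$, where $W_\theta^{(n,n+1)}=\sigma_1\otimes\bigl(a^*(G_\theta^{(n,n+1)})+a(G_{\bar\theta}^{(n,n+1)})\bigr)$ acts on $\cF^{(n,n+1)}$ and its coupling function is supported on the annulus $\{\rho_{n+1}\le|k|\le\rho_n\}$. Moreover $E_g^{(n)}$ is a simple eigenvalue of $\tH_g^{(n)}$ with right eigenvector $\tilde\Psi^{(n)}=\Psi^{(n)}\otimes\Omega^{(n,n+1)}$, and $\Psi^{(n+1)}:=P^{(n+1)}\tilde\Psi^{(n)}$ is the eigenvector of $H_g^{(n+1)}$ for $E_g^{(n+1)}$ (cf.~\eqref{def-psi-0}). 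Since $\theta=i\vartheta$, complex conjugation $J$ on $\cH^{(n+1)}$ satisfies $J\,\tH_g^{(n)}\,J=(\tH_g^{(n)})^{*}$ and fixes every Fock vacuum, so $\Phi^{(n)}:=c_n\,(J\Psi^{(n)})\otimes\Omega^{(n,n+1)}$ is an eigenvector of $(\tH_g^{(n)})^{*}$ for $\overline{E_g^{(n)}}$; I fix $c_n$ by $\langle\Phi^{(n)},\tilde\Psi^{(n)}\rangle=1$, which is possible because the eigenvalue is simple, and then $\tP^{(n)}=|\tilde\Psi^{(n)}\rangle\langle\Phi^{(n)}|$ and $\|\Phi^{(n)}\|=\|\tP^{(n)}\|/\|\tilde\Psi^{(n)}\|=\|P^{(n)}\|/\|\Psi^{(n)}\|$.

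Pairing $H_g^{(n+1)}\Psi^{(n+1)}=E_g^{(n+1)}\Psi^{(n+1)}$ with $\Phi^{(n)}$, inserting $H_g^{(n+1)}=\tH_g^{(n)}+gW_\theta^{(n,n+1)}$, and moving $\tH_g^{(n)}$ onto $\Phi^{(n)}$ via $(\tH_g^{(n)})^{*}\Phi^{(n)}=\overline{E_g^{(n)}}\Phi^{(n)}$, I obtain the exact identity
\begin{equation}\label{plan-fo-identity}
\bigl(E_g^{(n+1)}-E_g^{(n)}\bigr)\,\langle\Phi^{(n)},\Psi^{(n+1)}\rangle
   = g\,\langle\Phi^{(n)},W_\theta^{(n,n+1)}\Psi^{(n+1)}\rangle .
\end{equation}
It then remains to bound the right-hand side from above and the overlap $\langle\Phi^{(n)},\Psi^{(n+1)}\rangle$ from below. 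For the numerator I move $W_\theta^{(n,n+1)}$ onto the left vector: because $\Phi^{(n)}$ carries the Fock vacuum in the $\cF^{(n,n+1)}$-factor, the annihilation part of $(W_\theta^{(n,n+1)})^{*}=\sigma_1\otimes\bigl(a(G_\theta^{(n,n+1)})+a^*(G_{\bar\theta}^{(n,n+1)})\bigr)$ is killed and $(W_\theta^{(n,n+1)})^{*}\Phi^{(n)}=\sigma_1\phi^{(n)}\otimes a^*(G_{\bar\theta}^{(n,n+1)})\Omega^{(n,n+1)}$ with $\phi^{(n)}=c_nJ\Psi^{(n)}$; hence its norm equals $\|\phi^{(n)}\|\,\|G_{\bar\theta}^{(n,n+1)}\|_{L^2}$, and a computation in spherical coordinates exactly as in~\eqref{calc-invertibility-n+1-3} gives $\|G_{\bar\theta}^{(n,n+1)}\|_{L^2}^2=\int_{\{\rho_{n+1}\le|k|\le\rho_n\}}|G_{\bar\theta}(k)|^2\,\dx^3k\le 2\pi C_f^2\rho_n^2$. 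Combined with the Cauchy–Schwarz inequality this bounds the numerator of~\eqref{plan-fo-identity} by $g\,\|\phi^{(n)}\|\,\|\Psi^{(n+1)}\|\,\sqrt{2\pi}\,C_f\,\rho_n$.

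For the denominator I write $\langle\Phi^{(n)},\Psi^{(n+1)}\rangle=\langle\Phi^{(n)},\tP^{(n)}\tilde\Psi^{(n)}\rangle+\langle\Phi^{(n)},(P^{(n+1)}-\tP^{(n)})\tilde\Psi^{(n)}\rangle$; the first term is $\langle\Phi^{(n)},\tilde\Psi^{(n)}\rangle=1$, and the second is controlled by the projection estimate~\eqref{formula-estimate-difference-projections} together with the uniform bounds $\|\tP^{(n)}\|=\|P^{(n)}\|\le\tfrac32$ and $\|\Psi^{(n)}\|\ge\tfrac35$, the latter following from~\eqref{bounds-p-0} by telescoping $\Psi^{(n)}=P^{(n)}\tilde\Psi^{(n-1)}$ along the geometric series in~\eqref{norm-projections}; these bounds also give $\|\phi^{(n)}\|=\|\Phi^{(n)}\|\le\tfrac52$ and $\|\Psi^{(n+1)}\|\le\tfrac32$. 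With $\rho_0,\gamma,g$ small enough (the conditions collected in~\eqref{final-restriction-rho-0}–\eqref{final-restriction-g}) the correction term is smaller than $\tfrac12$, so $|\langle\Phi^{(n)},\Psi^{(n+1)}\rangle|\ge\tfrac12$. Dividing in~\eqref{plan-fo-identity} then already yields $|E_g^{(n+1)}-E_g^{(n)}|\le 2\cdot\tfrac52\cdot\tfrac32\cdot g\sqrt{2\pi}\,C_f\,\rho_n$, and since for small parameters the eigenvectors and the spectral projections are within $o(1)$ of their bare counterparts (so that $\|\Phi^{(n)}\|,\|\Psi^{(n+1)}\|\to1$ and $|\langle\Phi^{(n)},\Psi^{(n+1)}\rangle|\to1$), the prefactor can be pushed below $9\sqrt{\pi}\,C_f$, which is the claimed bound.

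The step I expect to be the main obstacle is the non–self–adjointness: one cannot simply take $\tilde\Psi^{(n)}$ as a test vector, since $\langle\tilde\Psi^{(n)},(\tH_g^{(n)}-E_g^{(n)})\Psi^{(n+1)}\rangle$ does not vanish — $\tilde\Psi^{(n)}$ is not a left eigenvector — and forcing it to be negligible would only give an $\mathcal O(g)$ bound, not $\mathcal O(g\rho_n)$. This is what makes the conjugate $\Phi^{(n)}$ indispensable, and in turn one needs the lower bound on the left–right overlap $\langle\Phi^{(n)},\tilde\Psi^{(n)}\rangle$, i.e.\ a quantitative statement that $\tH_g^{(n)}$ is close to self–adjoint at its simple eigenvalue. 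That bound is delicate precisely because of the critical coupling: the induction–basis estimate~\eqref{calc-difference-p-0-p-1} does not improve as $\rho_0\to0$, so the required smallness has to be spent on $g$ there, which is the reason for the stronger restriction gathered in~\eqref{final-restriction-g}.
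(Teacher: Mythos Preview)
Your argument is correct in outline and lands on the same first–order perturbation identity as the paper, but the paper reaches it more directly and avoids two issues in your write-up.

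\medskip
\textbf{Comparison with the paper's route.} The paper does not construct a left eigenvector at all. It simply uses that the Riesz projection $P^{(n+1)}$ commutes with $H_g^{(n+1)}$ and that $\tilde H_g^{(n)}\tP^{(n)}=E_g^{(n)}\tP^{(n)}$, so
\[
P^{(n+1)}\bigl(H_g^{(n+1)}-\tH_g^{(n)}\bigr)\tP^{(n)}=\bigl(E_g^{(n+1)}-E_g^{(n)}\bigr)P^{(n+1)}\tP^{(n)},
\]
which is exactly your identity \eqref{plan-fo-identity} at the operator level. Taking norms and lower-bounding $\|P^{(n+1)}\tP^{(n)}\|$ via $\|\tP^{(n)}\|-\|(\tP^{(n)}-P^{(n+1)})\tP^{(n)}\|$ replaces your overlap estimate; the numerator is bounded exactly as you do, using $a(G_{\bar\theta}^{(n,n+1)})\Omega^{(n,n+1)}=0$ and $\|P^{(n+1)}\|,\|P^{(n)}\|\le\tfrac32$. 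So your detour through $\Phi^{(n)}$ is unnecessary: the ``obstacle'' you identify in the last paragraph---the absence of a left eigenvector---is bypassed by working with the spectral projection rather than a test vector.

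\medskip
\textbf{Two specific concerns.} First, your construction of $\Phi^{(n)}$ via complex conjugation requires $J\tH_g^{(n)}J=(\tH_g^{(n)})^{*}$, which in turn needs $\overline{G_\theta}=G_{\bar\theta}$, i.e.\ $\overline{f(e^{-\theta}k)}=f(e^{-\bar\theta}k)$. The paper only assumes $|f(e^{ia}k)|=|f(e^{-ia}k)|$, which is weaker; your identity holds for the Gaussian example but not under the stated hypotheses. You could instead take $\Phi^{(n)}$ to be the abstract left eigenvector furnished by simplicity of $E_g^{(n)}$, but then the norm bound $\|\Phi^{(n)}\|=\|P^{(n)}\|/\|\Psi^{(n)}\|$ is all you have---which is in fact all you use. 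Second, your explicit constants give $2\cdot\tfrac52\cdot\tfrac32\cdot\sqrt{2\pi}\,C_f=\tfrac{15}{2}\sqrt{2\pi}\,C_f\approx 10.6\sqrt\pi\,C_f$, which exceeds $9\sqrt\pi\,C_f$; the sentence ``the prefactor can be pushed below $9\sqrt\pi\,C_f$'' is an asymptotic assertion, not a bound valid for the fixed parameters satisfying \eqref{final-restriction-rho-0}--\eqref{final-restriction-g}. The paper hits $9\sqrt\pi$ because it uses $\|P^{(n+1)}\|\,\|P^{(n)}\|\le\tfrac94$ and $\|a^*(G^{(n,n+1)})P_{\Omega}\|\le 2\sqrt\pi\,C_f\rho_n$ together with the factor $2$ from the projection lower bound. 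Since the induction only needs an $n$-independent constant this is cosmetic, but if you want to close the loop with the stated hypothesis you should either sharpen your bounds or adjust the constant in~\eqref{ind-hyp-energies}.
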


\begin{proof}		
Remembering that $P^{(n+1)}$ and $\tP^{(n)}$ project to the eigenspaces corresponding to~$E_g^{(n+1)}$ and~$E_g^{(n)}$, respectively, consider
\begin{align}		
P^{(n+1)}\bigl(H_g^{(n+1)}-\tH_g^{(n)}\bigr)\tP^{(n)}
&=\bigl(E_g^{(n+1)}-E_g^{(n)}\bigr)P^{(n+1)}\tP^{(n)},\label{calc-estimate-difference-energies-1}
\end{align}
which leads to an expression of the energy difference in terms of the operators already studied. Applying the norm and rearranging the terms leads to
\begin{align}		
\big|E_g^{(n+1)}-E_g^{(n)}\big|&\leq\frac{1}{\big\|P^{(n+1)}\tP^{(n)}\big\|}\big\|P^{(n+1)}g\Phi_{\theta}^{(n,n+1)}\tP^{(n)}\big\|.\label{calc-estimate-difference-energies-2}
\end{align}
Estimating the right-hand side now yields the desired bound. For the first factor note
\begin{equation}\label{calc-estimate-difference-energies-3}	
\big\|P^{(n+1)}\tP^{(n)}\big\|=\big\|\tP^{(n)}-\overline{P^{(n+1)}}\tP^{(n)}\big\|\geq\big\|\tP^{(n)}\big\|-\big\|\overline{P^{(n+1)}}\tP^{(n)}\big\|,
\end{equation}
where the second term may be rewritten as
\begin{align}		
\overline{P^{(n+1)}}\tP^{(n)}=\bigl(\overline{\tP^{(n)}}+\tP^{(n)}-P^{(n+1)}\bigr)\tP^{(n)}
=\bigl(\tP^{(n)}-P^{(n+1)}\bigr)\tP^{(n)}.\label{calc-estimate-difference-energies-4}
\end{align}
Using Corollary~\ref{estimate-difference-projections}, this leads to the estimate
\begin{equation}\label{calc-estimate-difference-energies-5}	
\big\|\overline{P^{(n+1)}}\tP^{(n)}\big\|\leq\big\|\tP^{(n)}-P^{(n+1)}\big\|\cdot\big\|\tP^{(n)}\big\|\leq \frac{1}{5}\ \Bigl(\frac{1}{5}\Bigr)^n\cdot\frac{3}{2}<1.
\end{equation}
Here,~\eqref{parameter-restrictions-1} was used to calculate an explicit bound. This yields
\begin{equation}\label{calc-estimate-difference-energies-6}	
\frac{1}{\big\|P^{(n+1)}\tP^{(n)}\big\|}\leq \frac{1}{\frac{3}{2}-1}=2
\end{equation}
The only term left to estimate is
\begin{align}		
\big\|P^{(n+1)}g\Phi_{\theta}^{(n,n+1)}\tP^{(n)}\big\|&=g\cdot\big\|P^{(n+1)}\sigma_1P^{(n)}\otimes a^*(G^{(n,n+1)}P_{\Omega^{(n,n+1)}}\big\|.\label{calc-estimate-difference-energies-7}
\end{align}
Here, both $P^{(n+1)}$ and $P^{(n)}$ can be bounded in norm by $\frac{3}{2}$ and it is $\|\sigma_1\|=1$. Applying the relation $a(G^{(n,n+1)})\Omega^{(n,n+1)}=0$ following from the definition of the annihilation operator now yields
\begin{equation}\label{calc-estimate-difference-energies-8}	
\big\|a^*(G^{(n,n+1)})P_{\Omega^{(n,n+1)}}\big\|\leq \sqrt{4\pi}C_f\cdot\rho_n
\end{equation}
recalling that~${C_f=\max_{z\in D(0,\rho_0)}|f(z)|}$. This implies the desired bound when the constants are assembled.
\end{proof}

\subsubsection*{Closing the Induction Step}
The only point left to consider is establishing \eqref{ind-hyp-resolvent-norm} for the induction step to close the argument. As a first step, the norm of the resolvent of $H_g^{(n+1)}$ is considered following an argument similar to~\cite[Theorem~6.29]{BachBallesterosPizzo2017}.
%
\begin{lemma}\label{resolvent-norm-1}	
Let $z\in\widetilde{\mathcal{A}}_n\setminus\bigl\{E_g^{(n)}\bigr\}$ with $\big|z-E_g^{(n)}\big|\geq\frac{\sin(\vartheta)}{16}\rho_{n+1}$ and assume that the parameters satisfy the assumptions of Theorem~\ref{estimate-difference-resolvents}. Then the following norm bound holds true
\begin{equation}\label{formula-resolvent-norm-1}
\big\|\bigl(H_g^{(n+1)}-z\bigr)^{-1}\big\|\leq \Bigl(\frac{224}{\sin(\vartheta)}+\frac{120}{\vartheta d\cdot\sin(\vartheta)}\Bigr)\frac{C_n}{\frac{\sin(\vartheta)}{2}\rho_{n+1}+|z-E_g^{(n)}|}.
\end{equation}
\end{lemma}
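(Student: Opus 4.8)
The plan is to reconstruct the resolvent $\bigl(H_g^{(n+1)}-z\bigr)^{-1}$ from the inverse $H_{g_0,g}^{(n+1)}(z)^{-1}$ of the Feshbach--Schur map and then estimate the resulting pieces one by one. For $z\in\widetilde{\mathcal{A}}_n\setminus\{E_g^{(n)}\}$ with $|z-E_g^{(n)}|\geq\tfrac{\sin(\vartheta)}{16}\rho_{n+1}$ the operator $H_{g_0,g}^{(n+1)}(z)$ is invertible by Theorem~\ref{invertibility-n+1}, hence so is $H_g^{(n+1)}-z$ by the isospectrality of the Feshbach--Schur map, and the reconstruction formula used in the proof of Theorem~\ref{estimate-difference-resolvents} gives
\[
\bigl(H_g^{(n+1)}-z\bigr)^{-1}=Q(z)\,H_{g_0,g}^{(n+1)}(z)^{-1}\,Q^{\#}(z)+\overline{\P}\bigl(e^{-\theta}H_{ph}^{(n+1)}-z\bigr)^{-1}\overline{\P},
\]
where $Q(z)$ and $Q^{\#}(z)$ are the two feedback brackets appearing in~\eqref{calc-estimate-difference-resolvents-1} and the last summand is exactly the bulk term that cancelled there when the difference of the two resolvents was formed.

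First I would bound the feedback operators uniformly. Since $\|\P\|=\|\overline{\P}\|=\|\sigma_1\|=1$, one has $\|Q(z)\|\leq1+g\bigl\|\Phi_{\theta}^{(n+1)}\bigl(e^{-\theta}H_{ph}^{(n+1)}-z\bigr)^{-1}\bigr\|$, and Lemma~\ref{feshbach-estimates-cutoff} (which, as remarked after its proof, holds verbatim with $H_{ph}^{(n)}$ replaced by $H_{ph}^{(n+1)}$) together with the restriction~\eqref{g-restriction-2} on $g$ yields $g\bigl\|\Phi_{\theta}^{(n+1)}\bigl(e^{-\theta}H_{ph}^{(n+1)}-z\bigr)^{-1}\bigr\|\leq gC_{FS}\bigl(\|G/\sqrt{\omega}\|_{L^2}+\|G\|_{L^2}\bigr)<1$, so $\|Q(z)\|\leq2$ and likewise $\|Q^{\#}(z)\|\leq2$. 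For the middle factor, Theorem~\ref{invertibility-n+1} and~\eqref{formula-feshbach-norm-n-tilde} of Lemma~\ref{feshbach-norm} give
\[
\bigl\|H_{g_0,g}^{(n+1)}(z)^{-1}\bigr\|\leq2\bigl\|\tH_{g_0,g}^{(n)}(z)^{-1}\bigr\|\leq\frac{56}{\sin(\vartheta)}\,\frac{C_n}{\tfrac{\sin(\vartheta)}{2}\rho_{n+1}+|z-E_g^{(n)}|},
\]
so the product term contributes at most $4\cdot\tfrac{56}{\sin(\vartheta)}=\tfrac{224}{\sin(\vartheta)}$ times the reciprocal of $\tfrac{\sin(\vartheta)}{2}\rho_{n+1}+|z-E_g^{(n)}|$, which is the first summand of~\eqref{formula-resolvent-norm-1}.

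It remains to control the bulk term. Since $e^{-\theta}H_{ph}^{(n+1)}$ is normal with $\sigma\bigl(e^{-\theta}H_{ph}^{(n+1)}\bigr)\subseteq\{0\}\cup e^{-\theta}[\rho_{n+1},\infty)$, one has $\bigl\|\overline{\P}\bigl(e^{-\theta}H_{ph}^{(n+1)}-z\bigr)^{-1}\overline{\P}\bigr\|=\dist\bigl(z,\sigma(e^{-\theta}H_{ph}^{(n+1)})\bigr)^{-1}$. Here I would run a geometric estimate entirely parallel to the ones in Lemmas~\ref{invertibility-tilde} and~\ref{feshbach-estimates-cutoff}: on $\widetilde{\mathcal{A}}_n$ this distance is bounded below by a quantity of order $\vartheta d$ (the worst case being $z$ near the point of $\widetilde{\mathcal{A}}_n$ closest to the ray $e^{-\theta}[\rho_{n+1},\infty)$, roughly $z\approx 2-2e^{-\theta}$, where both $|z|$ and the distance to the ray are $\approx 2\sin(\vartheta)$), while for $z$ on the tail of $\widetilde{\mathcal{A}}_n$ it grows linearly in $|z-E_g^{(n)}|$; combining the two regimes and using $C_n\geq1$, $\vartheta d<1$, $\sin(\vartheta)<1$ and $\rho_{n+1}<2$ gives
\[
\bigl\|\overline{\P}\bigl(e^{-\theta}H_{ph}^{(n+1)}-z\bigr)^{-1}\overline{\P}\bigr\|\leq\frac{120}{\vartheta d\,\sin(\vartheta)}\,\frac{C_n}{\tfrac{\sin(\vartheta)}{2}\rho_{n+1}+|z-E_g^{(n)}|},
\]
the second summand of~\eqref{formula-resolvent-norm-1}. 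Adding the two contributions then yields the claim.

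The main obstacle is precisely this last geometric estimate: it must be made uniform over the unbounded set $\widetilde{\mathcal{A}}_n$, carefully distinguishing the bounded part (where $\dist(z,\sigma(e^{-\theta}H_{ph}^{(n+1)}))$ is $\Theta(\vartheta d)$ and must be weighed against $\tfrac{\sin(\vartheta)}{16}\rho_{n+1}$) from the tail (where the linear growth in $|z-E_g^{(n)}|$ is what matters), and one has to keep track of the numerical constants so as to land exactly at $\tfrac{224}{\sin(\vartheta)}+\tfrac{120}{\vartheta d\,\sin(\vartheta)}$; the generous choice of the constant $120$ and the fact that $C_n\geq43$ leave ample room. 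Everything else is a routine assembly of Theorem~\ref{invertibility-n+1}, Lemma~\ref{feshbach-norm} and Lemma~\ref{feshbach-estimates-cutoff}.
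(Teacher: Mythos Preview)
Your proposal is correct and follows essentially the same route as the paper: Feshbach--Schur reconstruction, bound each feedback bracket by $2$ via Lemma~\ref{feshbach-estimates-cutoff} and~\eqref{g-restriction-2}, bound the middle factor by $2\cdot\tfrac{28}{\sin(\vartheta)}C_n/(\cdots)$ via Theorem~\ref{invertibility-n+1} and Lemma~\ref{feshbach-norm}, and then treat the bulk term $\bigl(e^{-\theta}H_{ph}^{(n+1)}-z\bigr)^{-1}$ separately.

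The only noteworthy difference is in how the bulk term is handled. You sketch a direct two-regime estimate (bounded rectangle vs.\ tail) on $\dist\bigl(z,\sigma(e^{-\theta}H_{ph}^{(n+1)})\bigr)$. The paper instead first establishes the uniform bound $|e^{-\theta}r-z|^{-1}\leq\tfrac{5}{2\vartheta d}\,|z|^{-1}$ (splitting into $\mathrm{Re}(z)>0$ and $\mathrm{Re}(z)\leq0$ as in Lemma~\ref{feshbach-estimates-cutoff}) and then converts $|z|^{-1}$ into $|z-E_g^{(n)}|^{-1}$ by an Apollonius-circle argument: for $\lambda=\tfrac{3}{\sin(\vartheta)}$ the locus $\lambda|z|=|z-E_g^{(n)}|$ is a circle disjoint from $\widetilde{\mathcal{A}}_n$ (since $\dist(0,\widetilde{\mathcal{A}}_n)\geq\sin(\vartheta)$), whence $\lambda|z|\geq|z-E_g^{(n)}|$ on all of $\widetilde{\mathcal{A}}_n$. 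Combined with $|z-E_g^{(n)}|\geq\tfrac{\sin(\vartheta)}{16}\rho_{n+1}$ this gives the factor $\tfrac{5}{2\vartheta d}\cdot\tfrac{3}{\sin(\vartheta)}\cdot16=\tfrac{120}{\vartheta d\sin(\vartheta)}$ exactly. Your two-regime idea would also work, but the Apollonius trick handles the unbounded tail and the bounded part in one stroke and makes the constant $120$ fall out mechanically.
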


\begin{proof}	
By the isospectrality of the Feshbach map and Theorem~\ref{invertibility-n+1}, the resolvent of~$H_g^{(n+1)}$ exists for the values of $z$ considered. Recall that it is
\begin{align}		
&\bigl(H_g^{(n+1)}-z\bigr)^{-1}=\bigl[\P-\overline{\P}\bigl(e^{-\theta}H_{ph}^{(n+1)}-z\bigr)^{-1}g\sigma_1P_{\uparrow}\otimes\Phi_{\theta}^{(n+1)}\bigr]H_{g_0,g}^{(n+1)}(z)^{-1}\nonumber\\
&\quad\cdot \bigl[\P-gP_{\uparrow}\sigma_1\otimes\Phi_{\theta}^{(n+1)}\bigl(e^{-\theta}H_{ph}^{(n+1)}-z\bigr)^{-1}\overline{\P}\bigr]+\bigl(e^{-\theta}H_{ph}^{(n+1)}-z\bigr)^{-1}.\label{calc-resolvent-norm-1-1}
\end{align}
Considering the norm now leads to
\begin{align}		
&\big\|\bigl(H_g^{(n+1)}-z\bigr)^{-1}\big\|\leq\bigl(1+g\big\|\bigl(e^{-\theta}H_{ph}^{(n+1)}-z\bigr)^{-1}\Phi_{\theta}^{(n+1)}\big\|\bigr)\cdot\big\|H_{g_0,g}^{(n+1)}(z)^{-1}\big\|\nonumber\\
&\quad\cdot\bigl(1+g\big\|\Phi_{\theta}^{(n+1)}\bigl(e^{-\theta}H_{ph}^{(n+1)}-z\bigr)^{-1}\big\|\bigr)+\big\|\bigl(e^{-\theta}H_{ph}^{(n+1)}-z\bigr)^{-1}\big\|.\label{calc-resolvent-norm-1-2}
\end{align}
Here, the first summand can be treated similarly to Theorem~\ref{estimate-difference-resolvents}, by applying Lemma~\ref{feshbach-estimates-cutoff} and the restrictions on $g$ for the terms in the brackets as well as Theorems~\ref{invertibility-n+1} and~\ref{feshbach-norm}. This leaves only the last term of~\eqref{calc-resolvent-norm-1-2} to consider. First, it follows from the spectral theorem that
\begin{equation}\label{calc-resolvent-norm-1-4}	
\big\|\bigl(e^{-\theta}H_{ph}^{(n+1)}-z\bigr)^{-1}\big\| =\sup_{r\in\sigma\bigl(H_{ph}^{(n+1)}\bigr)}\frac{1}{|e^{-\theta}r-z|}.
\end{equation}
Recall that $\sigma(H_{ph}^{(n+1)})\subseteq\{0\}\cup[\rho_{n+1},\infty)$. Once more, the term is interpreted geometrically to obtain the desired bound. As opposed to Lemma~\ref{feshbach-estimates-cutoff}, which only required a finite bound on the norm, the estimate here needs to yield a term that can be rewritten to the match the claim of the lemma. First, consider $z$ such that~$\mathrm{Re}(z)>0$. For this subeset of $\widetilde{\mathcal{A}}_n$, the absolute value of $z$ is bounded. It can be estimated by~$\frac{5}{2}$ (recall Figure~\ref{fig-geometry-estimate-4} and Lemma~\ref{feshbach-estimates-cutoff}). Bounding~$|e^{-\theta}r-z|$ by $\vartheta d$ again, it follows that
\begin{equation}\label{calc-resolvent-norm-1-5}	
\frac{|z|}{|e^{-\theta}r-z|}\leq\frac{5}{2}\ \frac{1}{\vartheta d},\quad \mathrm{Re}(z)>0,
\end{equation}
which can be rearranged to yield
\begin{equation}\label{calc-resolvent-norm-1-6}	
\frac{1}{|e^{-\theta}r-z|}\leq\frac{5}{2\vartheta d}\ \frac{1}{|z|}=\frac{5}{2\vartheta d}\ \frac{1}{|z-0|},\quad \mathrm{Re}(z)>0.
\end{equation}
Note that the estimate now explicitly includes the distance from $z$ to the origin. For the values of $z$ satisfying $\mathrm{Re}(z)\leq0$, the absolute value $|z|$ can not be bounded. Here, the same estimate as in Lemma~\ref{feshbach-estimates-cutoff} is used. Taking all values of $z$ permitted by the assumptions of the lemma into account, the bound
\begin{equation}\label{calc-resolvent-norm-1-7}		
\frac{1}{|e^{-\theta}r-z|}\leq\min\Bigl\{\frac{5}{2\vartheta d}\ \frac{1}{|z|},\ \frac{1}{\vartheta d}\Bigr\}\leq\frac{5}{2\vartheta d}\ \frac{1}{|z|}
\end{equation}
follows. The term introduced in~\eqref{calc-resolvent-norm-1-5} and~\eqref{calc-resolvent-norm-1-6} now yields the tool to rewrite the result in the desired form. Consider two points~$z_1$ and~$z_2$ in the complex plane and a real number~$\lambda>1$. Then the set of points~$z\in\C$ satisfying the equation~${\lambda|z-z_1|=|z-z_2|}$ is a circle with radius~${\frac{\lambda}{\lambda^2-1}|z_1-z_2|}$ centered at~${z_1+\frac{1}{\lambda^2-1}(z_1-z_2)}$. When the equality sign is replaced by an inequality sign and the~$z\in\C$ satisfying~${\lambda|z-z_1|\geq|z-z_2|}$ are considered, the solution set is given by the complex plane without the interior of the circle. Observe
\begin{equation}\label{calc-resolvent-norm-1-8}	
\dist\bigl(0,\widetilde{\mathcal{A}}_n\bigr)\geq\sin(\vartheta)
\end{equation}
which follows from~\eqref{calc-feshbach-estimates-cutoff-2} and the assumptions on the parameters. Next, consider $\lambda=\frac{3}{\sin(\vartheta)}$ and note that it is
\begin{align}		
\Bigl(\frac{\lambda}{\lambda^2-1}+\frac{1}{\lambda^2-1}\Bigr)\cdot\big|E_g^{(n)}\big|
&<\sin(\vartheta).\label{calc-resolvent-norm-1-9}
\end{align}
Since $E_g^{(n)}$ is included in the rectangle~${[2-\frac{\rho_0}{4},2+\frac{\rho_0}{4}]+i[-\frac{\sin(\vartheta)}{4}\rho_0,\frac{\sin(\vartheta)}{4}\rho_0]\subset\mathcal{A}}$, its absolute value was estimated by~$\frac{5}{2}$ (recall Figure~\ref{fig-geometry-estimate-4} and Lemma~\ref{feshbach-estimates-cutoff}). Considering~${z_1=0}$ and~${z_2=E_g^{(n)}}$ now yields that
\begin{equation}\label{calc-resolvent-norm-1-10}	
D\bigl({z_1+\tfrac{1}{\lambda^2-1}(z_1-z_2)},{\tfrac{\lambda}{\lambda^2-1}|z_1-z_2|}\bigr)\cap\widetilde{\mathcal{A}}_n=\emptyset.
\end{equation}
Recall that the complement of the open disk occuring in~\eqref{calc-resolvent-norm-1-9} yields exactly the values of $z$ such that the inequality
\begin{equation}\label{calc-resolvent-norm-1-11}	
\lambda|z-z_1|\geq|z-z_2|
\end{equation}
is satisfied. Inserting the values for $\lambda$, $z_1$ and $z_2$ considered, it follows that
\begin{equation}\label{calc-resolvent-norm-1-12}	
\frac{3}{\sin(\vartheta)}|z-0|\geq|z-E_n|,\quad z\in\widetilde{\mathcal{A}}_n.
\end{equation}
This can be rearranged to yield
\begin{equation}\label{calc-resolvent-norm-1-13}	
\frac{1}{|z-0|}\leq\frac{3}{\sin(\vartheta)}\ \frac{1}{|z-E_g^{(n)}|},
\end{equation}
which can be rewritten to the desired form using the restrictions on $z$. This leads to
\begin{equation}\label{calc-resolvent-norm-1-14}	
\frac{1}{|z-E_g^{(n)}|}\leq\frac{2}{\frac{\sin(\vartheta)}{16}\rho_{n+1}+|z-E_g^{(n)}|}\leq\frac{16}{\frac{\sin(\vartheta)}{2}\rho_{n+1}+|z-E_g^{(n)}|}.
\end{equation}
Inserting the results into~\eqref{calc-resolvent-norm-1-4} now yields
\begin{align}		
\big\|\bigl(e^{-\theta}H_{ph}^{(n+1)}-z\bigr)^{-1}\big\|&\leq\frac{120}{\vartheta d\cdot\sin(\vartheta)}\ \frac{1}{\frac{\sin(\vartheta)}{2}\rho_{n+1}+|z-E_g^{(n)}|},\label{calc-resolvent-norm-1-15}
\end{align}
recalling that the minimum in~\eqref{calc-resolvent-norm-1-7} was estimated against the term that can be rewritten. The desired estimate now follows by adding the bounds for the two terms in~\eqref{calc-resolvent-norm-1-2}.
\end{proof}

Building on Lemma~\ref{resolvent-norm-1}, one may now close the argument by estimating~$\|H_g^{(n+1)}\overline{P^{(n+1)}}\|$. The approach is similar to Lemma~\ref{resolvent-norm-0}. Note that considering the projection to the complement of the eigenspace allows dropping the restrictions on $z$.
%
\begin{theorem}[Establishing~\eqref{ind-hyp-resolvent-norm} for the induction step]\label{resolvent-norm-2}	
Let $z\in\mathcal{A}_{n+1}$ and assume that the parameters satisfy the assumptions of Theorem~\ref{estimate-difference-resolvents}. Further, choose the coupling constant small enough such that
\begin{equation}\label{g-restriction-4}	
g<\frac{\sin(\vartheta)}{72\sqrt{\pi}C_f}\cdot\gamma
\end{equation}
is satisfied. Then it is
\begin{equation}\label{formula-resolvent-norm-2}	
\big\|\bigl(H_g^{(n+1)}-z\bigr)^{-1}\overline{P^{(n+1)}}\big\|\leq \frac{C_{n+1}}{\frac{\sin(\vartheta)}{2}\rho_{n+1}+|z-E_g^{(n+1)}|},
\end{equation}
where the constant $C_{n+1}$ is given as in~\eqref{ind-hyp-resolvent-norm}.
\end{theorem}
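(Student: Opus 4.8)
The plan is to mimic the two–regime strategy of Lemma~\ref{resolvent-norm-0}: estimate $(H_g^{(n+1)}-z)^{-1}\overline{P^{(n+1)}}$ directly where $z$ is bounded away from $E_g^{(n)}$, and use the maximum modulus principle to carry the bound into the small disk around $E_g^{(n)}$ where Lemma~\ref{resolvent-norm-1} is not available. First I would collect the geometric facts that make this consistent. By Theorem~\ref{estimate-difference-energies} and the restriction \eqref{g-restriction-4} one has $|E_g^{(n+1)}-E_g^{(n)}|\le 9\sqrt{\pi}C_f\,g\rho_n<\tfrac{\sin(\vartheta)}{8}\rho_{n+1}$; consequently the circle $\gamma^{(n+1)}_n$ of radius $\tfrac{\sin(\vartheta)}{8}\rho_{n+1}$ around $E_g^{(n)}$ lies inside $\widetilde{\mathcal{A}}_n$, it encircles $E_g^{(n+1)}$, and one checks $\mathcal{A}_{n+1}\subseteq\widetilde{\mathcal{A}}_n$ (the lower cutoff line of $\mathcal{A}_{n+1}$ sits above that of $\widetilde{\mathcal{A}}_n$ once the energy shift is accounted for). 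Since $P^{(n+1)}$ is rank-one by Corollary~\ref{estimate-difference-projections}, $E_g^{(n+1)}$ is the only spectral point of $H_g^{(n+1)}$ inside $\gamma^{(n+1)}_n$, while on $\mathcal{A}_{n+1}$ outside $\gamma^{(n+1)}_n$ the resolvent exists by Theorem~\ref{invertibility-n+1}; hence $E_g^{(n+1)}$ is the unique spectral point of $H_g^{(n+1)}$ in $\mathcal{A}_{n+1}$, and, because $P^{(n+1)}$ commutes with $H_g^{(n+1)}$, the operator $(H_g^{(n+1)}-z)^{-1}\overline{P^{(n+1)}}$ is bounded and analytic throughout $\mathcal{A}_{n+1}$, including at $E_g^{(n+1)}$.

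For $z\in\mathcal{A}_{n+1}$ lying outside the interior of $\gamma^{(n+1)}_n$ one has $z\in\widetilde{\mathcal{A}}_n\setminus\{E_g^{(n)}\}$ and $|z-E_g^{(n)}|\ge\tfrac{\sin(\vartheta)}{8}\rho_{n+1}\ge\tfrac{\sin(\vartheta)}{16}\rho_{n+1}$, so Lemma~\ref{resolvent-norm-1} applies; multiplying by $\|\overline{P^{(n+1)}}\|\le 1+\|P^{(n+1)}\|\le\tfrac52$ (uniformly in $n$ by \eqref{norm-projections}) gives
\[
\big\|(H_g^{(n+1)}-z)^{-1}\overline{P^{(n+1)}}\big\|
\le \tfrac{5}{2}\Big(\tfrac{224}{\sin(\vartheta)}+\tfrac{120}{\vartheta d\,\sin(\vartheta)}\Big)\,\frac{C_n}{\tfrac{\sin(\vartheta)}{2}\rho_{n+1}+|z-E_g^{(n)}|}.
\]
I would then convert the denominator to one centred at $E_g^{(n+1)}$: since $|E_g^{(n+1)}-E_g^{(n)}|<\tfrac{\sin(\vartheta)}{8}\rho_{n+1}$, the triangle inequality gives $\tfrac{\sin(\vartheta)}{2}\rho_{n+1}+|z-E_g^{(n)}|\ge\tfrac34\big(\tfrac{\sin(\vartheta)}{2}\rho_{n+1}+|z-E_g^{(n+1)}|\big)$, at the cost of a harmless constant.

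For $z$ in the interior of $\gamma^{(n+1)}_n$ — which contains in particular all $z$ with $|z-E_g^{(n+1)}|$ small, where the previous estimate fails — I would fix $\psi,\phi\in\cH^{(n+1)}$ and apply the maximum modulus principle to $z\mapsto\langle\psi,(H_g^{(n+1)}-z)^{-1}\overline{P^{(n+1)}}\phi\rangle$, which is analytic inside and on $\gamma^{(n+1)}_n$ by the first paragraph. On $\gamma^{(n+1)}_n$ we have $|z-E_g^{(n)}|=\tfrac{\sin(\vartheta)}{8}\rho_{n+1}$, so the second paragraph bounds this function by a constant multiple of $C_n/(\tfrac{\sin(\vartheta)}{2}\rho_{n+1})$ times $\|\psi\|\,\|\phi\|$; taking the supremum over $\|\psi\|,\|\phi\|\le1$ bounds $\|(H_g^{(n+1)}-z)^{-1}\overline{P^{(n+1)}}\|$ by the same constant multiple of $C_n/(\tfrac{\sin(\vartheta)}{2}\rho_{n+1})$ for every $z$ inside $\gamma^{(n+1)}_n$. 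Since inside $\gamma^{(n+1)}_n$ one has $|z-E_g^{(n+1)}|<\tfrac{3\sin(\vartheta)}{16}\rho_{n+1}$, this flat bound is absorbed into $C_{n+1}/(\tfrac{\sin(\vartheta)}{2}\rho_{n+1}+|z-E_g^{(n+1)}|)$ exactly as in Lemma~\ref{resolvent-norm-0}. Combining the two regimes yields \eqref{formula-resolvent-norm-2}.

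The only genuinely delicate point is the bookkeeping of constants: the factor $\tfrac52$ from $\|\overline{P^{(n+1)}}\|$, the factor incurred in the maximum modulus step, and the loss from the two denominator conversions must all multiply to at most $\big(\tfrac{1120}{\sin(\vartheta)}+\tfrac{600}{\vartheta d\,\sin(\vartheta)}\big)$, so that the prefactor is at most $C_{n+1}=C_n\cdot\big(\tfrac{1120}{\sin(\vartheta)}+\tfrac{600}{\vartheta d\,\sin(\vartheta)}\big)$ as in \eqref{ind-hyp-constants}; this is exactly where the generous choice of $C_n$ and the restriction \eqref{g-restriction-4} are consumed. I expect this constant-tracking, rather than any conceptual difficulty, to be the main obstacle.
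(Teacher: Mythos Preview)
Your proposal is correct and follows essentially the same approach as the paper: a two-regime split around $E_g^{(n)}$, with Lemma~\ref{resolvent-norm-1} combined with $\|\overline{P^{(n+1)}}\|\le\tfrac{5}{2}$ in the outer region and the maximum modulus principle in the inner region, followed by a denominator shift from $E_g^{(n)}$ to $E_g^{(n+1)}$ via the energy-difference bound. The only cosmetic difference is that the paper splits at radius $\tfrac{\sin(\vartheta)}{16}\rho_{n+1}$ rather than $\tfrac{\sin(\vartheta)}{8}\rho_{n+1}$, which changes nothing structurally.
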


\begin{proof}		
For $\big|z-E_g^{(n)}\big|>\frac{\sin(\vartheta)}{16}\rho_{n+1}$, the norm in question can directly be calculated using Lemma~\ref{resolvent-norm-1} and the bound obtained for the norm of the projections~\eqref{norm-projections} which, due to Corollary~\ref{estimate-difference-projections} also holds true for $P^{(n+1)}$. This leads to $\big\|\overline{P^{(n+1)}}\big\|\leq\frac{5}{2}$ and, therefore, yields
\begin{align}		
\big\|\bigl(H_g^{(n+1)}-z\bigr)^{-1}\overline{P^{(n+1)}}\big\|
&\leq\Bigl(\frac{560}{\sin(\vartheta)}+\frac{300}{\vartheta d\cdot\sin(\vartheta)}\Bigr)\frac{C_n}{\frac{\sin(\vartheta)}{2}\rho_{n+1}+|z-E_g^{(n)}|}.\label{calc-resolvent-norm-2-1}
\end{align}
The values for $z$ left to consider lie in the closed circle of radius $\frac{\sin(\vartheta)}{16}\rho_{n+1}$ around $E_g^{(n)}$. As $\overline{P^{(n+1)}}$ maps to the complement of the eigenspace corresponding to $E_g^{(n+1)}$ and the eigenvalue is the only spectral point in $\widetilde{\mathcal{A}}_n$, the function
\begin{equation}\label{calc-resolvent-norm-2-2}	
a(z):=\big\langle\psi,\bigl(H_g^{(n+1)}-z\bigr)^{-1}\overline{P^{(n+1)}}\phi\big\rangle
\end{equation}
is analytic when regarded as a function of $z$ with fixed $\psi,\phi\in\cH^{(n+1)}$. One can, therefore, apply the maximum modulus principle and find that the maximal value of $|a(z)|$ is to be found for some~$z$ satisfying~$|z|=\frac{\sin(\vartheta)}{16}\rho_{n+1}$. Applying the Cauchy-Schwarz inequality to the right side of~\eqref{calc-resolvent-norm-2-2} implies the estimate
\begin{align}		
\max_{|z|\leq\frac{\sin(\vartheta)}{16}\rho_{n+1}}|a(z)|&\leq \Bigl(\frac{560}{\sin(\vartheta)}+\frac{300}{\vartheta d\cdot\sin(\vartheta)}\Bigr)\frac{C_n}{\frac{\sin(\vartheta)}{2}\rho_{n+1}+|z-E_g^{(n)}|}\cdot\|\psi\|\cdot \|\phi\|,\label{calc-resolvent-norm-2-3}
\end{align}
as Lemma~\ref{resolvent-norm-1} now is applicable and calculating the norm of the operator becomes similar to the first case. The only point left to consider is rewriting the $|E_g^{(n)}-z|$ term to match the form required to complete the induction step. Note that~\eqref{g-restriction-4} implies that
\begin{equation}\label{calc-resolvent-norm-2-4}	
\big|E_g^{(n+1)}-E_g^{(n)}\big|\leq\frac{\sin(\vartheta)}{8}\rho_{n+1}<\frac{\sin(\vartheta)}{2}\rho_{n+1}.
\end{equation}
By~\eqref{ind-hyp-constants}, this implies that
\begin{align}		
\big\|\bigl(H_g^{(n+1)}-z\bigr)^{-1}\overline{P^{(n+1)}}\big\|
&\leq\frac{C_{n+1}}{\frac{\sin(\vartheta)}{2}\rho_{n+1}+|z-E_g^{(n+1)}|},\label{calc-resolvent-norm-2-5}
\end{align}
which establishes~\eqref{ind-hyp-resolvent-norm} for $m=n+1$. Note that it follows from~\eqref{calc-resolvent-norm-2-4} that~${\mathcal{A}_{n+1}\subset\widetilde{\mathcal{A}}_n}$.
\end{proof}
%
Theorem~\ref{resolvent-norm-2} closes the argument and, therefore, completes the induction step. The result of the inductive argument can be summarized as follows:
\begin{theorem}\label{induction-complete}		
Choose $\rho_0$, $\gamma$ and $g$ such that the assumptions of Corollary~\ref{estimate-difference-projections}, Theorem~\ref{estimate-difference-energies} and Theorem~\ref{resolvent-norm-2} are satisfied. Then, for every $m\in\N_0$, there exists an eigenvalue of $H_g^{(m)}$ which is the only spectral point of the operator in the set $\mathcal{A}_m$ and a rank-one projection to the corresponding eigenspace given by \eqref{ind-hyp-projections-1}. Further, the estimates~\eqref{ind-hyp-energies},~\eqref{ind-hyp-resolvent-norm}, and~\eqref{ind-hyp-projections-2} are valid for every $m\in\N_0$.
\end{theorem}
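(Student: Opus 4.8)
The plan is to prove the statement by induction on $m\in\N_0$, assembling the results of Sections~\ref{sect-ind-basis}--\ref{sect-ind-step}; no new estimate is required, the task being to check that the conclusions of those sections exactly reproduce the template of Section~\ref{sect-ind-hypothesis}. First I would fix $\rho_0,\gamma,g>0$ once and for all so that the finitely many smallness conditions occurring in the statements invoked below --- namely~\eqref{g-restriction-1},~\eqref{parameter-restrictions-1},~\eqref{parameter-restrictions-2},~\eqref{g-restriction-2},~\eqref{g-restriction-3},~\eqref{g-restriction-4} together with $Cg<\tfrac{\sin(\vartheta)}{16}$ --- are simultaneously in force. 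Since each is an open condition of the form ``$g$ (resp.\ $\rho_0$, $\gamma$) sufficiently small'', taking the minimum of the relevant thresholds gives an admissible choice, and this is precisely what ``the assumptions of Corollary~\ref{estimate-difference-projections}, Theorem~\ref{estimate-difference-energies} and Theorem~\ref{resolvent-norm-2} are satisfied'' encodes once one traces the chain of hypotheses back through Theorems~\ref{estimate-difference-resolvents} and~\ref{invertibility-n+1} to Lemma~\ref{invertibility-0}. I would also adopt the conventions $\rho_{-1}:=\rho_0/\gamma$, $E_g^{(-1)}:=E^{(-1)}=2$, and $P^{(-1)}$ as in Section~\ref{sect-ind-basis}, so that the index $m=0$ fits the formulas of the induction hypothesis.

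For the base case $m=0$, Lemma~\ref{invertibility-0} makes $(H_g^{(0)}-z)^{-1}$, hence the contour integral~\eqref{def-p0} defining $P^{(0)}$, well-defined on the relevant set; the second-resolvent-identity computation~\eqref{calc-difference-p-0-p-1} gives $\|P^{(0)}-P^{(-1)}\|\le\tfrac15<1$, so $P^{(0)}$ is rank-one, $H_g^{(0)}$ has a unique simple eigenvalue $E_g^{(0)}$ which the argument localizes to $D(2,\tfrac{\sin(\vartheta)}{16}\rho_0)$, and since the rest of the interior of $\gamma^{(0)}_{-1}$ lies in the resolvent set, $E_g^{(0)}$ is the only spectral point of $H_g^{(0)}$ in the set $\mathcal{A}_0$ read off from~\eqref{ind-hyp-sets}. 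Then~\eqref{ind-hyp-projections-2} for $m=0$ is~\eqref{calc-difference-p-0-p-1} (with the constant in~\eqref{ind-hyp-projections-2} chosen generously enough), \eqref{ind-hyp-resolvent-norm} for $m=0$ with $C_0=43$ is exactly Lemma~\ref{resolvent-norm-0}, and~\eqref{ind-hyp-energies} for $m=0$ follows from the explicit formula~\eqref{explicit-eigenvalue-0} estimated as in the proof of Theorem~\ref{estimate-difference-energies}; this establishes the hypothesis of Section~\ref{sect-ind-hypothesis} at level $n=0$.

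For the inductive step, assume the hypothesis for all $m\le n$ and run through Section~\ref{sect-ind-step} in order: Lemma~\ref{invertibility-tilde} transfers the resolvent bound to $\tH_g^{(n)}$ on $\widetilde{\mathcal{A}}_n$; Lemmas~\ref{feshbach-estimates-cutoff} and~\ref{feshbach-norm} show the Feshbach--Schur maps $H_{g_0,g}^{(n)}(z)$ and $\tH_{g_0,g}^{(n)}(z)$ are well-defined and deliver the needed norm bounds; Lemmas~\ref{neumann-estimate-1} and~\ref{neumann-estimate-2} feed Theorem~\ref{invertibility-n+1}, whose Neumann series converges because the parameter choice forces the compensation $\rho_n^{1/4}C_n<1$, yielding invertibility of $H_{g_0,g}^{(n+1)}(z)$ and, by isospectrality of the Feshbach--Schur map, of $H_g^{(n+1)}-z$ for all admissible $z$. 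Hence $P^{(n+1)}$ of~\eqref{ind-hyp-projections-1} is well-defined; Theorem~\ref{estimate-difference-resolvents} together with Corollary~\ref{estimate-difference-projections} gives $\|P^{(n+1)}-\tP^{(n)}\|\le\tfrac{100}{\sin(\vartheta)}\rho_n^{1/4}$ --- i.e.\ \eqref{ind-hyp-projections-2} for $m=n+1$ and the rank-one property, so that there is a unique simple eigenvalue $E_g^{(n+1)}$ in the interior of $\gamma^{(n+1)}_n$; Theorem~\ref{estimate-difference-energies} upgrades the crude bound $|E_g^{(n+1)}-E_g^{(n)}|<\tfrac{\sin(\vartheta)}{16}\rho_{n+1}$ to~\eqref{ind-hyp-energies} for $m=n+1$; and Theorem~\ref{resolvent-norm-2} establishes~\eqref{ind-hyp-resolvent-norm} for $m=n+1$ with the constant $C_{n+1}$ of~\eqref{ind-hyp-constants}, while recording $\mathcal{A}_{n+1}\subset\widetilde{\mathcal{A}}_n$, so that $E_g^{(n+1)}$ is the only spectral point of $H_g^{(n+1)}$ in $\mathcal{A}_{n+1}$. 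Defining $\mathcal{A}_{n+1}$ by~\eqref{ind-hyp-sets} closes the induction.

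Since the genuine content is already contained in the lemmas of Section~\ref{sect-ind-step}, the main obstacle here is bookkeeping rather than a new inequality: one must verify that the numerical constants produced by Lemma~\ref{resolvent-norm-1} and~\eqref{calc-resolvent-norm-2-1}--\eqref{calc-resolvent-norm-2-5} reproduce \emph{exactly} the recursion $C_{n+1}=43\cdot\bigl(\tfrac{1120}{\sin(\vartheta)}+\tfrac{600}{\vartheta d\,\sin(\vartheta)}\bigr)^{n+1}$ of~\eqref{ind-hyp-constants}, that Corollary~\ref{estimate-difference-projections} and Theorem~\ref{estimate-difference-energies} return precisely the right-hand sides of~\eqref{ind-hyp-projections-2} and~\eqref{ind-hyp-energies}, and --- crucially --- that the parameter restrictions scattered through the section are mutually compatible and in combination imply $\rho_n^{1/4}C_n<1$ \emph{uniformly in $n$}, which is exactly what keeps the Neumann series of Theorem~\ref{invertibility-n+1} convergent at every infrared scale. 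A secondary point to handle with care is the base case: the ``previous'' objects at $m=0$ refer to the unperturbed operator $H_0^{(0)}$, and one must make sure the conventions $\rho_{-1}=\rho_0/\gamma$, $E_g^{(-1)}=2$ render~\eqref{ind-hyp-energies} and~\eqref{ind-hyp-projections-2} both meaningful and valid there, and that the mild notational overloading of the symbol $\mathcal{A}_0$ between Section~\ref{sect-ind-basis} and~\eqref{ind-hyp-sets} causes no gap.
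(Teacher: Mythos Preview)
Your proposal is correct and matches the paper's approach: the paper does not give a separate proof of this theorem but presents it as the summary of the inductive argument, with Section~\ref{sect-ind-basis} furnishing the base case and Corollary~\ref{estimate-difference-projections}, Theorem~\ref{estimate-difference-energies}, and Theorem~\ref{resolvent-norm-2} establishing~\eqref{ind-hyp-projections-2},~\eqref{ind-hyp-energies}, and~\eqref{ind-hyp-resolvent-norm} for $m=n+1$, respectively. Your only addition is making the bookkeeping at $m=0$ explicit via the conventions $E_g^{(-1)}=2$, $\rho_{-1}=\rho_0/\gamma$; note, however, that the paper's base case does not literally verify~\eqref{ind-hyp-energies} in the form $9\sqrt{\pi}C_f g\rho_{-1}$ but rather the localization $E_g^{(0)}\in D(2,\tfrac{\sin(\vartheta)}{16}\rho_0)$, which is all that the subsequent geometric arguments actually use.
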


\subsubsection*{Discussion of the Bounds for the Parameters}
To conclude the section, the restrictions on the parameters needed for the inductive argument are collected to provide an overview of the assumptions necessary for considering the infrared limit. First, recall Section~\ref{sect-dilation-analyticity}, in particular Lemma~\ref {analytic-continuation}. Here, imposing a restriction on the parameter $\vartheta$ is necessary to ensure the analytic continuation of the Hamiltonian. The bound is given by
\begin{equation}\label{final-restriction-theta}	
\vartheta<\frac{\pi}{6}.
\end{equation}
Next, consider the infrared cutoff scale and recall that $\rho_n=\rho_0\gamma^n$. By definition, it is~${\rho_0,\gamma\in(0,1)}$. At the beginnig of the induction step, however, one needs to choose~${\gamma<\frac{1}{2}}$ to ensure that~$\widetilde{\mathcal{A}}_n\subset\mathcal{A}_n$. The main restrictions on the parameters stem from~\eqref{parameter-restrictions-1}, which allows to bound the projections~$P^{(n)}$ by $\frac{3}{2}$ in norm, and~\eqref{parameter-restrictions-2}, which sets up the relation between~$\rho_n$ and the constants~$C_n$ and, therefore, is crucial to estimating the norm of the difference of the projections. Note that the restrictions can easily be satisfied simultaneously by choosing the smaller value. This leads to
\begin{align}	
\rho_0&<\min\Bigl\{\Bigl(\frac{\sin(\vartheta)}{100}\Bigr)^4,\ \Bigl(\frac{1}{43}\Bigr)^4\Bigr\},\label{final-restriction-rho-0}\\
\gamma&<\min\Bigl\{\Bigl(\frac{1}{5}\Bigr)^4,\ \Bigl(\frac{1120}{\sin(\vartheta)}+\frac{600}{\vartheta d\cdot\sin(\vartheta)}\Bigr)^4\Bigr\}.\label{final-restriction-gamma}
\end{align}
In particular, it is $\rho_0=\rho_0(\vartheta)$ and $\gamma=\gamma(\vartheta)$, i.e., both parameters depend on $\vartheta$ and can only be chosen after the first parameter has been fixed. Finally, consider the bounds on the coupling constant. Here, the first bound~\eqref{g-restriction-1} originated form the induction basis and ensures the convergence of the first Neumann series expansion. Also, recall that another restriction on $g$ is included in~\eqref{parameter-restrictions-2}, which is necessary to establish the convergence of the Neumann series expansion in the induction step. In~\eqref{g-restriction-2} and~\eqref{g-restriction-3}, the coupling constant is adjusted again to compensate the constants occuring in connection with considering the Feshbach-Schur map. This is another crucial ingredient to establishing the bound on the difference of the projections, but is also used in obtaining a bound on~$\|(H_g^{(n+1)}-z)^{-1}\|$ and, therefore, necessary to close the argument. At this point, it is also required to consider~\eqref{g-restriction-4}, since it ensures that $\mathcal{A}_{n+1}\subset\widetilde{\mathcal{A}}_n$, which is required to complete the induction step. Note that, in total, five restrictions are imposed on $g$. They can, again, be satisfied simultaneously by considering the smallest value, yielding
\begin{align}
g&<\min\Biggl\{\frac{1}{6}\Bigl[\frac{8}{\sin(\vartheta)\sqrt{2\rho_0}}\Bigl(\Big\|\frac{G_{\theta}}{\sqrt{\omega}}\Big\|_{L^2}+\frac{1}{\sqrt{\rho_0}}\|G_{\theta}\|_{L^2}\Bigr)\Bigr]^{-1},\ \frac{1}{4}\sqrt{\frac{\sin(\vartheta)\gamma\vartheta d}{4100C_f^2\pi}},\nonumber\\
&\quad\quad\quad\quad\Bigr[C_{FS}\Bigl(\Big\|\frac{G_{\theta}}{\sqrt{\omega}}\Big\|_{L^2}+\|G_{\theta}\|_{L^2}\Bigr)\Bigr]^{-1},\ \Bigl[\frac{3}{2}\sqrt{\pi}C_{FS}C_f\Bigr]^{-1},\ \frac{\sin(\vartheta)\gamma}{72\sqrt{\pi}C_f}\Biggr\}.\label{final-restriction-g}
\end{align}
Note that $g=g(\vartheta,\rho_0,\gamma)$such that the bound for the coupling constant can be calculated from the other parameters considered. Further, observe that the restrictions on $g$ also include many of the characteristics of the model studied, which are reflected in the norm bounds of $f$, $G_{\theta}$ and $\frac{G_{\theta}}{\sqrt{\omega}}$, but also in the terms obtained from the geometric arguments, e.g., the constant $C_{FS}$.

\subsection{Proof of Theorem~\ref{main-result}: Existence of Resonances for the Non-Regularized Hamiltonian}\label{sect-main}
Recall that the aim of this chapter is to prove Theorem \ref{main-result}, which states that $H_g(\theta)$ posesses a resonance in the complex vicinity of $2$. In the following section, the tools previously collected are assembled to establish this result. Recall the sequence of projections given by

\begin{equation}\label{projections-on-h}		
P^{(n)}_{\infty}:=P^{(n)}\otimes P_{\Omega^{(n,\infty)}}
\end{equation}
where $P_{\Omega^{(n,\infty)}}$ is the projection to the Fock vacuum in $\cF[\L^2(B_n)]$ and note that they are all defined on $\cH$. The following result is a direct consequence of the inductive argument completed in the previous sections.

\begin{corollary}\label{convergences}		
Let $m\in\N_0$ and $E_g^{(m)},P^{(m)}$ be as constructed in Sections~\ref{sect-ind-basis}, \ref{sect-ind-hypothesis} and~\ref{sect-ind-step} satisfying the bounds~\eqref{ind-hyp-energies} and ~\eqref{ind-hyp-projections-2} respectively. Then the sequence of energies~$(E_g^{(n)})_n$ converges to a complex number $E_{\mathrm{res}}$ and the sequence of projections $(P^{(n)}_{\infty})$ defined in~\eqref{projections-on-h} converges to a rank-one projection which is denoted by $P_{\mathrm{res}}$.
\end{corollary}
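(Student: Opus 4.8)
The plan is to extract both convergences from the geometric decay of the increments that the induction has already established, and then to promote the operator-norm limit of the projections to a rank-one projection. For the energies, the telescoping identity $E_g^{(n)} = E_g^{(0)} + \sum_{m=1}^{n}(E_g^{(m)} - E_g^{(m-1)})$ combined with \eqref{ind-hyp-energies} and $\rho_{m-1} = \rho_0\gamma^{m-1}$ gives $\sum_{m\geq1}|E_g^{(m)} - E_g^{(m-1)}| \leq 9\sqrt{\pi}\,C_f\,g\,\rho_0(1-\gamma)^{-1} < \infty$ because $\gamma\in(0,1)$. Hence $(E_g^{(n)})_n$ is Cauchy in $\C$ and converges to some $E_{\mathrm{res}}\in\C$.

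For the projections, the first task is to make the comparison on the common Hilbert space $\cH$. Iterating \eqref{isomorphisms-tilde-to-n+1} one has $\cF[\L^2(B_n)]\cong\cF^{(n,n+1)}\otimes\cF[\L^2(B_{n+1})]$ and $P_{\Omega^{(n,\infty)}} = P_{\Omega^{(n,n+1)}}\otimes P_{\Omega^{(n+1,\infty)}}$; together with $\tP^{(n)} = P^{(n)}\otimes P_{\Omega^{(n,n+1)}}$ from \eqref{calc-invertibility-tilde-1} this yields $P^{(n)}_{\infty} = \tP^{(n)}\otimes P_{\Omega^{(n+1,\infty)}}$ and $P^{(n+1)}_{\infty} = P^{(n+1)}\otimes P_{\Omega^{(n+1,\infty)}}$, both acting on $\cH$. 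Consequently
\begin{equation*}
\bigl\|P^{(n+1)}_{\infty} - P^{(n)}_{\infty}\bigr\| \leq \bigl\|P^{(n+1)} - \tP^{(n)}\bigr\| \leq \frac{100}{\sin(\vartheta)}\,\rho_n^{1/4}
\end{equation*}
by Corollary~\ref{estimate-difference-projections}, and since $\rho_n^{1/4} = \rho_0^{1/4}\gamma^{n/4}$ with $\gamma<1$ the telescoping series converges, so $(P^{(n)}_{\infty})_n$ is Cauchy in $\mathcal{B}(\cH)$ and converges in operator norm to some $P_{\mathrm{res}}\in\mathcal{B}(\cH)$.

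Finally, $P_{\mathrm{res}}$ is a projection: each $P^{(n)}_{\infty}$ is idempotent and multiplication in $\mathcal{B}(\cH)$ is jointly norm-continuous, so $P_{\mathrm{res}}^2 = \lim_n (P^{(n)}_{\infty})^2 = P_{\mathrm{res}}$. Picking $N$ with $\|P_{\mathrm{res}} - P^{(N)}_{\infty}\| < 1$, the standard fact that two idempotents at operator-norm distance less than one are similar via an invertible element of $\mathcal{B}(\cH)$ shows $P_{\mathrm{res}}$ has the same (finite) rank as $P^{(N)}_{\infty}$, which is rank one because $P^{(N)}$ and $P_{\Omega^{(N,\infty)}}$ are; in particular $P_{\mathrm{res}}\neq0$. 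The one delicate point is the Hilbert-space bookkeeping in the second step: one must confirm that $P^{(n)}_{\infty}$ and $P^{(n+1)}_{\infty}$ really are being compared on the same space and that the difference collapses to $\|P^{(n+1)} - \tP^{(n)}\|$; the rest is a routine use of the geometric bounds from the induction together with elementary Banach-algebra facts.
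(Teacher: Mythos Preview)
Your proof is correct and follows essentially the same approach as the paper: both arguments use the geometric decay in \eqref{ind-hyp-energies} and \eqref{ind-hyp-projections-2} to obtain Cauchy sequences, and then invoke the fact that a projection at norm distance less than one from a rank-one projection is itself rank-one. Your version is in fact more careful than the paper's, which glosses over the Hilbert-space bookkeeping (your identification $P^{(n)}_{\infty} = \tP^{(n)}\otimes P_{\Omega^{(n+1,\infty)}}$) and does not explicitly verify that the norm limit $P_{\mathrm{res}}$ is idempotent.
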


\begin{proof}		
From the exponential bound given in \eqref{ind-hyp-energies} it follows directly that $E_g^{(n)}$ is a Cauchy sequence, thus implying the convergence due to the completeness of $\C$. The same argument may be applied to the sequence of projections using the exponential bound given in~\eqref{ind-hyp-projections-2}, as the inequality involves the part of the operators that does not consist of projections to a vacuum vector. With the convergence established, one can find an index $n_0\in\N_0$ such that $\|P^{(n_0)}_{\infty}-P_{\mathrm{res}}\|<1$. As $P^{(n)}$ and, therefore, $P^{(n)}_{\infty}$, too, is rank-one for all $n\in\N_0$, one can conclude that $P_{\mathrm{res}}$ is also rank-one.
\end{proof}

With these results established, the next step is to show that the limit $P_{\mathrm{res}}$ of the projections maps to an eigenspace of $H_g(\theta)$. Note that the corresponding eigenvalue is given by the limit of the energies $E_{\mathrm{res}}$ and that the proof only uses the closedness of the Hamiltonian on top of the convergences established in the previous corollary. The argument used is based on \cite[Theorem~3.6]{BachBallesterosKoenenbergMenrath2017}.
%
\begin{theorem}\label{limit-is-eigenvalue}		
Let $E_{\mathrm{res}}$ and $P_{\mathrm{res}}$ as established in Corollary~\ref{convergences} and suppose that the parameters satisfy~\eqref{final-restriction-theta},~\eqref{final-restriction-rho-0},~\eqref{final-restriction-gamma} as well as~\eqref{final-restriction-g}. Then the range of $P_{\mathrm{res}}$ is contained in the domain of $H_g(\theta)$ and $E_{\mathrm{res}}$ is an eigenvalue of $H_g(\theta)$ satisfying
\begin{equation}\label{formula-limit-is-eigenvalue}	
H_g(\theta)P_{\mathrm{res}}=E_{\mathrm{res}}P_{\mathrm{res}}.
\end{equation}
\end{theorem}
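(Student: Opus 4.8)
The plan is to realise each infrared‑regularised eigenvector as a vector in $\cH$, apply the full Hamiltonian $H_g(\theta)$ to it, and pass to the limit using the closedness of $H_g(\theta)$ from Lemma~\ref{properties}(iii) (which, as noted in Section~\ref{sect-ind-basis}, remains valid with the cutoff). Let $\psi_{\mathrm{res}}$ span the one‑dimensional range of $P_{\mathrm{res}}$ and put $\psi_n:=P^{(n)}_{\infty}\psi_{\mathrm{res}}$. By Corollary~\ref{convergences} we have $P^{(n)}_{\infty}\to P_{\mathrm{res}}$ in norm, hence $\psi_n\to P_{\mathrm{res}}\psi_{\mathrm{res}}=\psi_{\mathrm{res}}$ and $\sup_n\|\psi_n\|<\infty$. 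Since $P^{(n)}_{\infty}=P^{(n)}\otimes P_{\Omega^{(n,\infty)}}$, the vector $\psi_n$ is a scalar multiple of $\Psi^{(n)}\otimes\Omega^{(n,\infty)}$, where $\Psi^{(n)}$ is the eigenvector of $H_g^{(n)}$ for $E_g^{(n)}$ constructed in Section~\ref{sect-ind-step} and $\Omega^{(n,\infty)}$ is the Fock vacuum in $\cF[\L^2(B_n)]$.

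The first step is to identify the operator on $\cH$ having $\psi_n$ as an eigenvector. Using the isomorphism $\cF\cong\cF^{(n)}\otimes\cF[\L^2(B_n)]$ together with $\omega=\omega^{(n)}+\1_{B_n}\omega$, one checks that
\begin{equation*}
H_g^{(n)}\otimes\I+\I\otimes e^{-\theta}H_{ph}(\1_{B_n}\omega)=\widehat H_g^{(n)}:=H_0(\theta)+gW_{\theta}^{(n)}
\end{equation*}
as operators on $\cH$, where $W_{\theta}^{(n)}$ is the regularised coupling with form function $G_{\theta}^{(n)}=\1_{\R^3\setminus B_n}G_{\theta}$. Because $H_{ph}(\1_{B_n}\omega)\Omega^{(n,\infty)}=0$, the eigenvalue equation $H_g^{(n)}\Psi^{(n)}=E_g^{(n)}\Psi^{(n)}$ lifts to $\widehat H_g^{(n)}\psi_n=E_g^{(n)}\psi_n$. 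Moreover $W_{\theta}^{(n)}$ is an infinitesimal perturbation of $H_0(\theta)$ (Lemma~\ref{standard-estimate}), so $\widehat H_g^{(n)}$ and $H_g(\theta)$ are both closed on $\C^2\otimes\dom(H_{ph})$; in particular $\psi_n\in\dom(H_g(\theta))$ and
\begin{equation*}
H_g(\theta)\psi_n=\widehat H_g^{(n)}\psi_n+g\bigl(W_{\theta}-W_{\theta}^{(n)}\bigr)\psi_n=E_g^{(n)}\psi_n+g\,V_n\psi_n,
\end{equation*}
where $V_n:=W_{\theta}-W_{\theta}^{(n)}$ is a coupling operator whose form function $\1_{B_n}G_{\theta}$ is supported in $B_n$.

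The key point is that $\|V_n\psi_n\|\to0$. Here one exploits that $\psi_n$ carries the Fock vacuum on the $B_n$‑modes, so $(H_{ph}(\1_{B_n}\omega)+1)^{\frac12}\psi_n=\psi_n$; hence $V_n\psi_n=V_n\bigl(H_{ph}(\1_{B_n}\omega)+1\bigr)^{-\frac12}\psi_n$, and Lemma~\ref{standard-estimate} (applied with $\rho=1$ and the cutoff $\1_{B_n}$, using $|G_{\theta}|=|\overline{G_{\bar\theta}}|$ as in~\eqref{calc-feshbach-estimates-cutoff-4}) gives
\begin{equation*}
\|V_n\psi_n\|\leq2\Bigl(\Big\|\tfrac{\1_{B_n}G_{\theta}}{\sqrt{\omega}}\Big\|_{L^2}+\|\1_{B_n}G_{\theta}\|_{L^2}\Bigr)\|\psi_n\|.
\end{equation*}
Since the critical singularity $|k|^{-1/2}$ is square‑integrable at the origin in three dimensions, both $G_{\theta}/\sqrt{\omega}$ and $G_{\theta}$ lie in $L^2(\R^3)$, so dominated convergence forces the two $L^2$‑norms over $B_n$ to vanish as $n\to\infty$, while $\|\psi_n\|$ stays bounded. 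Combining this with $E_g^{(n)}\to E_{\mathrm{res}}$ yields $H_g(\theta)\psi_n\to E_{\mathrm{res}}\psi_{\mathrm{res}}$. As $\psi_n\to\psi_{\mathrm{res}}$ and $H_g(\theta)$ is closed, it follows that $\psi_{\mathrm{res}}\in\dom(H_g(\theta))$ and $H_g(\theta)\psi_{\mathrm{res}}=E_{\mathrm{res}}\psi_{\mathrm{res}}$; since $\Ran(P_{\mathrm{res}})=\C\psi_{\mathrm{res}}$ is one‑dimensional with $\psi_{\mathrm{res}}\neq0$, this is exactly~\eqref{formula-limit-is-eigenvalue}, and $E_{\mathrm{res}}$ is a genuine eigenvalue of $H_g(\theta)$. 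The argument parallels~\cite[Theorem~3.6]{BachBallesterosKoenenbergMenrath2017}.

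The main obstacle is conceptual rather than computational: one must organise the passage to the limit so that only $\|\psi_n\|$—and not $\|H_{ph}^{1/2}\psi_n\|$—needs to be controlled uniformly in $n$. This is achieved by the two structural facts used above, namely that reassembling $H_{ph}^{(n)}$ with its complement reproduces $H_{ph}$ exactly (so that $\widehat H_g^{(n)}$ differs from $H_g(\theta)$ only by the small, $B_n$‑supported piece $gV_n$), and that the vacuum factor $\Omega^{(n,\infty)}$ makes $(H_{ph}(\1_{B_n}\omega)+1)^{1/2}$ act as the identity on $\psi_n$, which is precisely the input needed to bound $\|V_n\psi_n\|$ via Lemma~\ref{standard-estimate} without any control on the photon energy of the approximating vectors.
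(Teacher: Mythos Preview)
Your proof is correct and follows essentially the same approach as the paper: lift the regularised eigenvectors to $\cH$ via the vacuum factor, write $H_g(\theta)\psi_n=E_g^{(n)}\psi_n+gV_n\psi_n$, show the remainder vanishes because $\psi_n$ is vacuum on the low-momentum modes, and conclude by closedness. The only cosmetic differences are that the paper applies everything to the fixed reference vector $\Psi^{(-1)}$ rather than to $\psi_{\mathrm{res}}$, and bounds the remainder directly via $a(G_{\bar\theta}^{(n,\infty)})\Omega^{(n,\infty)}=0$ and $\|a^*(G_\theta^{(n,\infty)})\Omega^{(n,\infty)}\|=\|G_\theta^{(n,\infty)}\|_{L^2}$, whereas you route through Lemma~\ref{standard-estimate}; both yield the same conclusion.
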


\begin{proof}		
First introduce the following functions
\begin{equation}\label{def-cutoff-function-counterparts}	
\omega^{(n,\infty)}=\I_{B_n}\omega,\quad G^{(n,\infty)}=\I_{B_n}G_{\theta},
\end{equation}
which are defined on the ball with radius $\rho_n$ centered at the origin. Similar to the operators including the infrared cutoff, one can introduce the operators
\begin{equation}\label{def-cutoff-operators-counterparts}	
H_{ph}^{(n,\infty)}:=H_{ph}\bigl(\omega^{(n,\infty)}\bigr),\quad W_{\theta}^{(n,\infty)}:=\sigma_1\otimes\Bigl(a\bigl(G_{\bar{\theta}}^{(n,\infty)}\bigr)+a^*\bigl(G_{\theta}^{(n,\infty)}\bigr)\Bigr),
\end{equation}
which act on $\cF[\L^2(B_n)]$. Since there are isomorphisms such that
\begin{equation}\label{isomorphisms-counterparts}	
\cF[\L^2(\R^3\setminus B_n)]\otimes\cF[\L^2(B_n)]\cong\cF[\L^2(\R^3)],\quad\cH^{(n)}\otimes\cF[\L^2(B_n)]\cong\cH,
\end{equation}
one may interpret them as the counterparts to the infrared-regularized operators. Recalling that~$P_n^{\infty}=P^{(n)}\otimes P_{\Omega^{(n,\infty)}}$, it follows that
\begin{equation}\label{calc-limit-is-eigenvalue-1}	
\bigl(\I_{\cF^{(n)}}\otimes H_{ph}^{(n,\infty)}\bigr)P_{\infty}^{(n)}=0,
\end{equation}
as $P_{\infty}^{(n)}$ maps the part taken from $\cF[\L^2(B_n)]$ of any vector $\psi\in\cH$ to the Fock vacuum and applying $H_{ph}^{(n,\infty)}$ maps the resulting multiple of the vacuum vector to zero. It also follows
\begin{equation}\label{calc-limit-is-eigenvalue-2}	
\big\|W_{\theta}^{(n,\infty)}P_{\infty}^{(n)}\big\|\leq\frac{3}{2}\ \big\|G_{\theta}^{(n,\infty)}\big\|_{\L^2},
\end{equation}
by using that $a(G_{\theta}^{(n,\infty)})\Omega^{(n,\infty)}=0$ and estimating the remaining term. As the support of~$G_{\theta}^{(n,\infty)}$ gets smaller when $n$ increases, the right side of the inequality tends to zero when considering $n\rightarrow\infty$. Recall that
\begin{equation}\label{calc-limit-is-eigenvalue-3}	
\lim_{n\rightarrow\infty}E_g^{(n)}=E_{\mathrm{res}},\quad \lim_{n\rightarrow\infty}P_{\infty}^{(n)}=P_{\mathrm{res}}
\end{equation}
by Corollary~\ref{convergences}, such that the results can be put together to yield
\begin{equation}\label{calc-limit-is-eigenvalue-4}	
\lim_{n\rightarrow\infty}E_g^{(n)}P_{\infty}^n=E_{\mathrm{res}}P_{\mathrm{res}}.
\end{equation}
Using~\eqref{calc-limit-is-eigenvalue-1} and~\eqref{calc-limit-is-eigenvalue-2}, on the other hand, it follows that
\begin{equation}\label{calc-limit-is-eigenvalue-5}	
\lim_{n\rightarrow\infty}H_g(\theta)P_{\infty}^{(n)}=E_{\mathrm{res}}P_{\mathrm{res}},
\end{equation}
as the components of $H_g(\theta)$ have been studied seperately above. Recall that the limit of the sequence $(P_{\infty}^{(n)})_n$ is a rank-one projection from Corollary~\ref{convergences} and note the vectors constructed through~${\Psi^{(n)}=P_{\infty}^{(n)}\Psi^{(-1)}}$ with $\Psi^{(-1)}$ taken from~\eqref{def-psi-0}. Since they are nonzero by construction, one finds that $\Psi:=P_{\mathrm{res}}\Psi^{(-1)}\neq0$. This allows to identify ${\lim_{n\rightarrow\infty}\Psi^{(n)}=\Psi}$ using the definition of the sequence and the convergence of the projections. This leads to
\begin{equation}\label{calc-limit-is-eigenvalue-6}	
E_{\mathrm{res}}\Psi=\lim_{n\rightarrow\infty}H_g(\theta)\Psi^{(n)},
\end{equation}
which is similar to~\eqref{calc-limit-is-eigenvalue-5} with the operators on both sides being applied to the vector~$\Psi^{(-1)}$. As $H_g(\theta)$ is a closed operator following the argument of Lemma \ref{properties} (iii), the graph set of $H_g(\theta)$ is closed. Therefore, as the graph contains $(\Psi^{(n)},H_g(\theta)\Psi^{(n)})$ for all~$n$, it has to contain the limit $(\Psi,H_g(\theta)\Psi)$ as well which implies that $\Psi$ is an element of~$\dom(H_g(\theta))$. This yields
\begin{equation}\label{calc-limit-is-eigenvalue-7}		
H_g(\theta)\Psi=E_{\mathrm{res}}\Psi.
\end{equation}
Using that $P_{\mathrm{res}}$ is rank-one, one can conclude that
\begin{equation}\label{calc-limit-is-eigenvalue-8}		
\mathrm{ran}(P_{\mathrm{res}})\subseteq \dom(H_g(\theta))
\end{equation}
which, by linearity, establishes the claim of the theorem.
\end{proof}
%
With the existence of the eigenvalue established and it being identified as the limit of the eigenvalues of the infrared-regularized Hamiltonians, one can use~\eqref{ind-hyp-energies} to localize~$E_{\mathrm{res}}$ in the complex vicinity of $E^{(-1)}=2$. Due to the estimate additionally including the coupling constant, it is, in fact,~$E_{\mathrm{res}}\in D(2,Cg)$ for a suitable constant $C>0$, thus establishing the claim of Theorem~\ref{main-result}.

\section*{Acknowledgments}
The main result of this paper was developed as part of a master thesis at TU Braunschweig and I am very grateful to Prof. Dr. Volker Bach for supervising my work on this topic. I would also like to thank his work group, in particular Lars Menrath, for the interesting discussions on this and related subjects.

\renewcommand*{\bibname}{References}
\addcontentsline{toc}{chapter}{References}
\bibliographystyle{plain}

\begin{thebibliography}{10}

\bibitem{AraiHirokawaHiroshima1999}
A.~Arai, M.~Hirokawa, and F.~Hiroshima.
\newblock On the absence of eigenvectors of {H}amiltonians in a class of
  massless quantum field models without infrared cutoff.
\newblock {\em J.~Funct.~Anal.}, 168(2):470--497, 1999.

\bibitem{BachBallesterosKoenenbergMenrath2017}
V.~Bach, M.~Ballesteros, M.~K{\"o}nenberg, and L.~Menrath.
\newblock Existence of ground state eigenvalues for the spin-boson model with
  critical infrared divergence and multiscale analysis.
\newblock {\em J.~Math.~Analysis and Appl.}, 453(2):773--797, 2017.

\bibitem{BachBallesterosPizzo2017}
V.~Bach, M.~Ballesteros, and A.~Pizzo.
\newblock Existence and construction of resonances for atoms coupled to the
  quantized radiation field.
\newblock {\em Adv.~Math.}, 314:540--572, 2017.

\bibitem{BachFroehlichSigal1998a}
V.~Bach, J.~Fr{\"{o}}hlich, and I.~M. Sigal.
\newblock Quantum electrodynamics of confined non-relativistic particles.
\newblock {\em Adv.~Math.}, 137:299--395, 1998.

\bibitem{BachFroehlichSigal1998b}
V.~Bach, J.~Fr{\"{o}}hlich, and I.~M. Sigal.
\newblock Renormalization group analysis of spectral problems in quantum field
  theory.
\newblock {\em Adv.~in Math.}, 137:205--298, 1998.

\bibitem{BachFroehlichSigal1999}
V.~Bach, J.~Fr{\"{o}}hlich, and I.~M. Sigal.
\newblock Spectral analysis for systems of atoms and molecules coupled to the
  quantized radiation field.
\newblock {\em Commun.~Math.~Phys.}, 207(2):249--290, 1999.

\bibitem{BallesterosDeckertHaenle2018}
M.~Ballesteros, D.-A. Deckert, and F.~H\"{a}nle.
\newblock Analyticity of resonances and eigenvalues and spectral properties of
  the massless spin-boson hamiltonian.
\newblock {\em ArXiv Preprint: arXiv:1801.04021}, 2018.

\bibitem{BermanMerkliSigal2008}
G.~Berman, M.~Merkli, and I.M. Sigal.
\newblock Resonance theory of decoherence and thermalization.
\newblock {\em Ann. Phys.}, 323(2):373--412, 2008.

\bibitem{GriesemerHasler2009}
M.~Griesemer and D.~Hasler.
\newblock Analytic perturbation theory and renormalization analysis of matter
  coupled to quantized radiation.
\newblock {\em Ann.\ Henri Poincar{\'e}}, 10:577--621, 2009.

\bibitem{HaslerHerbst2011}
D.~Hasler and I.~Herbst.
\newblock Ground states in the spin boson model.
\newblock {\em Ann.\ Henri Poincar{\'e}}, 12:621--677, 2011.

\bibitem{HirokawaHiroshimaLorinczi2014}
M.~Hirokawa, F.~Hiroshima, and J.~L{\"o}rinczi.
\newblock Spin-boson model through a {P}oisson-driven stochastic process.
\newblock {\em Math.~Z.}, 277:1165--1198, 2014.

\bibitem{Kato1980}
T.~Kato.
\newblock {\em Perturbation Theory of Linear Operators}.
\newblock Springer-Verlag, New York, 2nd edition, 1980.

\bibitem{Pizzo2003}
A.~Pizzo.
\newblock One-particle (improper) states in {N}elson's massless model.
\newblock {\em Ann.~H.~Poincar{\'e}}, 4(3):439--486, 2003.

\bibitem{ReedSimonII1975}
M.~Reed and B.~Simon.
\newblock {\em Methods of Modern Mathematical Physics: {I}{I}. {F}ourier
  Ana\-lysis and Self-Adjointness}.
\newblock Academic Press, San Diego, 1st edition, 1975.

\bibitem{Simon1973}
B.~Simon.
\newblock Resonances in n-body quantum systems with dilation analytic
  potentials and the foundations of time-dependent perturbation theory.
\newblock {\em Ann.~Math.}, 97:247--274, 1973.

\bibitem{ThirringIV1980}
W.~Thirring.
\newblock {\em Lehrbuch der Mathematischen Physik 4: Quantenmechanik gro{\ss}er
  Systeme}.
\newblock Springer-Verlag, Wien, New York, 1980.

\bibitem{Westrich2011}
M.~C. Westrich.
\newblock {\em Long-time dynamics of open quantum systems}.
\newblock PhD thesis, Technische Universit{\"a}t Braunschweig, 2011.

\end{thebibliography}

\end{document}